\newcommand\numberthis{\addtocounter{equation}{1}\tag{\theequation}}
\newtheorem{theorem}{Theorem}[section]
\newtheorem{assumption}{Assumption}[section]
\newtheorem{remark}{Remark}[section]
\numberwithin{equation}{section}
\newtheorem{lemma}[theorem]{Lemma}
\DeclareMathOperator{\rank}{\mathrm{rank}}
\DeclareMathOperator{\diag}{\mathrm{diag}}
\DeclareMathOperator{\trace}{trace}
\DeclareMathOperator{\blkdiag}{blkdiag} 
\def\lab{\left|}
\def\rab{\right|} 
\def\la {\left\langle}
\def\ra {\right\rangle} 
\def \lb{\left(}
\def \rb{\right)}
\newcommand{\matsnorm}[2]{\left\| #1\right\|_{{#2}}}
\newcommand{\fronorm}[1]{\ensuremath{\matsnorm{#1}{\footnotesize{\mathsf{F}}}}}
\newcommand{\opnorm}[1]{\ensuremath{\matsnorm{#1}{}}}
\newcommand{\twonorm}[1]{\ensuremath{\matsnorm{#1}{\footnotesize{2}}}}
\newcommand{\bfm}[1]{\bm{#1}}
\renewcommand{\Pr}[2][]{\mathbb{P}_{#1} \left\{ #2 \rule{0mm}{3mm}\right\}}
\newcommand{\E}[2][]{\mathbb{E}_{#1} \left\{ #2 \rule{0mm}{3mm}\right\}}
\def\va{\bfm a}   \def\mA{\bfm A}  
\def\vb{\bfm b}   \def\mB{\bfm B}  
     \def\C{\mathbb{C}}
   \def\mD{\bfm D}  
\def\ve{\bfm e}     
\def\vg{\bfm g}   \def\mG{\bfm G}  
\def\vh{\bfm h}     
   \def\mI{\bfm I}
   \def\mL{\bfm L}
   \def\mO{\bfm O}
   \def\mR{\bfm R}  \def\R{\mathbb{R}}
   \def\mU{\bfm U}  
\def\vv{\bfm v}   \def\mV{\bfm V}  
   \def\mW{\bfm W}  
\def\vx{\bfm x}   \def\mX{\bfm X}  
\def\vy{\bfm y}   \def\mY{\bfm Y}  
\def\vz{\bfm z}   \def\mZ{\bfm Z}
\def\calA{{\cal  A}}
\def\calD{{\cal  D}} 
\def\calE{{\cal  E}} 
\def\calG{{\cal  G}} 
\def\calH{{\cal  H}} 
\def\calI{{\cal  I}}
\def\calO{{\cal  O}} 
\def\calP{{\cal  P}}
\def\calS{{\cal  S}}
\newcommand{\bfsym}[1]{\bm{#1}}
             \def\bSigma{\bfsym \Sigma}
\def\mPhi {\bfsym{\Phi}}
\def\mPsi {\bfsym{\Psi}}
\def \calGT {\calG^\ast}
\def \tran {\mathsf{T}}
\def \tranH{\mathsf{H}}
\def \bzero{\bm 0}
\def \mDelta {{\bfsym {\Delta}}}
\def \tG {\widetilde{\calG}}
\def \tcalI {\widetilde{\calI}}
\def \mUpsilon {\bfsym{\Upsilon}}
\begin{document}
	
\title{Fast and Provable Simultaneous Blind Super-Resolution and Demixing for Point Source Signals: Scaled Gradient Descent without Regularization }
\author[1]{Jinchi Chen}
\affil[1]{School of Mathematics, East China University of Science and Technology, Shanghai, China.\vspace{.15cm}}
\date{\today}

\maketitle

\begin{abstract}
	We address the problem of simultaneously recovering a sequence of point source signals from observations limited to the low-frequency end of the spectrum of their summed convolution, where the point spread functions (PSFs) are unknown. By exploiting the low-dimensional structures of the signals and PSFs, we formulate this as a low-rank matrix demixing problem. To solve this, we develop a scaled gradient descent method without balancing regularization. We establish theoretical guarantees under mild conditions, demonstrating that our method, with spectral initialization, converges to the ground truth at a linear rate, independent of the condition number of the underlying data matrices. Numerical experiments indicate that our approach is competitive with existing convex methods in terms of both recovery accuracy and computational efficiency.
\end{abstract}

\section{Introduction}

\subsection{Problem Formulation}
The problem of blind super-resolution is about  estimating high-resolution information of a signal from its measurements at the lower end of spectrum when the point spread functions are unknown. This technique is widely used in various fields, including seismic data analysis \cite{margrave2011gabor}, nuclear magnetic resonance spectroscopy \cite{qu2015accelerated}, and 3D single-molecule microscopy \cite{quirin2012optimal}. In this paper, we consider a more general and difficult problem in which we have to extract multiple point sources from one low-frequency observation. Such problem arises in a range of applications such as integrated sensing and communications \cite{liu2020joint, chafii2023twelve}, over the air computation \cite{nazer2007computation, goldenbaum2014nomographic} and internet-of-things \cite{wunder20145gnow}.

More specifically, consider a series of point source signal of the form:
\begin{align*}
	x_k(t) = \sum_{p=1}^{r} d_{k,p} \delta(t-\tau_{k,p}), k=1,\cdots, K,
\end{align*}
where $K$ means the number of signals or users, $\delta(\cdot)$ represents the Dirac function, $r$ indicates the number of spikes of the $k$-th signal, and $\{\tau_{k,p}\}_{p= 1}^{r}\subset [0,1)$ and $\{d_{k,p}\}_{p =1}^{r}\subset \C$ represent the locations and amplitudes of the point source signals, respectively. 

Let $y(t)$ be the summation of point source signals convolved with unknown point spread functions, given by
\begin{align}
	\label{eq: receive signal}
	y(t) = \sum_{k = 1}^K\lb \sum_{p=1}^{r}d_{k,p} \delta(t-\tau_{k,p})\ast g_{k,p}(t) \rb = \sum_{k=1}^{K}   \sum_{p = 1}^{r} d_{k,p} g_{k, p}(t-\tau_{k,p}).
\end{align}
By taking the Fourier transform of~\eqref{eq: receive signal} 
\begin{align*}
	\hat{y}(f) &= \sum_{k=1}^{K}\sum_{p=1}^{r} d_{k, p} \int g_{k,p}(t-\tau_{k,p}) e^{-2\imath \pi f t}dt=\sum_{k=1}^{K}\sum_{p=1}^{r} d_{k, p} \hat{g}_{k,p}(f)e^{-2\imath \pi f \tau_{k,p}}dt,
\end{align*}
where $\hat{g}_{k,p}(f)$ is the Fourier transform of $g_{k,p}(t)$. Subsequently sampling, we obtain that
\begin{align}\label{eq: observation}
	\hat{y}[j] = \sum_{k=1}^{K} \sum_{p = 1}^{r} d_{k,p} e^{-\imath 2\pi (j-1)\tau_{k,p}} \hat{g}_{k,p}[j], \quad j=0,\cdots, n-1.
\end{align}
where $\vg_{k, p} := \begin{bmatrix}
	\hat{g}_{k,p}[0] &\cdots & \hat{g}_{k,p}[n-1] 
\end{bmatrix}^\tran\in\C^n$
are unknown. The goal is to jointly recover both $\{d_{k,p}, \tau_{k,p}\}$ and $\{\vg_{k,p}\}$ from~\eqref{eq: observation}. We refer to this issue as simultaneous blind super-resolution and demixing. Specifically, when there is only one user ($K=1$), the measurement model in \eqref{eq: observation} simplifies to the problem of blind super-resolution, as discussed in \cite{chi2016guaranteed,yang2016super,li2019atomic,chen2022vectorized}. Furthermore, if there are two types of signals ($K=2$), with one related to communication and the other associated with radar, the problem described in \eqref{eq: observation} can be considered a case of dual-blind deconvolution, as examined in \cite{vargas2023dual}.

Since the number of unknowns $\calO(Knr)$ in \eqref{eq: observation} is larger than the number of samples, simultaneous blind super-resolution and demixing is an ill-posed problem. To alleviate this issue, we follow the same route as that in  \cite{ahmed2013blind, chi2016guaranteed, yang2016super,chen2022vectorized}  and assume that the set of vectors $\{\vg_{k,p}\}_{p=1}^{r}$ associated with the $k$-th unknown point spread function belong to a known low dimensional subspace spanned by the columns of $\mB_k\in\C^{n\times s}$ with $s<n$, i.e., 
\begin{align*}
	\vg_{k,p} = \mB_{k}\vh_{k,p}, \quad p=1,\cdots, r,
\end{align*}
where $\vh_{k,p}\in\C^{s}$ denotes an unknown coefficient vector. Then the measurements can be rewritten as follows:
\begin{align}
	\label{eq sampling model}
	y[j] &=\sum_{k=1}^{K}\la \vb_{k,j}\ve_j^\tran, \sum_{p= 1}^{r} d_{k,p} \vh_{k,p}\va_{\tau_{k,p}}^\tran\ra:= \sum_{k=1}^{K} \la \vb_{k,j}\ve_j^\tran, \mX_{k,\natural} \ra,\quad j=0,\cdots, n-1,
\end{align}
where the inner product of two matrices $\mA,\mB$ is defined as $\la\mA,\mB\ra=\trace(\mA^\tranH\mB)$, $\ve_j\in\R^n$ represents the $j$-th standard basis vector in $\R^n$, $\vb_{k,j}\in\C^{s}$ denotes the $j$-th column of $\mB_k^\tranH$, and $\va_{\tau}\in\C^n$ stands as the steering vector, defined as
\begin{align*}
	\begin{bmatrix}
		1 & e^{-\imath2\pi \cdot 1\cdot \tau} &\cdots & e^{-\imath2\pi \cdot (n-1)\cdot \tau} 
	\end{bmatrix}^\tran\in\C^{n}.
\end{align*}
Let $\calA_k:\C^{s\times n}\rightarrow\C^n$ be a linear operator such that 
\begin{align}
	\label{A}
	\calA_k(\mX)[j] = \la \vb_{k,j}\ve_j^\tran, \mX\ra.
\end{align}
Then the measurement model \eqref{eq sampling model} can be rewritten as 
\begin{align}
	\label{eq measurement}
	\vy = \sum_{k=1}^{K}\calA_k(\mX_{k,\natural}),
\end{align}
where $\vy = \begin{bmatrix}
	y[0] &\cdots & y[n-1]
\end{bmatrix}^\tran$. 

Based on the aforementioned reformulation of simultaneous blind super-resolution and demixing under the subspace assumption, it is evident that the key is to recover ${\mX_{k,\natural}}$ from the linear measurement vector $\vy$. Once the data matrix $\mX_{k,\natural}$ is reconstructed, the locations  $\{\tau_{k,p}\}$ of the $k$-th signal can be determined using spatial smoothing MUSIC \cite{evans1981high,evans1982application,yang2019source,chen2022vectorized}. With all the location information available, $\{d_{k,p}\}$ and $\{\vh_{k,p}\}$ can be recovered by solving an overdetermined linear system. In the following sections, we focus on the problem of jointly recovering $\{\mX_{k, \natural}\}$ from their mixed linear measurements.

\subsection{Related Work and Contributions}
The convex optimization method for addressing the simultaneous blind super-resolution and demixing problem has been extensively studied in recent years. In \cite{vargas2023dual,jacome2024multi}, the authors build upon the approach initially developed in \cite{yang2016super} for blind super-resolution and introduce an atomic norm minimization (ANM) method for the simultaneous blind super-resolution and demixing problem specifically with $K=2$. More recently, \cite{daei2024timely} extends the ANM approach to handle arbitrary $K$ values, providing theoretical guarantees. 
Apart from ANM, another line of research exploits the low-rank structure of the target matrix associated with the signals. Specifically, leveraging the vectorized Hankel lift (VHL) technique introduced in \cite{chen2022vectorized}, \cite{monsalve2023beurling} proposes a nuclear norm minimization approach for dual-blind deconvolution with $K=2$, which is subsequently generalized to arbitrary $K$ in \cite{wang2024simultaneous}.
However, these convex approaches struggle with large-scale problems. This challenge motivates us to design efficient and provable algorithms for the simultaneous blind super-resolution and demixing problem.

When $K=1$, the simultaneous blind super-resolution and demixing problem simplifies to the case of blind super-resolution. Previous approaches, including atomic norm minimization \cite{chi2016guaranteed,li2019atomic} and nuclear norm minimization \cite{chen2022vectorized}, have been thoroughly studied for addressing blind super-resolution. Recently, several studies have focused on developing and analyzing efficient algorithms with provable guarantees for blind super-resolution \cite{mao2022blind,zhu2021blind}. Our research extends the concept of blind super-resolution (with $K=1$)  to a multi-user framework (with $K>1$). However, our approach differs significantly from existing methods. The PGD-VHL method proposed in \cite{mao2022blind} relies on additional projections to maintain incoherence properties and employs explicit regularization to balance the norms of two low-rank factors. In contrast, our method eliminates the need for such projections or balancing terms. 
Concurrently, a scaled gradient descent method for blind super-resolution has been proposed with theoretical guarantees (Jinsheng Li, Wei Cui, and Xu Zhang, personal communication). In contrast, our study addresses a more challenging problem, focusing on the simultaneous blind super-resolution and demixing problem. The key to our approach lies in constructing a suitable basin of attraction and understanding the properties of the linear sensing operator when constrained to a block-diagonal subspace. Consequently, our findings are unique and not covered by existing research.

This work is also related to the problem of low-rank matrix demixing, which involves the joint recovery of multiple low rank matrices from a single measurement vector. Specifically, McCoy and Tropp \cite{mccoy2013achievable} study the compressive demixing model represented by $\vy = \calA(\sum_{k=1}^{K}\mX_k)$ and propose a nuclear norm minimization approach to recover $\{\mX_k\}$. Strohmer and Wei \cite{strohmer2019painless} introduce a provable non-convex algorithm for the demixing problem based on the Gaussian measurement model. Additionally, both convex \cite{jung2017blind, ling2017blind} and non-convex \cite{ling2019regularized,dong2018nonconvex} optimization methods have been examined for rank-one matrix demixing from rank-one observations. However, due to the distinct structures of the target matrices and sensing operators in our work, the theoretical guarantees from these studies cannot be directly extended to our model.

In recent years, substantial advancements have been made in the development of scaled or preconditioned algorithms for low-rank matrix recovery. Essentially, these scaled algorithms are designed to function similarly to a (Riemannian) gradient descent method, incorporating appropriately chosen preconditioning techniques. Empirical and theoretical studies have shown that such scaled methods are more robust to ill-conditioned low-rank matrix recovery problems while maintaining a comparable computational cost to gradient descent methods. Furthermore, scaled gradient descent benefits from an implicit regularization property, eliminating the need to explicitly balance the norm of low-rank factors, as required in the traditional literature (see \cite{tu2016low,zheng2016convergence,chi2019nonconvex} and references therein). The empirical performance of scaled (alternating) gradient descent for the low-rank matrix completion problem has been illustrated in \cite{mishra2012riemannian, mishra2016riemannian, tanner2016low}. Recently, theoretical convergence guarantees have also been established. Tong et al. \cite{tong2021accelerating} demonstrated that the scaled gradient descent method for the low-rank matrix sensing problem achieves a linear convergence rate given spectral initialization. Zhang et al. \cite{zhang2023preconditioned} extended the results of \cite{tong2021accelerating} to the over-parameterized case. The convergence guarantee for scaled gradient descent with small random initialization is provided by Xu et al. \cite{xu2023power}, while Cheng and Zhao further generalized the findings of Zhang et al. to the asymmetric case. Despite some general similarities, there is minimal overlap between the aforementioned studies and our framework.

The main contributions of this work are summarized as follows:
\begin{itemize}
	\item We address the simultaneous blind super-resolution and demixing problem by formulating it as a low-rank matrix demixing problem. Leveraging low-rank structures, we propose an implementable algorithm named the Scaled Gradient Descent method (Scaled-GD), which operates without explicit balancing regularization. To the best of our knowledge, this is the first provable and computationally efficient algorithm for simultaneous blind super-resolution and demixing.
	\item We establish the convergence rate of Scaled-GD, demonstrating that it linearly converges to the ground truth from spectral initialization, provided there is no noise and the sample complexity is of the order $\calO(K^2s^2r^2\kappa^2\mu_1\mu_0\log^2(sn))$. Additionally, the convergence rate of Scaled-GD is independent of the condition number of the ground truth, highlighting the efficiency of the proposed method.
	\item We empirically validate the recovery performance of Scaled-GD. Numerical experiments show that Scaled-GD is competitive with convex recovery methods such as ANM and VHL in terms of recovery ability, but is much more efficient.
\end{itemize}

\subsection{Organization and Notation}
This paper is organized as follows. Section \ref{sec methodology} presents the recovery procedure and the Scaled-GD algorithm. The theoretical guarantees for the proposed method are provided in Section \ref{sec main results}. Numerical results are discussed in Section \ref{sec numerical}. The proofs detailed of our main results are presented in Section \ref{sec proof main results}. We conclude this work in Section \ref{sec conclusion}. Appendix includes the proofs of immediate results.

Throughout this work, vector, matrix and operator are represented by bold lowercase letters, bold uppercase letters and calligraphic letters, respectively. Given a series of matrices $\{\mL_k\}_{k=1,\cdots, K}$, let $\mL=\blkdiag(\mL_1,\cdots, \mL_K)$ be the block diagonal matrix, where $\blkdiag$ is the MATLAB function to construct block diagonal matrix. Given block diagonal matrices $\mV$ and $\mW$, the scaled norm with respect to $\mL,\mR$ is defined as follows:
\begin{align}
	\label{def ds}
	d_{\calS}(\mV,\mW) = \sqrt{\fronorm{\mV(\mR^\tranH\mR)^{\frac{1}{2}}}^2 + \fronorm{\mW(\mL^\tranH\mL)^{\frac{1}{2}}}^2},
\end{align}
and its dual norm is given by
\begin{align}
	\label{def dual ds}
	d_{\calS^\ast}(\mV,\mW) = \sqrt{\fronorm{\mV(\mR^\tranH\mR)^{-\frac{1}{2}}}^2 + \fronorm{\mW(\mL^\tranH\mL)^{-\frac{1}{2}}}^2}.
\end{align}

For any matrix $\mX$, we  let $\opnorm{\mX}, \fronorm{\mX}$ denote spectral norm and  Frobenius norm of $\mX$. Let $\sigma_i(\mZ)$ is the $i$-th singular value of $\mZ_{k,\natural}$.  We let $f(n)=\calO(g(n))$ to mean $|f(n)|\leq C|g(n)|$ for some constant $C$ when $n$ is large enough. The term "with high probability" indicates that the event occurs with probability at least $1-(sn)^{-\gamma}$, where $\gamma$ is an absolute constant whose value may vary from line to line.

\section{Algorithm}
\label{sec methodology}
Let $\vx_{k,i}\in\C^{s}$ be the $i$-th column of $\mX_{k,\natural}$ such that 
\begin{align*}
	\mX_{k,\natural} = \begin{bmatrix}
		\vx_{k,1} &\cdots & \vx_{k,n}
	\end{bmatrix}\in\C^{s\times n},
\end{align*}
and $\calH$ be the vectorized Hankel lift operator which maps a $s\times n$ matrix into an $sn_1\times n_2$ matrix, 
\begin{align*}
	\calH(\mX_{k,\natural}) = \begin{bmatrix}
		\vx_{k,1} & \vx_{k,2} &\cdots & \vx_{k,n_2}\\
		\vx_{k,2} &\vx_{k,3} &\cdots & \vx_{k,n_2+1}\\
		\vdots      & \vdots     & \ddots & \vdots\\
		\vx_{k,n_1} &\vx_{k,n_1+1} &\cdots & \vx_{k,n}\\
	\end{bmatrix},
\end{align*}
where $n_1 + n_2 = n+1$. It has been shown that $\mZ_{k,\natural}:=\calH(\mX_{k,\natural})$ is a rank-$r$ matrix \cite{chen2022vectorized}. Let $\calG=\calH\calD^{-1}:\C^{s\times n}\rightarrow \C^{sn_\times n_2}$ be the weighted vectorized Hankel lift operator, where the operator $\calD:\C^{s\times n}\rightarrow \C^{s\times n}$ is defined as 
\begin{align*}
	\calD(\mX_{k,\natural}) = \begin{bmatrix}
		\sqrt{w_1}\vx_{k,1} &\cdots & \sqrt{w_{n}}\vx_{k,n}
	\end{bmatrix},
\end{align*}
and the scalar $w_i$ is the number of entries
on the $i$-th antidiagonal of an $n_1\times n_2$ matrix. It can be seen that the target matrix $\mZ_{k,\natural}$ obeys that $(\calI - \calG\calG^\ast)(\mZ_{k,\natural})=\bzero$ and the measurements in \eqref{eq measurement} can be rewritten as 
\begin{align*}
	\vy = \sum_{k=1}^{K}\calA_k\calD^{-1}\calG^\ast (\mZ_{k,\natural}).
\end{align*}
By exploring the low rank structure of $\mZ_{k,\natural}$, it is convenient to consider the following recovery procedure
\begin{align}
	\label{prob 1}
	\min_{\{\mZ_k\}_{k=1}^K}\frac{1}{2} \twonorm{\mD\vy - \sum_{k=1}^{K}\calA_k\calG^\ast(\mZ_k) }^2\quad \text{s.t.}\quad \rank(\mZ_k) = r, ~(\calI-\calG\calGT)(\mZ_k)=\bzero \text{ for }k=1,\cdots, K,
\end{align}
where $\mD = \diag(\sqrt{w_1}, \cdots, \sqrt{w_n})$. Furthermore, to encode the rank-$r$ constraint in \eqref{prob 1}, we parameterize the vectorized Hankel matrix $\mZ_k\in\C^{sn_1\times n_2}$ as a product of two matrices $\mL_k\in\C^{sn_1\times r}$ and $\mR_k\in\C^{n_2\times r}$.  Therefore, we consider the following penalized version of \eqref{prob 1} for estimating the factorized matrices:
\begin{align}
	\label{optimization}
	\min_{\{\mL_k, \mR_k\}_{k=1}^K} \left\{f\lb \{\mL_k, \mR_k\}_{k=1}^K\rb:=\frac{1}{2}\twonorm{\mD\vy - \sum_{k=1}^{K} \calA_k\calGT(\mL_k\mR_k^\tranH) }^2 + \frac{1}{2} \sum_{k=1}^{K}\fronorm{\lb \calI - \calG\calGT\rb(\mL_k\mR_k^\tranH)}^2\right\}.
\end{align}
Although this represents a highly non-convex optimization problem over $\{\mL_k,\mR_k\}_{k=1,\cdots, K}$, recent breakthroughs have demonstrated that simple gradient descent, when combined with suitable initialization, can provably converge to the optimal minimum under mild statistical assumptions.

However, the gradient descent method faces two major challenges in the application of simultaneous blind super-resolution and demixing. Firstly, the data matrices in application may be ill-conditioned, which can significantly degrade the convergence rate of the gradient descent method. Secondly, it is common to introduce explicit regularization to balance the norms of 
$\mL_k$ and $\mR_k$ when applying the gradient descent method. While this regularization is often necessary for theoretical analysis, it is not required if a balanced initialization is provided.

Inspired by recent development in \cite{tong2021accelerating,zhang2023preconditioned}, we propose a scaled gradient descent method for the simultaneous blind super-resolution and demixing problem, which is summarized in Algorithm \ref{alg: SGD}. 
More specifically, the initialization is obtained by the spectral method from Line 1 to Line 5. The iterative update of $\{\mL_k, \mR_k\}$ are provided in Line 7 and 8, where the gradients are computed with respect to Wirtinger calculus given by
\begin{align*}
	\nabla f(\mL_k) &= \lb \calG\calA_k^\ast \lb \sum_{\ell=1}^{K} \calA_\ell\calGT\lb \mL_\ell\mR_\ell^\tranH - \mZ_{\ell, \natural}\rb \rb \rb \mR_k + \lb \lb \calI - \calG\calGT\rb(\mL_k\mR_k^\tranH)\rb\mR_k,\\
	\nabla f(\mR_k) &= \lb \calG\calA_k^\ast \lb \sum_{\ell=1}^{K} \calA_\ell\calGT\lb \mL_\ell\mR_\ell^\tranH - \mZ_{\ell, \natural}\rb \rb \rb^\tranH \mL_k + \lb \lb \calI - \calG\calGT\rb(\mL_k\mR_k^\tranH)\rb^\tranH\mL_k.
\end{align*}
The computational complexity of $\nabla f(\mL_k)$ (and similar for $\nabla f(\mR_k)$) is about $\calO(srn\log(n))$ \cite{mao2022blind}. Moreover, computing $\mR_{k}(\mR_k^\tranH\mR_k)^{-1}$ costs $\calO(r^2n)$. Therefore, the total computational cost in each iteration is about $\calO(Ksrn\log n+Kr^2n)$.
\begin{algorithm}[ht!]
	\caption{Scaled--GD for Simultaneous Blind Super-Resolution and Demixing}
	\label{alg: SGD}
	\tcp{Initialization}
	\For{$k=1,\cdots, K$}{
		\tcp{Fully parallel}
		$\mZ_{k,0} = \calP_r\calH\calA^\ast_k(\vy)$\;
		Let $\mZ_{k,0}=\mU_{k,0}\bSigma_{k,0}\mV_{k,0}^\tranH$\;
		$\mL_{k,0} \leftarrow \mU_{k,0}\bSigma_{k,0}^{\frac{1}{2}},~ \mR_{k,0} \leftarrow \mV_{k,0}\bSigma_{k,0}^{\frac{1}{2}}$\;
	}
	
	\For{$t=0,2,\cdots, T-1$}{
		\tcp{Fully parallel}
		\For{$k=1,\cdots, K$}{
			$\mL_{k, t+1} =\mL_{k,t} - \eta_t \nabla f(\mL_{k,t})(\mR_{k,t}^\tranH\mR_{k,t})^{-1}$\;
			$\mR_{k, t+1} =\mR_{k,t} - \eta_t \nabla f(\mR_{k,t})(\mL_{k,t}^\tranH\mL_{k,t})^{-1}$\;
		}
	}
	
\end{algorithm}

As aforemetioned, given the recovered data matrix $\mX_{k, \natural}$, the locations $\{\tau_{k,p}\}_{p=1,\cdots, r}$ can be extracted by spatial smoothing MUSIC algorithm \cite{evans1981high,evans1982application,yang2019source,chen2022vectorized}. Additionally, given these locations, the coefficients $\{d_{k,p}, \vh_{k,p}\}_{k=1,\cdots, K; p=1,\cdots, r}$ satisfies the following equation:
\begin{align*}
	y[j] &=\sum_{k=1}^{K}\sum_{\ell = 1}^{r_k}  \la \vb_{k,j}\ve_j^\tran, d_{k,\ell} \vh_{k,\ell}\va_{\tau_{k,\ell}}^\tran\ra
	=\begin{bmatrix}
		(\va_{\tau_{1,1}}^\tran \ve_j) \vb_{1,1}^\tranH&\cdots &  (\va_{\tau_{k,\ell}}^\tran \ve_j) \vb_{k,\ell}^\tranH&\cdots & (\va_{\tau_{K,r}}^\tran \ve_j) \vb_{K,r}^\tranH
	\end{bmatrix}\begin{bmatrix}
		d_{1,1} \vh_{1,1}\\
		\vdots\\
		d_{k,\ell} \vh_{k,\ell}\\
		\vdots\\
		d_{K,r} \vh_{K,r}\\
	\end{bmatrix}.
\end{align*}
Since the number of measurements $(\calO(K^2s^2r^2\kappa^2\mu_0\mu_1\log^2(sn)))$ is larger than $Ksr$, one can estimate $\{d_{k,p} h_{k,p}\}$ by solving an over-determined linear system.

\section{Main Results}
\label{sec main results}
In this section, we present a theoretical guarantee for the proposed approach. To achieve this, we introduce two standard assumptions.
\begin{assumption}[$\mu_0$-incoherence]
	\label{assumption 1}
	Suppose that the columns $\{\vb_{k,i}\}$ of $\mB_k^\tranH$ for $k=1,\cdots, K$, are i.i.d sampled from the a distribution $F$, which satisfies the following conditions for $\vb\sim F$:
	\begin{align*}
		\E{\vb} &= \bzero,\\ 
		\E{\vb\vb^\tranH} &=\mI,\\
		\max_{1\leq p\leq s}|\vb[p]| &\leq \sqrt{\mu_0}. 
	\end{align*}
\end{assumption}
\begin{remark}
This assumption was first introduced in RIPless compressed sensing \cite{candes2011probabilistic} and has also been adopted in blind super-resolution \cite{chi2016guaranteed,yang2016super,li2019atomic,chen2022vectorized} and simultaneous blind super-resolution and demixing \cite{vargas2023dual,jacome2024multi,wang2024simultaneous,daei2024timely}. Assumption \ref{assumption 1} holds for many random ensembles. For instance, when the components of $\vb$ are random variables taking values $-1$ or $1$ with equal probability, it implies that $\mu_0=1$. Additionally, this assumption is satisfied when the entries of $\vb$ are independently and identically sampled from the uniform distribution over $[-\sqrt{3}, \sqrt{3}]$. 
\end{remark}

\begin{assumption}[$\mu_1$-incoherence]
	\label{assumption 0}
	Let $\mZ_{k,\natural} = \mU_{k,\natural} \bSigma_{k,\natural}\mV_{k,\natural}^\tranH$ be the singular value decomposition of $\mZ_{k,\natural}$, where $\mU_{k,\natural}\in\C^{sn_1\times r}, \mV_{k,\natural}\in\C^{n_2\times r}$. Let $\mU_{k,\natural, j} = \mU_{k,\natural}[js:(j+1)s-1,:]\in\C^{s\times r}$ be the $j$-th block of $\mU_{k,\natural}$ for $j=0,\cdots, n_1-1$. Suppose that for all $k=1,\cdots, K$, the matrix $\mZ_{k,\natural} $ obeys the following conditions:
	\begin{align*}
		\max_{0\leq j\leq n_1-1} \fronorm{\mU_{k,\natural, j} }^2 \leq \frac{\mu_1r}{n}\text{ and } \max_{0\leq \ell\leq n_2-1} \twonorm{\ve_\ell^\tran\mV_{k,\natural}}^2\leq \frac{\mu_1r}{n}
	\end{align*}
	for some positive constant $\mu_1$.
\end{assumption}
\begin{remark}
Assumption \ref{assumption 0} is the same as the incoherence property made in low rank matrix completion \cite{candes2009} and has been used in blind super-resolution \cite{chen2022vectorized,mao2022blind}. This assumption is satisfied when the minimum wrap-up distance between the locations of point sources is larger than $\frac{2}{n}$ \cite{moitra2015super}.
\end{remark}

Now we present our main result, whose proof is deferred to Section \ref{proof iteration}.
\begin{theorem}
	\label{theorem: iteration}
	Suppose that Assumption \ref{assumption 0} and Assumption \ref{assumption 1} hold. Let $\eta_t\leq\frac{1}{20}$. If the number of measurements satisfies that $n\geq C_{\gamma} K^2s^2r^2 \kappa^2\mu_0 \mu_1 \log^2(sn)$, then with probability at least $1-(sn)^{-\gamma}$, the sequence $\{\mL_{k,t}, \mR_{k,t}\}_{k=1,\cdots, K}$ returned by Algorithm \ref{alg: SGD} obeys that
	\begin{align}
		\label{variable error}
		\sum_{k=1}^{K}\fronorm{\mL_{k,t}\mR_{k,t}^\tranH  - \mZ_{k,\natural}}^2 &\leq \lb 1- \frac{\eta_t}{90}\rb^t \cdot \frac{\delta^2\sigma_0^2}{K},
	\end{align}
where $\sigma_0= \sqrt{\sum_{k=1}^{K}\sigma_r^2(\mZ_{k, \natural}) }$ and $\kappa =\frac{\max_k \sigma_1(\mZ_{k, \natural})}{\min_k \sigma_r(\mZ_{k, \natural})}$. 

\end{theorem}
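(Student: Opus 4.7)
\bigskip

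\noindent\textbf{Proof Proposal.} My plan is to prove this by induction on the iteration index $t$, measuring progress in the scaled distance $d_{\calS}(\cdot,\cdot)$ defined in \eqref{def ds}, with the dominant step being a one-step contraction lemma of the form $d_{\calS}(\mDelta_{L,t+1},\mDelta_{R,t+1})^2 \le (1-\eta_t/90)\, d_{\calS}(\mDelta_{L,t},\mDelta_{R,t})^2$ inside a sufficiently small basin of attraction. For each $k$ I would align the factors via the optimal invertible transform $\mQ_{k,t}\in GL(r)$ and define $\mDelta_{L_k,t} = \mL_{k,t}\mQ_{k,t} - \mL_{k,\natural}$, $\mDelta_{R_k,t} = \mR_{k,t}\mQ_{k,t}^{-\tranH} - \mR_{k,\natural}$, letting $\mL_{\natural},\mR_{\natural}$ be the balanced factors of $\mZ_{k,\natural}$ from its SVD. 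The aggregate scaled error would be $E_t^2 = \sum_{k=1}^K \bigl(\fronorm{\mDelta_{L_k,t}(\mR_{k,t}^\tranH\mR_{k,t})^{1/2}}^2 + \fronorm{\mDelta_{R_k,t}(\mL_{k,t}^\tranH\mL_{k,t})^{1/2}}^2\bigr)$, and the induction hypothesis would carry both $E_t \le \delta\sigma_r^\star/\sqrt{K}$ (basin membership) and the decay in \eqref{variable error}.

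\medskip

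\noindent\textbf{Base case (spectral initialization).} The task is to certify that $\calP_r\calH\calA_k^\ast(\vy)$ is close enough to $\mZ_{k,\natural}$ so that its best rank-$r$ factors place $(\mL_{k,0},\mR_{k,0})$ in the basin. Plugging \eqref{eq measurement} into Line 2 of Algorithm \ref{alg: SGD} yields $\calH\calA_k^\ast(\vy) = \calH\calA_k^\ast\calA_k\calGT(\mZ_{k,\natural}) + \sum_{\ell\ne k}\calH\calA_k^\ast\calA_\ell\calGT(\mZ_{\ell,\natural})$. I would bound the self-term error $\opnorm{\calH\calA_k^\ast\calA_k\calGT(\mZ_{k,\natural}) - \mZ_{k,\natural}}$ via a matrix Bernstein argument analogous to the single-user analysis of \cite{chen2022vectorized,mao2022blind}, and bound the cross-interference $\opnorm{\calH\calA_k^\ast\calA_\ell\calGT(\mZ_{\ell,\natural})}$ by exploiting independence of the design columns $\{\vb_{k,j}\}$ across $k$ under Assumption \ref{assumption 1}. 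Summing over $\ell$ contributes the $K^2$ factor to the sample complexity. Once $\opnorm{\mZ_{k,0}-\mZ_{k,\natural}}\lesssim \sigma_r(\mZ_{k,\natural})/\sqrt{K}$ uniformly in $k$, Weyl's and Wedin's inequalities deliver the required initial scaled error, and a standard perturbation bound shows that $\mR_{k,0}^\tranH\mR_{k,0},\mL_{k,0}^\tranH\mL_{k,0}$ inherit well-conditioning from $\bSigma_{k,\natural}$.

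\medskip

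\noindent\textbf{Inductive step (local contraction).} Within the basin, I would expand the Scaled-GD update using the Wirtinger gradients displayed in the excerpt and write, up to alignment freedom,
\begin{align*}
\mDelta_{L_k,t+1}(\mR_{k,t}^\tranH\mR_{k,t})^{1/2} = \mDelta_{L_k,t}(\mR_{k,t}^\tranH\mR_{k,t})^{1/2} - \eta_t\,\nabla f(\mL_{k,t})(\mR_{k,t}^\tranH\mR_{k,t})^{-1/2},
\end{align*}
and symmetrically for $\mR_{k,t+1}$. Squaring and summing over $k$, the cross terms produce the principal negative contribution $-2\eta_t\, \Real\sum_k\la \nabla f(\mL_{k,t}),\mDelta_{L_k,t}\ra + (\text{conj.})$, which I would lower bound using a block-restricted isometry bound on $\calM := \sum_{k}\calA_k\calGT$ acting on the block-diagonal tangent space at $\blkdiag(\mL_{k,\natural}\mR_{k,\natural}^\tranH)$; this is where Assumption \ref{assumption 1}, Assumption \ref{assumption 0}, and the bookkeeping with $d_{\calS}$ and its dual $d_{\calS^\ast}$ all enter. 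The quadratic-in-$\eta_t$ term is controlled using that $\nabla f$ has bounded scaled norm, that the off-Hankel residual $(\calI-\calGT\calG)(\mL_k\mR_k^\tranH)$ is penalized in $f$, and the standard scaled-GD identity $\fronorm{\mL\mR^\tranH-\mL_\natural\mR_\natural^\tranH}\le \tfrac{3}{2} d_{\calS}(\mDelta_L,\mDelta_R)$ inside the basin. Choosing $\eta_t\le 1/20$ balances these contributions and produces the $1-\eta_t/90$ factor. Finally, the Frobenius bound \eqref{variable error} follows from the scaled distance bound using the equivalence between $\fronorm{\mL_k\mR_k^\tranH-\mZ_{k,\natural}}$ and $d_{\calS}$ within the basin.

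\medskip

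\noindent\textbf{Main obstacle.} The principal technical difficulty is the interaction between the $K$ users: I need a uniform block-restricted isometry property for the demixing operator $\calM=\sum_k\calA_k\calGT$ when restricted to the block-diagonal tangent subspace, controlling both diagonal blocks (self-terms, analogous to single-user blind super-resolution) and off-diagonal blocks (cross-interference). Achieving this uniformly over the basin, while simultaneously maintaining the conditioning of $(\mR_{k,t}^\tranH\mR_{k,t})^{-1}$ and $(\mL_{k,t}^\tranH\mL_{k,t})^{-1}$ and handling the non-Hankel residual $(\calI-\calG\calGT)(\mL_k\mR_k^\tranH)$ that the preconditioned gradient excites, is the delicate part and is what drives the $K^2 s^2 r^2 \kappa^2 \mu_0\mu_1\log^2(sn)$ sample complexity.
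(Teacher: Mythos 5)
Your proposal is correct in spirit and would, modulo routine bookkeeping, yield a valid proof, but it takes a genuinely different route from the paper. The paper tracks the \emph{product error} $g(\mL_t,\mR_t)=\tfrac{1}{2}\fronorm{\mL_t\mR_t^\tranH-\mZ_\natural}^2$ directly and proves one-step contraction via a descent-lemma plus Polyak--\L ojasiewicz argument in the scaled geometry (Lemmas \ref{smooth of g}, \ref{bound of nabla g}, \ref{upper bound of h}, \ref{lemma bound L}, \ref{lemma linear convergence}), following the template of \cite{zhang2023preconditioned}. You instead propose to track the \emph{scaled factor error} $E_t = d_\calS(\mDelta_{L,t},\mDelta_{R,t})$ after alignment, expand the update, and control the linear and quadratic terms separately --- which is the recipe of \cite{tong2021accelerating}. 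Both routes go through, but the paper's choice buys two concrete advantages here. First, the quantity appearing in \eqref{variable error} is exactly $2g$, so the contraction constant $1-\eta_t/90$ from Lemma \ref{lemma linear convergence} is inherited verbatim; in your route the factor-error contraction must be converted to a product-error bound via the two-sided equivalence of $d_\calS$ and $\fronorm{\mL\mR^\tranH-\mZ_\natural}$ in the basin, which loses a constant on each side and does not reproduce the stated geometric rate exactly. Second, the descent-lemma route sidesteps a well-known technical wrinkle in direct-expansion analyses of scaled GD that your sketch elides: after the update, the natural weight on $\mDelta_{L,t+1}$ is $(\mR_{t+1}^\tranH\mR_{t+1})^{1/2}$, not $(\mR_t^\tranH\mR_t)^{1/2}$, and the optimal alignment changes from $\mQ_{k,t}$ to $\mQ_{k,t+1}$; one must argue explicitly that the outdated weight and alignment still give a valid upper bound and that the Gram matrices do not drift. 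The paper never has to confront this because its tracked quantity is alignment- and weight-free. Finally, a minor point worth flagging: the paper aligns factors with \emph{orthogonal} matrices (Lemma \ref{lemma a}, Lemma \ref{lemma 24 cong}), not elements of $GL(r)$ as you suggest, and the initialization uses $\mD\vy$ (rather than $\vy$) in the decomposition $\calG\calA_\ell^\ast(\mD\vy)=\calG\calA_\ell^\ast\calA_\ell\calGT(\mZ_{\ell,\natural})+\sum_{k\ne\ell}\calG\calA_\ell^\ast\calA_k\calGT(\mZ_{k,\natural})$, so the weight $\mD$ should appear in your spectral-initialization expansion as well.
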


\begin{remark}
Theorem \ref{theorem: iteration} implies that the sample complexity depends quadratically on $s,r$ and $K$. The $r^2$ dependence arises from the initialization process. The convergence rate is established based on the Frobenius norm, whereas the initial error is bounded by the spectral norm. An additional $r$ factor is introduced to establish the relationship between these two norms. More details can be found in Lemma \ref{theorem: initial}. The quadratic dependence of $s$ and $K$ is due to the construction of the initial guess, which is designed to ensure that the initial error is of the order $1/\sqrt{K\mu_0 s}$ (see Lemma \ref{bound of nabla g} and Lemma \ref{upper bound of h}). This careful construction is essential for achieving the desired error bounds. However, we empirically observe a linear relationship between $n$ and $K$. Therefore, we acknowledge that the quadratic sample complexity is likely an artifact of our proof.
\end{remark}

\begin{remark}
	The sample complexities of the nuclear norm minimization method \cite{wang2024simultaneous} and the ANM method \cite{daei2024timely} are approximately $\calO(K^2sr\log^4(sn))$ and $\calO(Ksr\log^2(ns))$, respectively. Compared to these results, our sample complexity is sub-optimal with respect to 
	$s,r$ and $K$. However, as previously mentioned, our method is more efficient than these convex relaxation methods.  Moreover, it is worth noting that our results are established based on slightly milder assumptions than those in \cite{daei2024timely}. The theoretical performance guarantee in \cite{daei2024timely} relies on an additional assumption requiring the randomness of the coefficient vectors $\vh_{k,p}$. In contrast, the theoretical performance of Scaled-GD is independent of such assumptions.
\end{remark}

\begin{remark}
Theorem \ref{theorem: initial} establishes that Scaled-GD converges linearly to the sequence ${\mZ_{k,\natural}}$. To achieve an accuracy of $\varepsilon$, defined as $\sum_{k=1}^{K}\fronorm{\mL_{k,t}\mR_{k,t}^\tranH- \mZ_{k,\natural}}^2\leq \varepsilon\sigma_0^2$, Scaled-GD requires at most $T = \mathcal{O}(\log(1/\varepsilon))$ iterations, which is notable for its independence from the condition number $\kappa$. Given the iterates ${\mL_{k,T}, \mR_{k,T}}$, the estimate of $\mX_{k, \natural}$ can be obtained by applying $\calD^{-1}\calGT(\mL_{k,T}\mR_{k,T}^\ast)$.
\end{remark}
The proof of Theorem \ref{theorem: iteration} follows the primary ideas outlined in \cite{zhang2023preconditioned} and is deferred to Section \ref{sec proof main results}. Fundamentally, we establish the Lipschitz smoothness and the Polyak-Lojasiewicz inequality for the objective function in \eqref{optimization} with respect to a scaled norm. This framework enables us to demonstrate a linear convergence rate through induction. However, the specifics of the proof are intricate and notably distinct, preventing us from addressing them using existing methodologies. Our work diverges from the closely related study by \cite{zhang2023preconditioned} in two significant aspects:
\begin{itemize}
	\item We consider a more complex problem where $K$ structured matrices must be jointly recovered from a single measurement vector. To tackle this issue, we formulate both the target matrices and the sensing operator as block diagonal matrices. Furthermore, the measurement matrices are rank-$1$ matrices, which leads to the failure of the restricted isometry property (RIP) for the sensing operator. The primary challenge is to understand the properties of the sensing operator when restricted to block diagonal matrices. As a result, the convergence analysis presented in \cite{zhang2023preconditioned} cannot be easily generalized to our model.
	
	\item The construction of the basin of contraction is crucial for understanding non-convex optimization and varies across different problems. As a result, the initial guess slightly differs from that in \cite{zhang2023preconditioned}. Moreover, to ensure the initialization is close to the ground truth, we must carefully address the effect of inter-coherence between different target matrices. Consequently, additional effort is required to achieve the desired guarantee.

\end{itemize}

\section{Numerical experiments}
\label{sec numerical}
In this section, we conduct a series of numerical experiments to evaluate the performance of proposed method in simultaneous blind super-resolution and demixing. The numerical simulations are executed from MATLAB R2022b on a macOS machine with multi-core Intel CPU at 2.3 GHz CPU and 16 GB RAM. Our code is available at “https://github.com/jcchen2017/Simultaneous-Blind-Super-Resolution-and-Demixing”.

In the first experiments, we compare the proposed method with convex recovery techniques, specifically atomic norm minimization \cite{vargas2023dual,daei2024timely} and nuclear norm minimization \cite{wang2024simultaneous}. For each $k$, we generate the data matrix $\mX_k$ using $\mX_{k,\natural} = \sum_{p= 1}^{r} d_{k,p} \vh_{k,p}\va_{\tau_{k,p}}^\tran $. The locations $\{\tau_{k,p}\}_{p=1}^r$  of the $k$-th point source signal are uniformly sampled from the interval $[0,1)$. The amplitudes $\{d_{k,p}\}_{p=1}^r$ are generated as $(1+10^{c_{k,p}})e^{-\imath\varphi_{k,p}}$, where $c_{k,p}$ is uniformly sampled from $[0,1]$ and $\varphi_{k,p}$ from $[0,2\pi)$. The coefficient vectors $\vh_{k,p}$ are i.i.d. standard Gaussian random vectors, subsequently normalized. The subspace matrices $\{\mB_k\}_{k = 1}^K$ are i.i.d. random matrices with entries uniformly distributed over $[-\sqrt{3},  \sqrt{3}]$. We conduct $20$ Monte Carlo trials, deeming a recovery successful if the relative error satisfies:
\begin{align*}
	\sqrt{ \frac{\sum_{k = 1}^K\fronorm{\hat{\mX}_k - \mX_{k, \natural}}^2}{\sum_{k = 1}^K\fronorm{\mX_{k, \natural}}^2} }\leq 10^{-3}.
\end{align*}
In our tests, we fix $n=48$ and $K=2$, while varying the values of $s$ and $r$. The phase transitions of VHL, ANM, and Scaled-PG for simultaneous blind super-resolution and demixing are shown in Fig. \ref{fig:phasetransition} (a), (b), and (c), where the point source signal locations are randomly sampled. Figures \ref{fig:phasetransition} (d), (e), and (f) display the phase transition of VHL, ANM, and Scaled-PG under the separation condition $\Delta:= \min_{k\neq j}\lab\tau_k-\tau_j\rab\geq\frac{1}{n}$. In these figures, white regions represent successful recoveries, while black regions indicate failures. It is evident that Scaled-GD exhibits a slightly higher phase transition threshold compared to the VHL method, regardless of whether the separation condition is met. Additionally, the Scaled-GD approach shows less sensitivity to the separation condition compared to the ANM method.

\begin{figure}[ht!]
	\centering
	\subfigure[]{
		\includegraphics[width=0.31\linewidth]{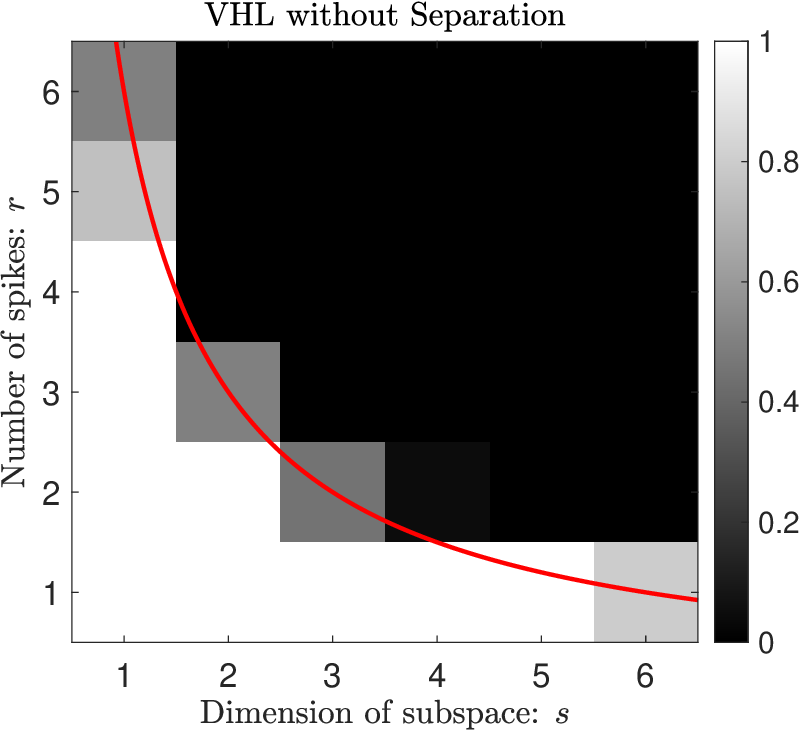}
	}
	\subfigure[]{\includegraphics[width=0.31\linewidth]{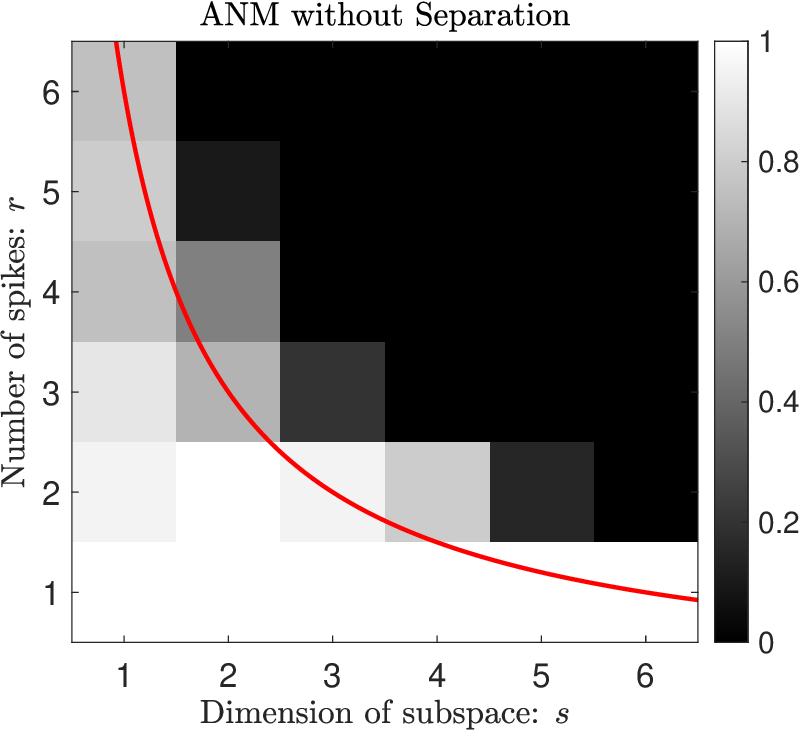}}
	\subfigure[]{\includegraphics[width=0.31\linewidth]{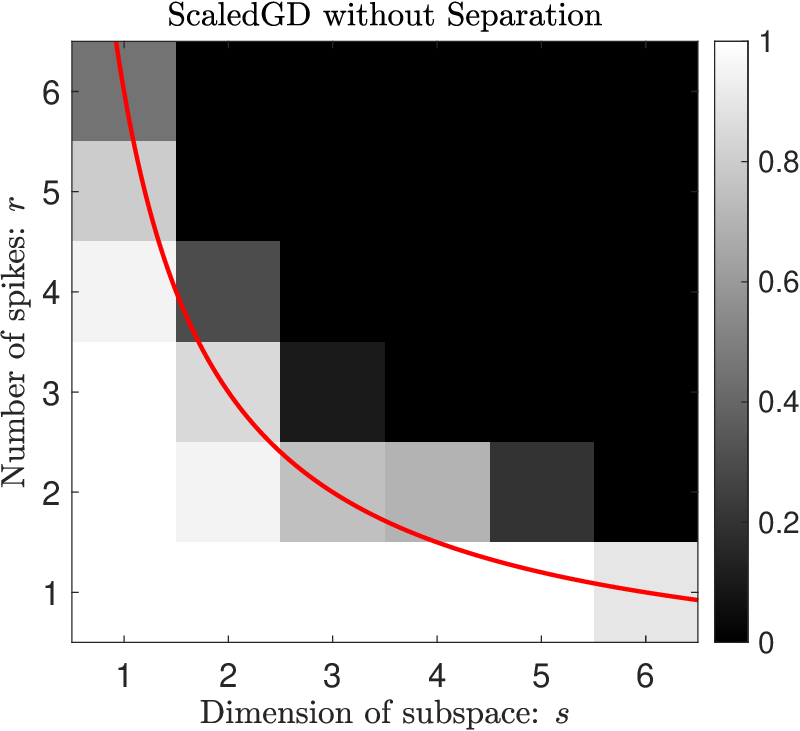}}

	\subfigure[]{
		\includegraphics[width= .31\textwidth]{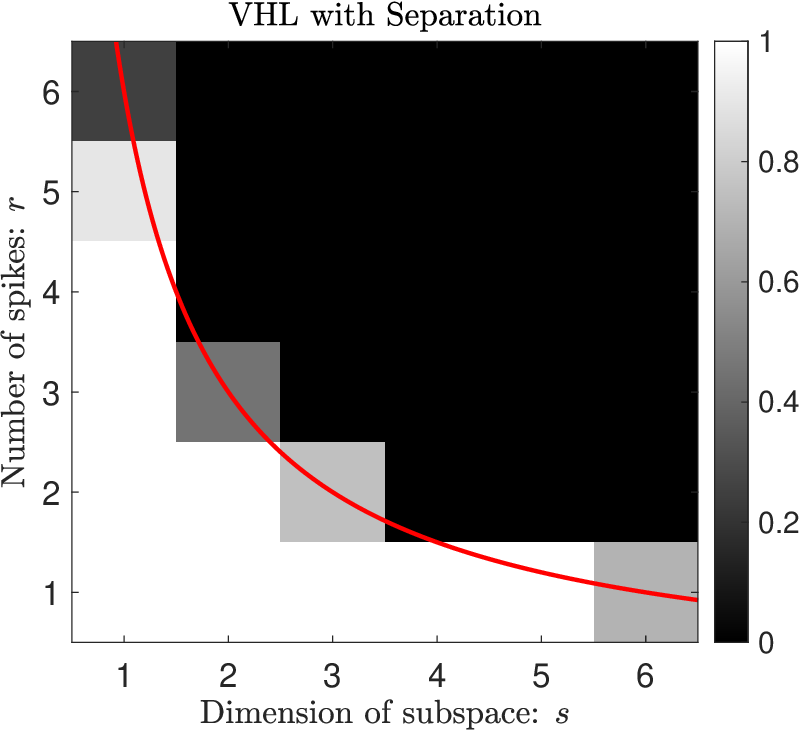}
	}
	\subfigure[]{
		\includegraphics[width= .31\textwidth]{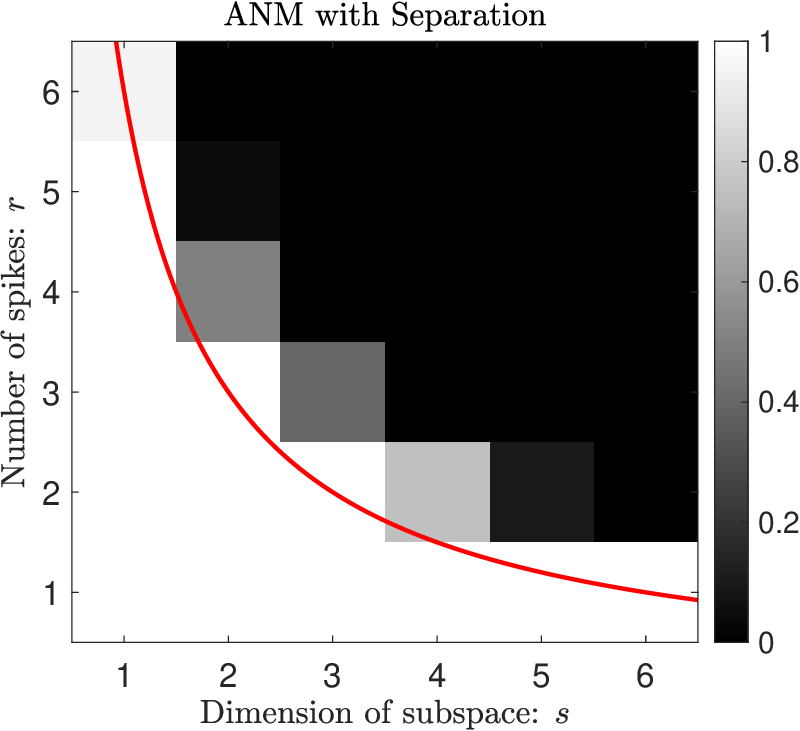}
	}
	\subfigure[]{
		\includegraphics[width= .31\textwidth]{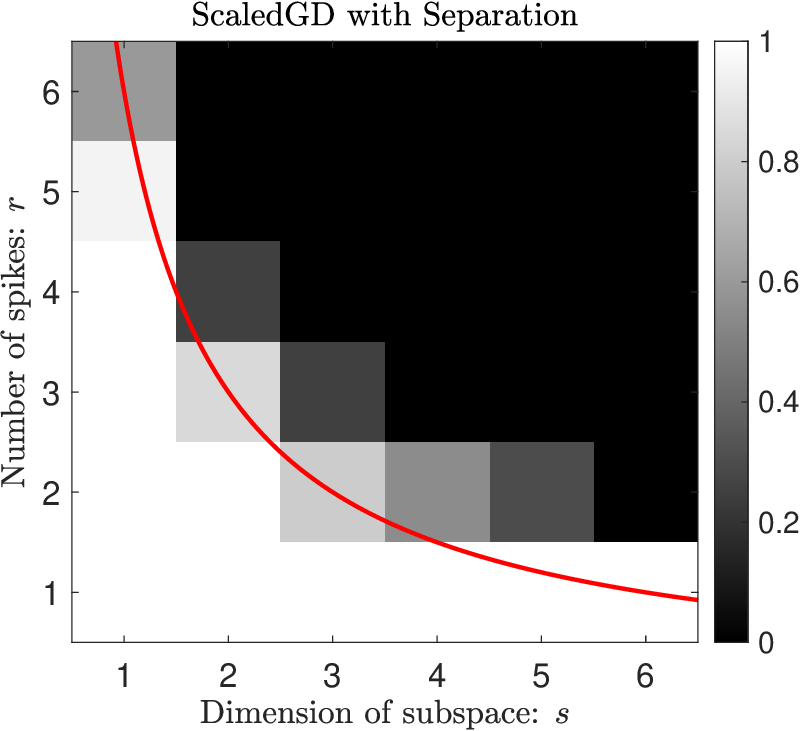}
		
	}
	
	\caption{The phase transitions of VHL, ANM and Scaled-GD were examined under the conditions  $n=48$ and $K=2$. Top: the point source signal locations are randomly generated. Bottom: the locations obey the separation condition $\Delta:= \min_{k\neq j}\lab\tau_k-\tau_j\rab\geq\frac{1}{n}$. The red curve in these figures represents the hyperbola curve $rs=6$.}
	\label{fig:phasetransition}
\end{figure}

In the second experiment, we investigate the phase transition of Scaled-GD when $s$ and $r$ are fixed. Specifically, we set $s = r = 3$ and impose the separation condition. As shown in Fig. \ref{fig:phasetransitionscaledgdwithsepnvsk}, there is an approximately linear relationship between $n$ and $K$. However, Theorem \ref{theorem: iteration} suggests that the number of measurements $n$ depends quadratically on$K$. The gap between empirical results and theoretical predictions remains an open question for future investigation.

\begin{figure}[ht!]
	\centering
	\includegraphics[width=0.5\linewidth]{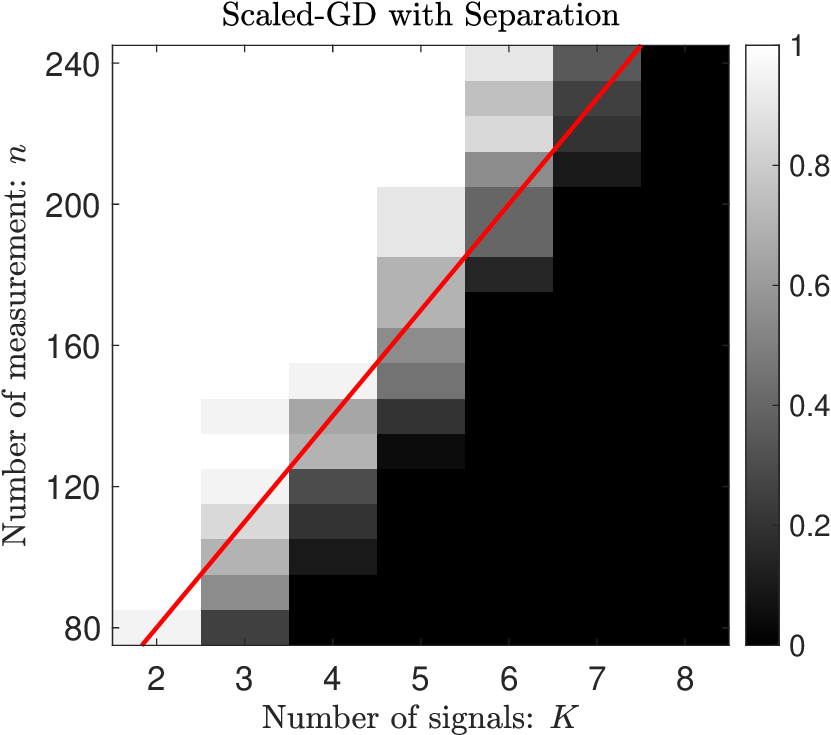}
	\caption{The phase transition of Scaled-GD for varying $n$ and $K$ when $s=r=3$. The red line in the figure represents the straight line $n=cK$, where $c$ is a constant.}
	\label{fig:phasetransitionscaledgdwithsepnvsk}
\end{figure}

In the third experiment, we examine the convergence performance of Scaled-GD in comparison with vanilla gradient descent methods. The experimental setup is similar to the first experiment, except that the data matrices are generated with a specified condition number. Specifically, the $k$-th data matrix is generated by $\mX_{k,\natural} =\sum_{p=1}^{r}d_{k,p}\vh_{k,p}\va_{\tau_{k,p}}^\tran\in\C^{s\times n}$, where $d_{k,p}=\sigma_{k,p}e^{-\imath\varphi_{k,p}}$ with $\varphi_{k,p}$ uniformly sampled from $[0,2\pi)$, and $\{\sigma_{k,p}\}_{p=1,\cdots, r}$ are linearly distributed from $1$ to $1/\kappa$. The locations $\{\tau_{k,p}\}_{p=1,\cdots, r}$ are randomly chose from $1/n, 2/n, \cdots, 1$. By applying the Vandermonde decomposition, it is evident that $\calH(\mX_{k,\natural})$ is a rank-$r$ matrix with a condition number $\kappa_k = \frac{\sigma_{k,1}}{\sigma_{k,r}}$.  For the comparison, we select $n=128, s=r=2, K=2$ and $n=512, s=r=4, K=6$. Fig \ref{fig:linearRate} illustrates the relative recovery error with respect to iterations under different condition numbers $\kappa = 1,5,10,20$. The results indicate that Scaled-GD achieves linear convergence regardless of the condition number, consistent with the predictions of our main theorem. 

\begin{figure}[ht!]
	\centering
	\subfigure[]{\includegraphics[width=0.4\linewidth]{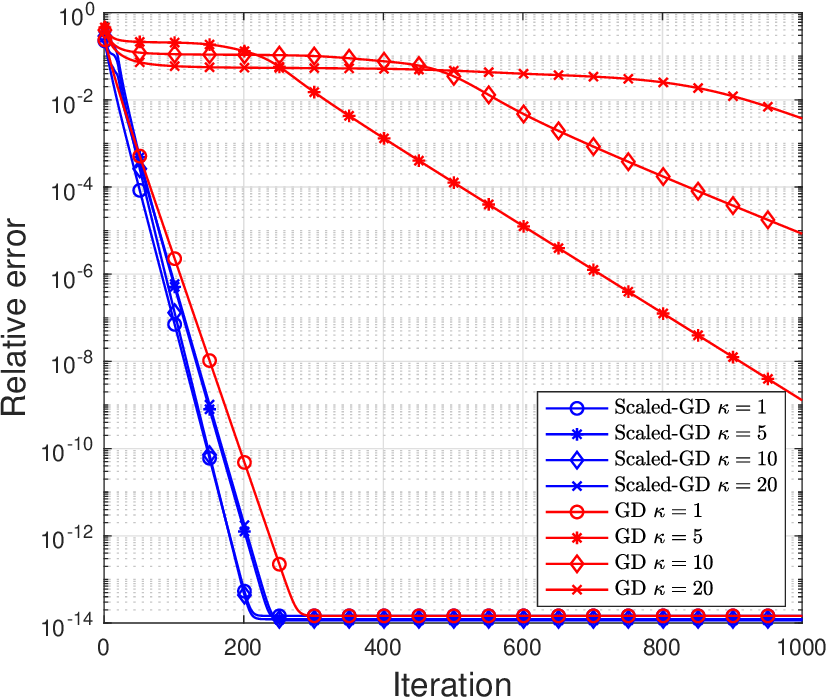} }
	\subfigure[]{\includegraphics[width=0.4\linewidth]{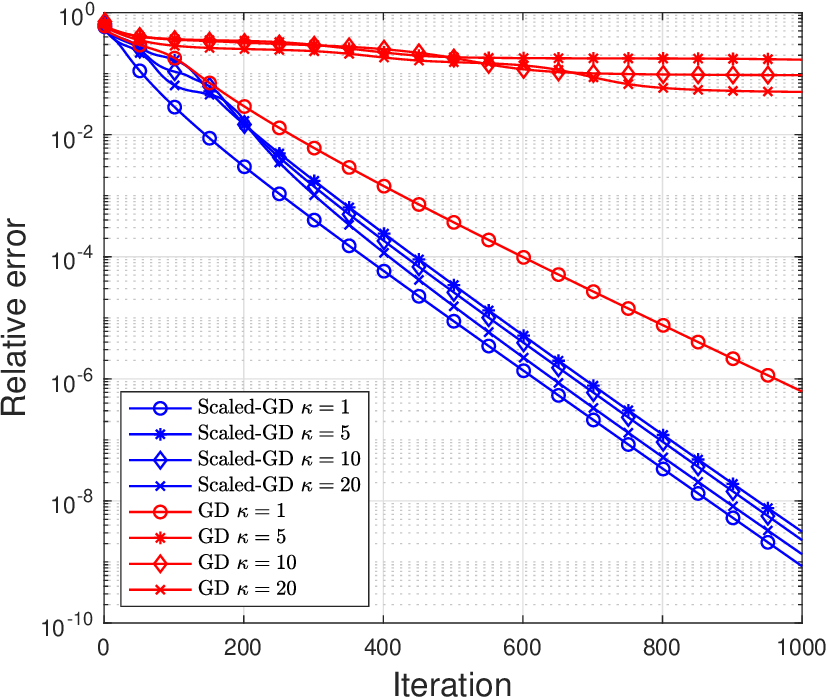}}
	\caption{The relative recovery errors of Scaled-GD and GD with respect to the iterations associated with different condition number $\kappa=1,5,15,20$. Left: $n=128, s=r=2, K=4$. Right: $n=512, s=r=4, K=6$.}
	\label{fig:linearRate}
\end{figure}

In the fourth experiment, we investigate the robustness of Scaled-GD to additive noise. The measurements are given by
\begin{align*}
	\vy' = \vy +  \sigma\cdot \frac{\vz}{\twonorm{\vz}},
\end{align*}
where $\vy$ is collected via \eqref{eq measurement}, $\sigma$ represents the noise level and $\vz$ is random vector sampled from standard norm distribution. In the test, the signal to noise ratio (SNR), defined as SNR$=20\log_{10} \frac{\twonorm{\vy}}{\sigma}$, is linearly distributed from $0$ to $60$dB. For each SNR, we choose $n=256, r=s=2, K=4$ and $n=512, r=s=4, K=6$ for comparison, with condition numbers $k=1,5, 15$. Scaled-GD is terminated after 80 iterations. The results, as depicted in Fig. \ref{fig:denoise}, show that the relative recovery error is linearly related to the noise level, demonstrating the robustness of Scaled-GD in the presence of noise.

\begin{figure}[ht!]
	\centering
	\includegraphics[width=0.5\linewidth]{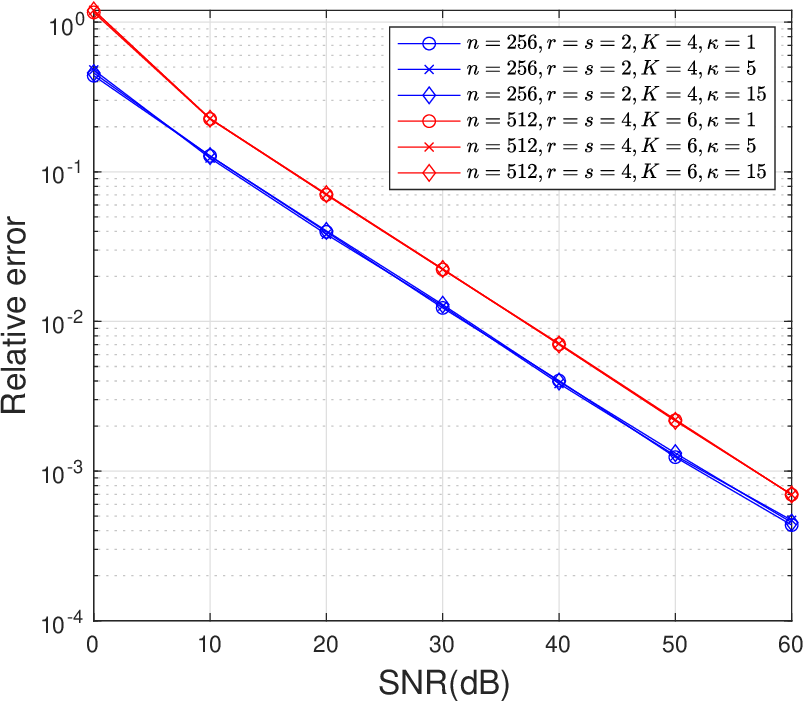}
	\caption{Performance of Scaled-GD under different SNR.}
	\label{fig:denoise}
\end{figure}

Finally, we evaluate the performance of the proposed method in recovering locations and point spread functions when combined with smoothed MUSIC. 
The data matrices are first recovered using Scaled-GD, followed by extracting the locations with smoothed MUSIC. The coefficients of point spread functions are estimated by solving the over-determined linear system. In this experiment, we set $n=256, s=r=4, K=2$. The algorithm iterates 100 times, achieving a relative recovery error of the data matrices less than $10^{-4}$. Fig. \ref{fig:figFreqs} illustrates the recovered locations and the magnitudes of the point spread functions. The results demonstrate that Scaled-GD, in conjunction with smoothed MUSIC, can accurately recover both the locations and point spread functions.

\begin{figure}[ht!]
	\centering
	\subfigure[]{\includegraphics[width=0.32\linewidth]{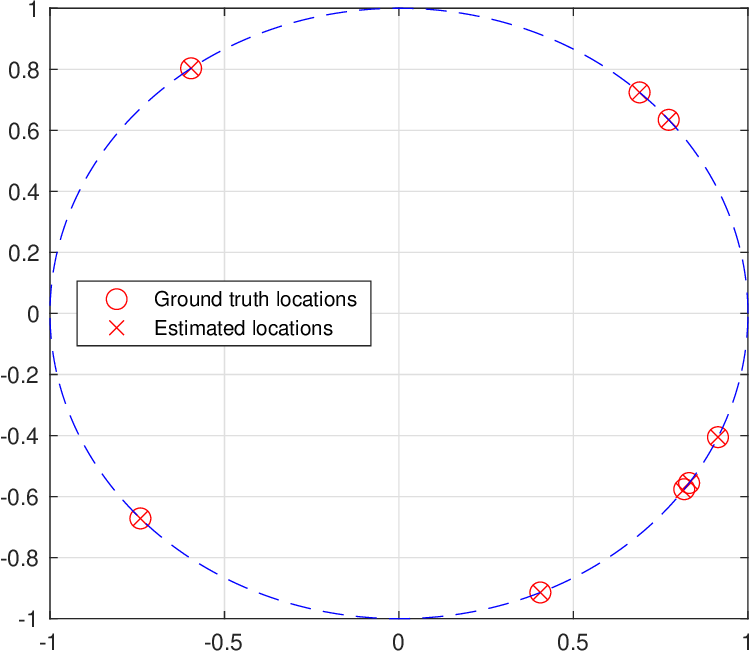}}
	\subfigure[]{
		\includegraphics[width=0.32\linewidth]{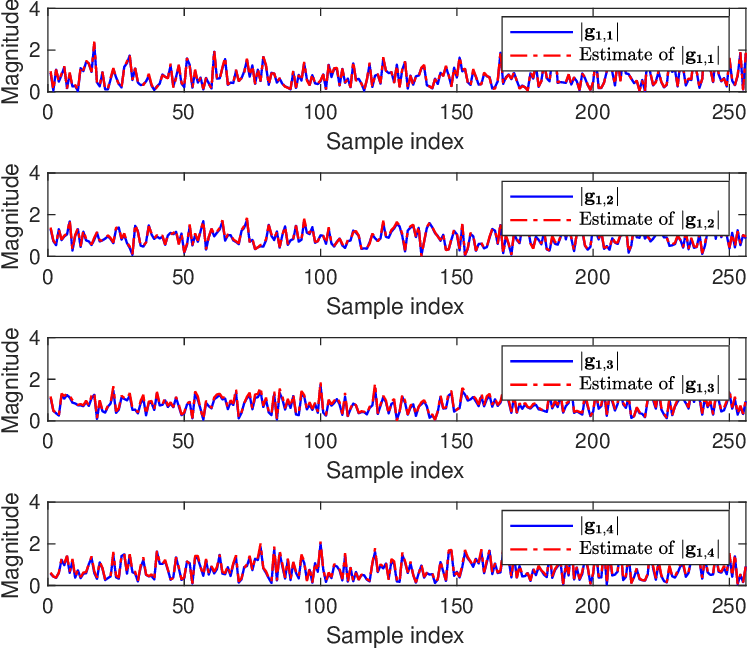}
		\includegraphics[width=0.32\linewidth]{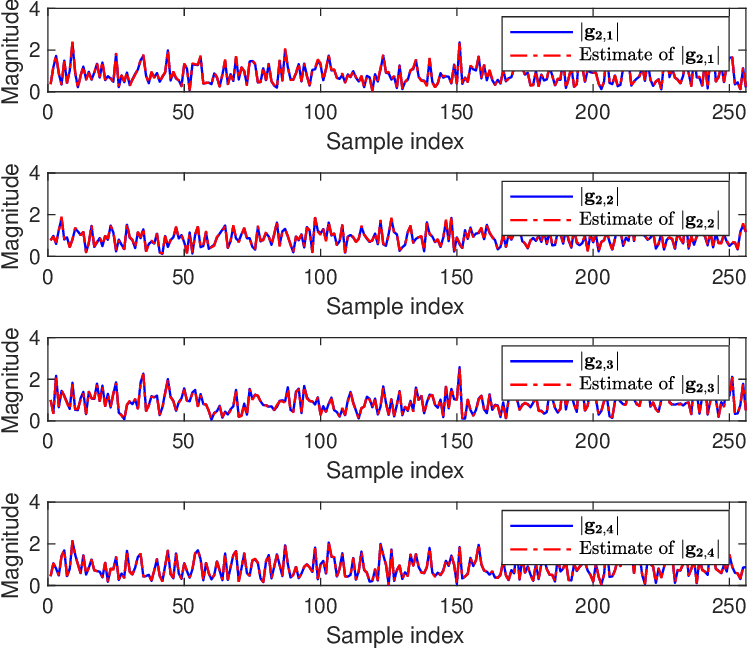} }
	\caption{Source localization and point spread function recovery with Scaled-GD and smoothed MUSIC. (a): Locations of point source signals and their estimates via smoothed MUSIC when $n=256, s=r=2, K=4$. (b) The magnitudes of $\{\vg_{k,p}\}_{k=1,2; p=1,\cdots, 4}$ and their estimates obtained through least squares.}
	\label{fig:figFreqs}
\end{figure}

\section{Proof of Main Results}
\label{sec proof main results}
To establish our main results, we first reformulate \eqref{optimization} as a block diagonal low rank matrix recovery problem.   
Let $\calA$ be a operator which maps a $Ks\times Kn$ block diagonal matrix into a $n\times 1$ vector, 
\begin{align*}
	\calA(\mX) := \sum_{k=1}^{K}\calA_k(\mX_k),
\end{align*}
where $\calA_k$ is defined in \eqref{A} and $\mX = \blkdiag(\mX_1,\cdots, \mX_K)$.
The adjoint operator of $\calA$, denoted by $\calA^\ast$ is given as 
\begin{align*}
	\calA^\ast(\vy) = \blkdiag(\calA_1^\ast(\vy), \cdots \calA^\ast_K(\vy))\in\C^{Ks\times Kn}
\end{align*}
for any vector $\vy\in\C^n$. Define $\tG:\C^{Ks\times Kn}\rightarrow \C^{Ksn_1\times Kn_2}$ as
\begin{align*}
	\tG(\mX) = \blkdiag(\calG(\mX_1), \cdots \calG(\mX_K))\in\C^{Ksn_1\times Kn_2}.
\end{align*}
The adjoint operator $\tG^\ast$  is naturally given as
\begin{align*}
	\tG^\ast(\mZ) = \blkdiag(\calGT(\mZ_1), \cdots, \calGT(\mZ_K)).
\end{align*}
It can be seen that $\tG^\ast \tG =\calI$, where $\calI:\C^{Ks\times Kn}\rightarrow \C^{Ks\times Kn}$ is the identity operator. Throughout this work, we abuse the operator 
$\calI$ to denote the identity operator in $\C^{Ks\times Kn}$ or $\C^{s\times n}$, whenever the context is clear.
Using this notation, the optimization problem \eqref{optimization} can be reformulated in a more compact form:
\begin{align}
	\label{optimization 2}
	\min_{\mL, \mR} \left\{f(\mL, \mR):=\frac{1}{2}\twonorm{\calA\tG^\ast(\mL\mR^\tranH - \mZ_{\natural})}^2 + \frac{1}{2}\fronorm{ (\calI - \tG\tG^\ast )(\mL\mR^\tranH - \mZ_{\natural}) }^2\right\},
\end{align}
where $\mZ_{\natural} =\blkdiag(\mZ_{1,\natural}, \cdots, \mZ_{K,\natural})$ is the ground truth matrix obeying that $\mD\vy = \calA\tG^\ast(\mZ_{\natural})$ and $(\calI - \tG\tG^\ast)(\mZ_{\natural})=\bzero$. Furthermore, each iteration in Algorithm \ref{alg: SGD} can be rewritten as
\begin{align*}
	\mL_{t+1} &= \mL_t - \eta_t \nabla f(\mL_t)(\mR_t^\tranH\mR_t)^{-1},\\
	\mR_{t+1} &= \mR_t - \eta_t \nabla f(\mR_t)(\mL_t^\tranH\mL_t)^{-1},
\end{align*}
where $\nabla f(\mL_t)\in\C^{Ksn_1\times Kr} $ and $\nabla f(\mR_t)\in\C^{Kn_2\times Kr}$ are given by
\begin{align}
	\label{gradient f L}
	\nabla f(\mL) &= (\mL\mR^\tranH - \mZ_{\natural})\mR +  \lb \tG\lb \calA^\ast\calA - \calI\rb\tG^\ast(\mL\mR^\tranH - \mZ_{\natural})\rb\mR,\\
	\label{gradient f R}
	\nabla f(\mR) &=(\mL\mR^\tranH - \mZ_{\natural})^\tranH\mL + \lb \tG\lb \calA^\ast\calA - \calI\rb\tG^\ast(\mL\mR^\tranH - \mZ_{\natural})^\tranH\rb\mL.
\end{align}

To establish the performance analysis for the non-convex optimization problem \eqref{optimization 2} with finite samples under the statistical model (Assumption \eqref{assumption 1}), it is instructive to first consider the more realistic case with infinite samples \cite{chen2019gradient,wang2023estimation}. Note that the expectation of the function $f(\mL,\mR)$ over the sampled data is given by:
\begin{align*}
	g(\mL, \mR):=\E[\calA]{f(\mL,\mR)} = \frac{1}{2}\twonorm{\mL\mR^\tranH - \mZ_{\natural}}^2.
\end{align*}
Therefore, the objective function in \eqref{optimization 2} can be decomposed as follows:
\begin{align*}
	f(\mL,\mR) =\underbrace{ \frac{1}{2}\fronorm{\mL\mR^\tranH - \mZ_{\natural}}^2}_{:=g(\mL,\mR)} + \underbrace{\frac{1}{2} \twonorm{\calA\tG^\ast(\mL\mR^\tranH - \mZ_{\natural})}^2 -\frac{1}{2} \fronorm{\tG\tG^\ast \lb \mL\mR^\tranH - \mZ_{\natural}\rb }^2}_{:=h(\mL,\mR)},
\end{align*}
where $h(\mL,\mR)$ can be regarded as the discrepancy between finite-sample case and the population case. In the following section, we will demonstrate that the population case $g(\mL,\mR)$ is Lipschitz smooth and satisfies the Polyak-Lojasiewicz inequality with respect to a scaled norm within a localized region. Furthermore, we will analyze the sample size required to effectively control the perturbation associated with  $h(\mL,\mR)$.

\begin{lemma}[Lipschitz smoothness of $g$]
\label{smooth of g}
Let $\mL,\mV\in\C^{Ksn_1\times r}, \mR,\mW\in\C^{Kn_2\times r}$ are block diagonal matrices, and 
\begin{align}
	\label{smooth parameter}
L=3+ \frac{2\fronorm{\mL\mR^\tranH - \mZ_{\natural} }}{\min_k \sigma_r(\mZ_{k, \natural})}  +\frac{3d_{\calS}^2(\mV, \mW)}{\min_k \sigma_r^2(\mZ_{k, \natural})} .	
\end{align}
Suppose that $\fronorm{\mL_k\mR_k^\tranH - \mZ_{k, \natural}} \leq \frac{\delta}{\sqrt{K}}\sigma_r(\mZ_{k, \natural})$ for all $k\in [K]$ with $\delta\leq \frac{1}{2}$. Then one has
\begin{align*}
	g(\mL+\mV, \mR+\mW) \leq g(\mL,\mR) + \la \nabla g(\mL), \mV\ra + \la\nabla g(\mR), \mW\ra + \frac{L}{2}d_{\calS}^2(\mV,\mW). 
\end{align*}

\end{lemma}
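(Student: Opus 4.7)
The plan is to prove this by a direct Taylor expansion of $g$ and then bound the quadratic and cubic remainder terms using the scaled norm $d_{\calS}$. Since $g(\mL,\mR)=\tfrac{1}{2}\|\mL\mR^\tranH-\mZ_{\natural}\|_F^2$, expanding $(\mL+\mV)(\mR+\mW)^\tranH-\mZ_{\natural}$ and using the formulas for $\nabla g(\mL)$ and $\nabla g(\mR)$, I would obtain
\begin{align*}
	g(\mL+\mV,\mR+\mW) - g(\mL,\mR) - \la\nabla g(\mL),\mV\ra - \la\nabla g(\mR),\mW\ra = \Real\la \mL\mR^\tranH - \mZ_{\natural},\mV\mW^\tranH\ra + \tfrac{1}{2}\fronorm{\mL\mW^\tranH + \mV\mR^\tranH + \mV\mW^\tranH}^2.
\end{align*}
The task is then to bound the two remaining terms by $\frac{L}{2}d_{\calS}^2(\mV,\mW)$.

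The second-order pieces $\fronorm{\mL\mW^\tranH}$ and $\fronorm{\mV\mR^\tranH}$ are already in the right form, thanks to the identity
\begin{align*}
	\fronorm{\mL_k\mW_k^\tranH}^2 = \trace(\mW_k \mL_k^\tranH \mL_k \mW_k^\tranH) = \fronorm{\mW_k(\mL_k^\tranH\mL_k)^{1/2}}^2,
\end{align*}
and similarly for $\mV_k\mR_k^\tranH$. Summing over $k$ (the block diagonal structure yields $\fronorm{\mL\mW^\tranH}^2 = \sum_k \fronorm{\mL_k\mW_k^\tranH}^2$) gives $\fronorm{\mL\mW^\tranH}^2 + \fronorm{\mV\mR^\tranH}^2 = d_{\calS}^2(\mV,\mW)$. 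Then by Young's inequality, for any $t>0$,
\begin{align*}
	\fronorm{\mL\mW^\tranH + \mV\mR^\tranH + \mV\mW^\tranH}^2 \leq (1+t)\fronorm{\mL\mW^\tranH + \mV\mR^\tranH}^2 + (1+1/t)\fronorm{\mV\mW^\tranH}^2 \leq 2(1+t)d_{\calS}^2(\mV,\mW) + (1+1/t)\fronorm{\mV\mW^\tranH}^2,
\end{align*}
and taking $t=1/2$ isolates a leading constant of $3/2$, matching the $3$ in the smoothness parameter.

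The main obstacle is the cubic term $\fronorm{\mV\mW^\tranH}$ (appearing both as the quadratic $\fronorm{\mV\mW^\tranH}^2$ above and via Cauchy--Schwarz in the cross term $|\Real\la \mL\mR^\tranH-\mZ_{\natural},\mV\mW^\tranH\ra|\leq \fronorm{\mL\mR^\tranH-\mZ_{\natural}}\fronorm{\mV\mW^\tranH}$). The key step is the factorization
\begin{align*}
	\mV_k\mW_k^\tranH = \mV_k(\mR_k^\tranH\mR_k)^{1/2}\cdot (\mR_k^\tranH\mR_k)^{-1/2}(\mL_k^\tranH\mL_k)^{-1/2}\cdot (\mL_k^\tranH\mL_k)^{1/2}\mW_k^\tranH,
\end{align*}
which together with the submultiplicativity of Frobenius/operator norms yields $\fronorm{\mV_k\mW_k^\tranH}\leq \fronorm{\mV_k(\mR_k^\tranH\mR_k)^{1/2}}\cdot \opnorm{(\mR_k^\tranH\mR_k)^{-1/2}(\mL_k^\tranH\mL_k)^{-1/2}}\cdot \fronorm{\mW_k(\mL_k^\tranH\mL_k)^{1/2}}$. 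The middle factor can then be evaluated via the elegant identity $\sigma_{\min}((\mL_k^\tranH\mL_k)^{1/2}(\mR_k^\tranH\mR_k)^{1/2}) = \sigma_r(\mL_k\mR_k^\tranH)$ (proved by noting that both sides equal $\sqrt{\lambda_r((\mR_k^\tranH\mR_k)^{1/2}\mL_k^\tranH\mL_k(\mR_k^\tranH\mR_k)^{1/2})}$, invoking the $AB/BA$ nonzero eigenvalue identity). Combined with the hypothesis $\fronorm{\mL_k\mR_k^\tranH-\mZ_{k,\natural}}\leq \frac{\delta}{\sqrt{K}}\sigma_r(\mZ_{k,\natural})$ and $\delta\leq 1/2$, Weyl's inequality gives $\sigma_r(\mL_k\mR_k^\tranH)\geq \tfrac{1}{2}\sigma_r(\mZ_{k,\natural})$, so
\begin{align*}
	\fronorm{\mV_k\mW_k^\tranH} \leq \frac{2\fronorm{\mV_k(\mR_k^\tranH\mR_k)^{1/2}}\fronorm{\mW_k(\mL_k^\tranH\mL_k)^{1/2}}}{\min_k\sigma_r(\mZ_{k,\natural})}.
\end{align*}

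To finish, I would sum the squared bound over $k$ and apply the AM--GM inequality $ab\leq \tfrac{1}{4}(a+b)^2$ together with $\sum_k x_k^2\leq (\sum_k x_k)^2$ for nonnegative $x_k$, obtaining the clean global bound
\begin{align*}
	\fronorm{\mV\mW^\tranH}^2 \leq \frac{d_{\calS}^4(\mV,\mW)}{\min_k \sigma_r^2(\mZ_{k,\natural})}.
\end{align*}
Substituting this into the bound on $\tfrac{1}{2}\fronorm{\mL\mW^\tranH+\mV\mR^\tranH+\mV\mW^\tranH}^2$ (with $t=1/2$) produces the contributions $\tfrac{3}{2}d_{\calS}^2(\mV,\mW)$ and $\tfrac{3}{2}\frac{d_{\calS}^4(\mV,\mW)}{\min_k\sigma_r^2(\mZ_{k,\natural})}$, while Cauchy--Schwarz on $\Real\la \mL\mR^\tranH-\mZ_{\natural},\mV\mW^\tranH\ra$ combined with the cubic bound gives $\frac{\fronorm{\mL\mR^\tranH-\mZ_{\natural}}}{\min_k\sigma_r(\mZ_{k,\natural})}d_{\calS}^2(\mV,\mW)$. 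Collecting these three contributions reproduces exactly $\tfrac{L}{2}d_{\calS}^2(\mV,\mW)$ with the claimed $L$ in \eqref{smooth parameter}, completing the proof.
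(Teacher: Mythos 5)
Your proposal is correct and follows essentially the same route as the paper: identical second-order expansion, the same factorization $\mV_k\mW_k^\tranH = \mV_k(\mR_k^\tranH\mR_k)^{1/2}\cdot(\mR_k^\tranH\mR_k)^{-1/2}(\mL_k^\tranH\mL_k)^{-1/2}\cdot(\mL_k^\tranH\mL_k)^{1/2}\mW_k^\tranH$ to control the cubic term, and the same Weyl-inequality argument yielding $\sigma_r(\mL_k\mR_k^\tranH)\geq(1-\delta)\sigma_r(\mZ_{k,\natural})$. Minor cosmetic differences (you observe $\fronorm{\mL_k\mW_k^\tranH}=\fronorm{\mW_k(\mL_k^\tranH\mL_k)^{1/2}}$ as an exact trace identity rather than as a $\leq$, you apply Young's inequality with parameter $t$ rather than $\|A+B+C\|_F^2\leq3(\|A\|_F^2+\|B\|_F^2+\|C\|_F^2)$ directly, and you express the middle operator-norm bound via $\sigma_{\min}((\mL_k^\tranH\mL_k)^{1/2}(\mR_k^\tranH\mR_k)^{1/2})=\sigma_r(\mL_k\mR_k^\tranH)$ rather than via pseudo-inverses) do not change the argument or the resulting constant $L$.
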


\begin{lemma}[Polyak-Lojasiewicz like inequality of $g$]
	\label{bound of nabla g}
	For any block diagonal matrices $\mL,\mR$, denote $\nabla g(\mL) := (\mL\mR^\tranH - \mZ_{\natural})\mR $ and $\nabla g(\mR) := (\mL\mR^\tranH - \mZ_{\natural})^\tranH\mL$. 
	Suppose that $\fronorm{\mL_k\mR_k^\tranH - \mZ_{k, \natural}} \leq \frac{\delta}{\sqrt{K}}\sigma_r(\mZ_{k, \natural})$ for any $k\in [K]$, where $\delta$ is an absolute constant such that $\delta\leq \frac{1}{(16+68\cdot 9)\kappa}$. Then one has
	\begin{align*}
		\frac{1}{3}\fronorm{\mL\mR^\tranH - \mZ_{\natural}} \leq d_{\calS^\ast}(\nabla g(\mL), \nabla g(\mR)) \leq \sqrt{2}\fronorm{\mL\mR^\tranH - \mZ_{\natural}}.
	\end{align*}
\end{lemma}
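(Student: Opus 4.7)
The plan is to exploit the block-diagonal structure of all matrices involved and reduce the inequality to a per-block statement, then handle the easy upper bound first and focus the effort on the lower bound via an alignment argument. Since $\mL$, $\mR$, and $\mZ_\natural$ are block diagonal, so are $\mDelta := \mL\mR^\tranH - \mZ_\natural$, the gradients $\nabla g(\mL) = \mDelta\mR$ and $\nabla g(\mR) = \mDelta^\tranH\mL$, and the scaling blocks $(\mR^\tranH\mR)^{-1/2}$, $(\mL^\tranH\mL)^{-1/2}$. Consequently, both $\fronorm{\mDelta}^2$ and $d_{\calS^\ast}^2(\nabla g(\mL),\nabla g(\mR))$ split as independent sums over $k\in[K]$, and it suffices to prove per block (subscript suppressed), under $\fronorm{\mDelta}\le\tfrac{\delta}{\sqrt K}\sigma_r(\mZ_\natural)$, the two-sided inequality
\[
\tfrac{1}{9}\fronorm{\mDelta}^2 \;\le\; \fronorm{\mDelta\mR(\mR^\tranH\mR)^{-1/2}}^2 + \fronorm{\mDelta^\tranH\mL(\mL^\tranH\mL)^{-1/2}}^2 \;\le\; 2\fronorm{\mDelta}^2.
\]
For the upper bound, the smallness of $\delta$ and Weyl's inequality ensure that $\mL$ and $\mR$ have full column rank, so $\mR(\mR^\tranH\mR)^{-1/2}$ and $\mL(\mL^\tranH\mL)^{-1/2}$ are partial isometries (they coincide with $\mU\mV^\tranH$ from the thin SVDs of $\mR$ and $\mL$), and each summand is immediately at most $\fronorm{\mDelta}^2$.

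The lower bound is the substantive step. I would introduce the balanced ground-truth factorization $\mZ_\natural = \mL_\natural\mR_\natural^\tranH$ with $\mL_\natural = \mU_\natural\Sigma_\natural^{1/2}$, $\mR_\natural = \mV_\natural\Sigma_\natural^{1/2}$, and let $\mQ\in\C^{r\times r}$ be an invertible alignment minimizing the scaled distance $\fronorm{(\mL\mQ^{-1}-\mL_\natural)\Sigma_\natural^{1/2}}^2 + \fronorm{(\mR\mQ^\tranH-\mR_\natural)\Sigma_\natural^{1/2}}^2$. Setting $\Delta_L = \mL\mQ^{-1}-\mL_\natural$ and $\Delta_R = \mR\mQ^\tranH-\mR_\natural$, I would invoke the identity
\[
\mDelta = \Delta_L\mR_\natural^\tranH + \mL_\natural\Delta_R^\tranH + \Delta_L\Delta_R^\tranH.
\]
A standard perturbation argument, using the optimality of $\mQ$ and the smallness of $\fronorm{\mDelta}/\sigma_r(\mZ_\natural)$, shows that $\fronorm{\Delta_L\Sigma_\natural^{1/2}}^2 + \fronorm{\Delta_R\Sigma_\natural^{1/2}}^2$ is comparable to $\fronorm{\mDelta}^2$, with the bilinear remainder $\fronorm{\Delta_L\Delta_R^\tranH}$ strictly subdominant by the $\delta\lesssim 1/\kappa$ hypothesis. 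Substituting the identity into $\fronorm{\mDelta\mR(\mR^\tranH\mR)^{-1/2}}^2$ and using that $\mR(\mR^\tranH\mR)^{-1/2}$ approximates (up to a unitary) an orthonormal basis of $\mathrm{col}(\mR_\natural)$ extracts a lower bound proportional to $\fronorm{\Delta_R\Sigma_\natural^{1/2}}^2$; the symmetric argument on the $\mL$-side yields $\fronorm{\Delta_L\Sigma_\natural^{1/2}}^2$. Summing the two and comparing with the upper estimate $\fronorm{\mDelta}^2 \lesssim \fronorm{\Delta_L\Sigma_\natural^{1/2}}^2 + \fronorm{\Delta_R\Sigma_\natural^{1/2}}^2$ (modulo higher-order terms) produces the constant $1/3$ after taking square roots.

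The main obstacle I anticipate is preserving leading constants that are independent of $\kappa$. A naive expansion introduces factors of $\opnorm{\mZ_\natural}$ that would make the final constant scale with the condition number. The remedy is twofold: first, the alignment $\mQ$ must be chosen with respect to the $\Sigma_\natural^{1/2}$-scaled norm rather than the Euclidean one, so that the projector mismatch between $\mR(\mR^\tranH\mR)^{-1/2}$ and $\mV_\natural$ is absorbed cleanly into the leading order; second, the hypothesis $\delta\le 1/((16+68\cdot 9)\kappa)$ forces every higher-order contribution — the cross term $\Delta_L\Delta_R^\tranH$ and the projector-perturbation errors arising from replacing $\mathrm{col}(\mR)$ with $\mathrm{col}(\mR_\natural)$ — to remain a strictly smaller fraction of the leading term. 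The specific numerical constant in the hypothesis is dictated precisely by how much slack is needed in these higher-order bounds to retain the factor $\tfrac{1}{3}$.
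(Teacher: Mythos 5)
The block-wise reduction and the upper bound are fine and mirror the paper. But the lower bound, which is the heart of the lemma, takes a genuinely different route from the paper and, as sketched, has a gap in its central step.

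The paper proves the lower bound by the variational identity
\[
d_{\calS^\ast}(\nabla g(\mL),\nabla g(\mR)) \;=\; \sup_{\fronorm{\mY}=1}\, \la \mL\mR^\tranH-\mZ_\natural,\; \mY_1(\mR^\tranH\mR)^{-1/2}\mR^\tranH + \mL(\mL^\tranH\mL)^{-1/2}\mY_2^\tranH\ra,
\]
and then plugs in the explicit test matrix $\mY \propto \bigl[\mDelta(\mR^\tranH\mR)^{1/2};\;\mUpsilon(\mL^\tranH\mL)^{1/2}\bigr]$ with $\mDelta = \mL\mO - \mL_\natural$, $\mUpsilon = \mR\mO - \mR_\natural$ and $\mO$ the optimal \emph{orthogonal} alignment. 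This choice collapses the inner product into $\fronorm{\mL_\natural\mUpsilon^\tranH+\mDelta\mR_\natural^\tranH}^2$ plus controllable remainders. Crucially, the quantity $\fronorm{\mL_\natural\mUpsilon^\tranH+\mDelta\mR_\natural^\tranH}$ is controlled \emph{jointly} via Lemma~\ref{lemma phi and psi}; there is no attempt to lower-bound anything by $\fronorm{\mDelta\Sigma_\natural^{1/2}}^2$ or $\fronorm{\mUpsilon\Sigma_\natural^{1/2}}^2$ separately.

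Your plan instead asserts that $\fronorm{\mDelta\mR(\mR^\tranH\mR)^{-1/2}}^2$ on its own ``extracts'' a term proportional to $\fronorm{\Delta_R\Sigma_\natural^{1/2}}^2$ and the symmetric term extracts $\fronorm{\Delta_L\Sigma_\natural^{1/2}}^2$. First, the labels are swapped: up to unitaries and higher-order corrections,
\[
\mDelta\mR(\mR^\tranH\mR)^{-1/2} \;\approx\; \Delta_L\Sigma_\natural^{1/2} \;+\; \mU_\natural\Sigma_\natural^{1/2}\Delta_R^\tranH\mV_\natural,
\]
so if anything it is $\Delta_L$, not $\Delta_R$, that appears. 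Second — and this is the real gap — there are \emph{two} leading terms on the right-hand side, they are generically not orthogonal, and nothing in the proposal rules out partial cancellation between them. (A concrete illustration of the phenomenon is $\Delta_L = \mL_\natural\mM$, $\Delta_R = -\mR_\natural\mM^\tranH$, where the leading combination $\Delta_L\Sigma_\natural + \mL_\natural\Delta_R^\tranH\mR_\natural$ vanishes to first order even though each factor is first order; one must then appeal to the optimality of the alignment to exclude such directions, which requires an argument you do not give.) The hand-waving phrases ``standard perturbation argument'' and ``projector-perturbation errors'' do not address this. Consequently the claimed per-term lower bounds are not established, and the summation step that follows inherits the gap. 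The variational trick in the paper is not optional bookkeeping: it is precisely what packages $\Delta_L\mR_\natural^\tranH$ and $\mL_\natural\Delta_R^\tranH$ into the single object $\fronorm{\mL_\natural\mUpsilon^\tranH+\mDelta\mR_\natural^\tranH}$ so that the cancellation question never arises.

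Two secondary points. You propose an invertible alignment $\mQ$ minimizing a $\Sigma_\natural^{1/2}$-scaled distance; the paper instead uses the orthogonal Procrustes alignment $\mO$ and quotes Lemma~\ref{lemma 24 cong} in that form. Either convention could in principle be made to work, but the supporting lemmas (\ref{lemma a}, \ref{lemma phi and psi}, \ref{fact 4}) are all stated for the orthogonal alignment, so your argument would need reproved analogues. Finally, the inequality $\fronorm{\mDelta}^2 \lesssim \fronorm{\Delta_L\Sigma_\natural^{1/2}}^2 + \fronorm{\Delta_R\Sigma_\natural^{1/2}}^2$ you invoke at the end is elementary, but the quantitative constant ($(\sqrt2+1)$ in Lemma~\ref{lemma 24 cong}) goes the other way; you need to be careful which direction each comparison is actually available in.
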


\begin{lemma}
	\label{upper bound of h}
	Suppose that $\epsilon\leq \frac{1}{15} $ and $\fronorm{\mL_k\mR_k^\tranH - \mZ_{k, \natural}}\leq \frac{\delta}{\sqrt{K}}\sigma_r(\mZ_{k, \natural})$ for all $k\in[K]$, where $\delta\leq \frac{1}{290\sqrt{\mu_0s}\kappa}$. Let 
	\begin{align*}
		\nabla h(\mL)= \lb \tG\lb \calA^\ast\calA - \calI\rb\tG^\ast(\mL\mR^\tranH - \mZ_{\natural})\rb\mR, ~
		\nabla h(\mR)= \lb \tG\lb \calA^\ast\calA - \calI\rb\tG^\ast(\mL\mR^\tranH - \mZ_{\natural})^\tranH\rb\mL.
	\end{align*}
	If $n\geq C_{\gamma}K^2s^2r^2\kappa^2\mu_0\mu_1\log^2(sn)$, then with probability at least $1-(sn)^{-\gamma}$, one has
	\begin{align*}
		d_{\calS^\ast}(\nabla h(\mL), \nabla h(\mR)) \leq \frac{1}{5}\fronorm{\mL\mR^\tranH - \mZ_{\natural}}.
	\end{align*}
\end{lemma}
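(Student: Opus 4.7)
The strategy is to reduce the scaled dual-norm bound to a spectral-norm bound on the perturbation operator $\tG(\calA^\ast\calA-\calI)\tG^\ast$ applied to $\mE:=\mL\mR^\tranH-\mZ_{\natural}$, and then to control that spectral norm by matrix Bernstein.

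First I would pass from the scaled dual norm to a spectral norm. Setting $\mM:=(\calA^\ast\calA-\calI)\tG^\ast(\mE)$ and noting that $\mR(\mR^\tranH\mR)^{-1}\mR^\tranH$ is an orthogonal projection of trace $Kr$, the cyclic trace identity together with $\trace(AB)\le\opnorm{A}\trace(B)$ for positive semidefinite $A,B$ gives
\[
\fronorm{\nabla h(\mL)(\mR^\tranH\mR)^{-\frac{1}{2}}}^{2} = \trace\!\bigl(\mR(\mR^\tranH\mR)^{-1}\mR^\tranH\,\tG(\mM)^\tranH \tG(\mM)\bigr) \leq Kr\,\opnorm{\tG(\mM)}^2.
\]
A symmetric inequality bounds $\fronorm{\nabla h(\mR)(\mL^\tranH\mL)^{-1/2}}$ since $\opnorm{\tG(\mM)^\tranH}=\opnorm{\tG(\mM)}$, so $d_{\calS^\ast}(\nabla h(\mL),\nabla h(\mR)) \leq \sqrt{2Kr}\,\opnorm{\tG(\mM)}$, and it suffices to show $\opnorm{\tG(\mM)} \leq \frac{1}{5\sqrt{2Kr}}\fronorm{\mE}$ with the claimed probability.

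Second, I would apply matrix Bernstein. Writing $\calA^\ast\calA=\sum_{j=0}^{n-1}\calB_j$, where $\calB_j$ depends only on the independent random vectors $\{\vb_{k,j}\}_{k=1}^K$ and satisfies $\sum_j\E{\calB_j}=\calI$ on block-diagonal inputs by Assumption \ref{assumption 1}, expresses $\tG(\mM)=\sum_j \mX_j$ as a sum of independent mean-zero matrices. I would bound the Bernstein parameters $B\ge\max_j\opnorm{\mX_j}$ and $v\ge\max\{\opnorm{\sum_j\E{\mX_j\mX_j^\tranH}},\opnorm{\sum_j\E{\mX_j^\tranH\mX_j}}\}$ via (i) the entrywise bound $|\vb_{k,j}[p]|\le\sqrt{\mu_0}$ of Assumption \ref{assumption 1}; (ii) the row/column incoherence of $\mU_{k,\natural},\mV_{k,\natural}$ from Assumption \ref{assumption 0}; and (iii) the closeness hypothesis with $\delta\le 1/(290\sqrt{\mu_0 s}\kappa)$, which transfers incoherence from $\mZ_{\natural}$ to the perturbed iterate $\mL\mR^\tranH$ and hence to $\mE$. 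Combining Bernstein with the sample complexity $n\ge C_\gamma K^2 s^2 r^2\kappa^2\mu_0\mu_1\log^2(sn)$ then yields the operator-norm bound with probability at least $1-(sn)^{-\gamma}$.

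The main obstacle is step (iii) above. Unlike in standard low-rank matrix sensing, the iterate $\mL\mR^\tranH$ is not constrained to lie in the Hankel range of $\tG\tG^\ast$, so $\mE$ splits into an in-range Hankel part and an out-of-range residual. Converting the Frobenius closeness hypothesis into tight $\twoinf{\cdot}$-type control on $\tG^\ast(\mE)$ requires a careful perturbation argument that leverages the $\mu_1$-incoherence of $\mZ_{\natural}$, with the $\sqrt{\mu_0 s}$ factor in the admissible $\delta$ arising precisely from the deterministic tail bound $B$ in Bernstein. Moreover, the off-diagonal block contributions $\vb_{k,j}\vb_{\ell,j}^\tranH$ with $k\ne\ell$, although mean-zero, contribute nontrivially to the variance $v$ and must be analyzed jointly with the diagonal $k=\ell$ blocks, reflecting the block-diagonal restriction of $\calA^\ast\calA-\calI$ that places this problem outside the standard RIP framework.
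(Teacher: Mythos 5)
The reduction in your first step is valid algebraically (the $\trace(AB)\le\opnorm{B}\trace(A)$ estimate is correct), but it discards exactly the structure the argument needs, and the rest of the plan cannot recover from this.

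The paper bounds $d_{\calS^\ast}(\nabla h(\mL),\nabla h(\mR))$ by writing it as a supremum of inner products $\la \tG(\calA^\ast\calA-\calI)\tG^\ast(\mE),\,\mY_1(\mR^\tranH\mR)^{-1/2}\mR^\tranH+\mL(\mL^\tranH\mL)^{-1/2}\mY_2^\tranH\ra$ and then decomposing \emph{both} the error $\mE$ and the test matrix into a component in the tangent space $T$ at $\mZ_\natural$ (defined via $\mU_{k,\natural},\mV_{k,\natural}$) and a component in $T^\perp$. Concretely $\mE=\mPhi_1+\mPsi_1$ with $\mPhi_1=\mL_\natural\mUpsilon^\tranH+\mDelta\mR_\natural^\tranH\in T$ and $\mPsi_1=\mDelta\mUpsilon^\tranH\in T^\perp$, and the test matrix splits as $\mPhi_2+\mPsi_2$ with $\fronorm{\mPsi_1},\fronorm{\mPsi_2}=O(\kappa\delta/\sqrt{K})$. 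Lemma \ref{lemma inner product} then controls each of the four cross terms: the $T\times T$ piece by $\epsilon$ (Lemma \ref{lemma: T-RIP}), and the pieces with a $T^\perp$ factor by $\sqrt{K\mu_0 s}$ or $K\mu_0 s$ but paired with the tiny Frobenius norms of $\mPsi_1,\mPsi_2$. When you replace $\mR(\mR^\tranH\mR)^{-1}\mR^\tranH$ by $\trace(\cdot)=Kr$ and aim for $\opnorm{\tG(\mM)}\le\frac{1}{5\sqrt{2Kr}}\fronorm{\mE}$, you have erased the fact that the test direction is (nearly) tangent; you are now asking for a spectral-norm bound over arbitrary test directions, and the best uniform bounds available are $\opnorm{\tG(\calA^\ast\calA-\calI)\tG^\ast\calP_T}\lesssim\sqrt{K\mu_0 s}$ and $\opnorm{\tG(\calA^\ast\calA-\calI)\tG^\ast}\lesssim K\mu_0 s$ — far above the $\epsilon/\sqrt{Kr}$ scale you would need, and no choice of $n$ in the theorem's budget fixes this.

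Two further issues. First, your matrix Bernstein step treats $\mE$ as fixed, but $\mE$ is the iterate error and therefore statistically dependent on $\calA$; the lemma has to hold uniformly over the closeness ball once the high-probability event is conditioned on, which is why the paper works through uniform operator-norm estimates on $T$ rather than applying a concentration inequality to the specific $\mE$. Second, the obstacle you identify as ``$\mE$ splits into an in-range Hankel part and an out-of-range residual'' is a red herring: $\tG^\ast$ already projects onto the Hankel range (since $\tG^\ast\tG=\calI$), so the off-range part of $\mE$ never enters. The relevant decomposition is by the rank-$r$ tangent space $T$ at $\mZ_\natural$, not by the Hankel range, and Frobenius closeness does not convert into incoherence of $\mE$ (a nearby $\mE$ can be arbitrarily spiky). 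The $\mu_1$-incoherence in the proof enters only through $T$, not through any transferred incoherence of the iterate.
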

\begin{lemma}
	\label{lemma bound L}
	Let $\epsilon\leq \frac{1}{6\sqrt{2}}, \delta\leq\frac{1}{116\kappa\sqrt{2\mu_0s}}$ and $ \eta\leq 1$. Suppose that $\fronorm{\mL_k\mR_k^\tranH - \mZ_{k, \natural}} \leq \frac{\delta}{\sqrt{K}}\sigma_r(\mZ_{k, \natural})$ for any $k\in[K]$. 
	If the block diagonal matrices $\mV$ and $\mW$ are given by $\mV = -\eta \nabla f(\mL)(\mR^\tranH\mR)^{-1}$ and $\mW=-\eta \nabla f(\mR)(\mL^\tranH\mL)^{-1}$, then the smoothness parameter $L$ defined in \eqref{smooth parameter} obeys that $3\leq L\leq 5$.
\end{lemma}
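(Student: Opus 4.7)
\medskip

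\noindent\textbf{Proof proposal for Lemma \ref{lemma bound L}.} The lower bound $L\geq 3$ is immediate from the definition \eqref{smooth parameter} since both remaining summands are non-negative. The work therefore lies in showing that the two remaining summands jointly contribute at most $2$. The plan is to handle them separately: the first via the per-block hypothesis together with the definition of $\kappa$, and the second by rewriting $d_{\calS}(\mV,\mW)$ as a dual scaled norm of $\nabla f$ and invoking Lemmas \ref{bound of nabla g}--\ref{upper bound of h}.

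For the first term, I would aggregate the per-block bounds as
\begin{align*}
\fronorm{\mL\mR^\tranH-\mZ_{\natural}}^2=\sum_{k=1}^{K}\fronorm{\mL_k\mR_k^\tranH-\mZ_{k,\natural}}^2\leq \frac{\delta^2}{K}\sum_{k=1}^{K}\sigma_r^2(\mZ_{k,\natural})\leq \delta^2\max_k\sigma_r^2(\mZ_{k,\natural}),
\end{align*}
and then use $\max_k\sigma_r(\mZ_{k,\natural})\leq\max_k\sigma_1(\mZ_{k,\natural})\leq\kappa\min_k\sigma_r(\mZ_{k,\natural})$ to deduce $\fronorm{\mL\mR^\tranH-\mZ_{\natural}}\leq\delta\kappa\min_k\sigma_r(\mZ_{k,\natural})$. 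With $\delta\kappa\leq 1/(116\sqrt{2\mu_0 s})$, the first non-trivial summand in \eqref{smooth parameter} is at most $2\delta\kappa$, which is much smaller than $1$.

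For the second term, I would plug the definitions of $\mV$ and $\mW$ into $d_{\calS}$ and note that the factors $(\mR^\tranH\mR)^{-1}$ and $(\mL^\tranH\mL)^{-1}$ combine with the outer $(\mR^\tranH\mR)^{1/2}$ and $(\mL^\tranH\mL)^{1/2}$ to produce inverse-square-roots, giving the identity
\begin{align*}
d_{\calS}^2(\mV,\mW)=\eta^2\bigl(\fronorm{\nabla f(\mL)(\mR^\tranH\mR)^{-1/2}}^2+\fronorm{\nabla f(\mR)(\mL^\tranH\mL)^{-1/2}}^2\bigr)=\eta^2 d_{\calS^\ast}^2(\nabla f(\mL),\nabla f(\mR)).
\end{align*}
Splitting $\nabla f=\nabla g+\nabla h$ as in \eqref{gradient f L}--\eqref{gradient f R}, the triangle inequality for the dual scaled norm (which is built from Frobenius norms) combined with Lemma \ref{bound of nabla g} and Lemma \ref{upper bound of h} yields $d_{\calS^\ast}(\nabla f(\mL),\nabla f(\mR))\leq (\sqrt{2}+\tfrac{1}{5})\fronorm{\mL\mR^\tranH-\mZ_{\natural}}$; the hypotheses on $\delta$ and $\epsilon$ are precisely tuned so that the preconditions of those two lemmas are met. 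Applying the first-term bound once more then gives $d_{\calS}^2(\mV,\mW)/\min_k\sigma_r^2(\mZ_{k,\natural})\leq \eta^2(\sqrt{2}+\tfrac{1}{5})^2\delta^2\kappa^2$.

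Combining the two estimates, $L\leq 3+2\delta\kappa+3\eta^2(\sqrt{2}+1/5)^2\delta^2\kappa^2$; the chosen $\delta\leq 1/(116\kappa\sqrt{2\mu_0 s})$ and $\eta\leq 1$ make the last two summands dwarfed by $2$, so $L\leq 5$. The main obstacle I anticipate is not any single estimate but the constant bookkeeping: one must verify that the hypotheses $\delta\leq 1/(116\kappa\sqrt{2\mu_0 s})$ and $\epsilon\leq 1/(6\sqrt{2})$ imply the slightly stronger-looking preconditions $\delta\leq 1/(290\sqrt{\mu_0 s}\kappa)$ and $\epsilon\leq 1/15$ required by Lemmas \ref{bound of nabla g} and \ref{upper bound of h} (or, failing that, that one can absorb the discrepancy into the generous slack between $3+2\delta\kappa+\text{small}$ and $5$). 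Once this is checked, the proof reduces to the two displayed chains above.
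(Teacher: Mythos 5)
Your proposal follows the same outline as the paper's proof: reduce $d_{\calS}^2(\mV,\mW)$ to $\eta^2 d_{\calS^\ast}^2(\nabla f(\mL),\nabla f(\mR))$, split $\nabla f=\nabla g+\nabla h$, bound the two pieces via Lemmas~\ref{bound of nabla g} and~\ref{upper bound of h}, then finish by aggregating the per-block hypothesis into $\fronorm{\mL\mR^\tranH-\mZ_\natural}\leq\delta\kappa\min_k\sigma_r(\mZ_{k,\natural})$. The only cosmetic difference is that you use the triangle inequality in the dual scaled norm (giving the factor $(\sqrt 2+\tfrac15)^2$), whereas the paper uses $\fronorm{\mA+\mB}^2\leq 2\fronorm{\mA}^2+2\fronorm{\mB}^2$ (giving $4+2\cdot(\cdot)^2$); both routes land comfortably below $5$.

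The point you flag as an ``obstacle'' is, however, not a bookkeeping nuisance that cancels itself out --- it is a real precondition failure that you cannot resolve by citing Lemma~\ref{upper bound of h} as stated. Under the present hypotheses $\epsilon\leq\tfrac{1}{6\sqrt2}\approx 0.118$ and $\delta\leq\tfrac{1}{116\kappa\sqrt{2\mu_0 s}}\approx\tfrac{1}{164\kappa\sqrt{\mu_0 s}}$, the preconditions $\epsilon\leq\tfrac{1}{15}\approx 0.067$ and $\delta\leq\tfrac{1}{290\kappa\sqrt{\mu_0 s}}$ of Lemma~\ref{upper bound of h} are simply not satisfied, so you may not invoke the stated conclusion $d_{\calS^\ast}(\nabla h(\mL),\nabla h(\mR))\leq\tfrac15\fronorm{\mL\mR^\tranH-\mZ_\natural}$. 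What the paper actually does --- and what your proposal must do explicitly rather than hope for --- is reach inside the proof of Lemma~\ref{upper bound of h} and use the intermediate estimate $d_{\calS^\ast}(\nabla h(\mL),\nabla h(\mR))\leq(3\epsilon+58\sqrt{\mu_0 s}\kappa\delta)\fronorm{\mL\mR^\tranH-\mZ_\natural}$, which holds under the weaker requirement $\delta\leq\tfrac{1}{6\kappa}$. The hypotheses of the present lemma are calibrated precisely so that $3\epsilon\leq\tfrac{1}{2\sqrt2}$ and $58\kappa\delta\sqrt{\mu_0 s}\leq\tfrac{1}{2\sqrt2}$, giving $3\epsilon+58\sqrt{\mu_0 s}\kappa\delta\leq\tfrac{1}{\sqrt2}$, hence $d_{\calS}^2(\mV,\mW)\leq 5\eta^2\fronorm{\mL\mR^\tranH-\mZ_\natural}^2$ in the paper's bookkeeping (or $\tfrac92\eta^2\fronorm{\cdot}^2$ in yours). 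Plugging that together with $2\delta\kappa\leq 1$ into~\eqref{smooth parameter} closes the argument; without spelling out the switch to the intermediate bound, the step you worry about is genuinely unjustified.
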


\begin{lemma}[Contraction of $g$]
	\label{lemma linear convergence}
	Suppose that $\fronorm{\mL_k \mR_k^\tranH - \mZ_{k,\natural}} \leq \frac{\delta}{\sqrt{K}}\sigma_r(\mZ_{k,\natural})$ for any $k\in[K]$. 
	Let $0< \eta_t \leq \frac{1}{20}$. Then the objective function obeys that
	\begin{align*}
		g(\mL_{t+1}, \mR_{t+1} )\leq \lb 1- \frac{\eta_t}{90}\rb g(\mL_t, \mR_t).
	\end{align*}
\end{lemma}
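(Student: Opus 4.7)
The plan is to invoke the Lipschitz smoothness of $g$ (Lemma \ref{smooth of g}) at $(\mL_t,\mR_t)$ along the Scaled-GD directions $\mV := -\eta_t \nabla f(\mL_t)(\mR_t^\tranH\mR_t)^{-1}$, $\mW := -\eta_t \nabla f(\mR_t)(\mL_t^\tranH\mL_t)^{-1}$, then split $\nabla f = \nabla g + \nabla h$ and balance the resulting terms using the PL-type lower bound of Lemma \ref{bound of nabla g} against the perturbation bound of Lemma \ref{upper bound of h}. By the very choice of $\mV,\mW$, the scaling matrices inside $d_\calS$ and $d_{\calS^\ast}$ align: $d_\calS^2(\mV,\mW) = \eta_t^2\, d_{\calS^\ast}^2(\nabla f(\mL_t),\nabla f(\mR_t))$, and the Euclidean inner product $\la\nabla g(\mL_t),\mV\ra + \la\nabla g(\mR_t),\mW\ra$ rewrites (via the cyclic property of the trace) as $-\eta_t$ times the scaled inner product between $\nabla g$ and $\nabla f$.

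Introduce the shorthand $a := d_{\calS^\ast}^2(\nabla g(\mL_t),\nabla g(\mR_t))$, $b := d_{\calS^\ast}^2(\nabla h(\mL_t),\nabla h(\mR_t))$, and $e := \fronorm{\mL_t\mR_t^\tranH - \mZ_{\natural}}$. Under the lemma's hypothesis, I would first check Lemma \ref{lemma bound L} to obtain $L\le 5$ in Lemma \ref{smooth of g}, then apply Lemma \ref{bound of nabla g} and Lemma \ref{upper bound of h} to get $a\ge e^2/9$ and $b\le e^2/25$. Substituting $\nabla f = \nabla g + \nabla h$ into the scaled inner product, expanding, and using Cauchy--Schwarz on the two cross terms yields $\la\nabla g(\mL_t),\mV\ra + \la\nabla g(\mR_t),\mW\ra \le -\eta_t a + \eta_t\sqrt{ab}$, while the triangle inequality gives $d_{\calS^\ast}^2(\nabla f)\le 2(a+b)$. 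Combining both with Lemma \ref{smooth of g} produces
\[
 g(\mL_{t+1},\mR_{t+1}) - g(\mL_t,\mR_t) \;\le\; -\eta_t\, a + \eta_t\sqrt{ab} + L\eta_t^2(a+b).
\]

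The decisive step is an asymmetric Young splitting of the cross term: using $\sqrt{ab}\le \tfrac14 a + b$ together with $L\eta_t \le 5\cdot\tfrac{1}{20} = \tfrac14$ collapses the right-hand side to $-\tfrac{\eta_t}{2}a + \tfrac{5\eta_t}{4}b$. Plugging in $a\ge e^2/9$ and $b\le e^2/25$, this becomes $\eta_t e^2\bigl(-\tfrac{1}{18}+\tfrac{1}{20}\bigr) = -\tfrac{\eta_t}{180}\,e^2 = -\tfrac{\eta_t}{90}\,g(\mL_t,\mR_t)$, which is exactly the claimed contraction.

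The main obstacle is precisely the cross-term balancing. A naive symmetric split $\sqrt{ab}\le \tfrac12(a+b)$ would wipe out the $-\eta_t a$ term, because $a$ is only lower-bounded by $e^2/9$ although its upper bound is as large as $2e^2$; one therefore has to trade coefficient on $a$ for an inflated coefficient on $b$, which is affordable only because Lemma \ref{upper bound of h} makes $b\le e^2/25$ strictly smaller than the PL lower bound of $a$. A minor bookkeeping item is checking that the constant $\delta$ in the standing hypothesis is small enough to simultaneously meet the more restrictive constraints of Lemmas \ref{bound of nabla g}, \ref{upper bound of h}, and \ref{lemma bound L}; this is just a matter of taking the minimum of the three thresholds and is inherited from the global induction in Theorem \ref{theorem: iteration}.
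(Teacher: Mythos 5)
Your proof is correct and follows essentially the same route as the paper: descend via Lemma \ref{smooth of g} with the Scaled-GD directions, split $\nabla f=\nabla g+\nabla h$, use Lemma \ref{lemma bound L} to get $L\le 5$, and then balance $d_{\calS^\ast}^2(\nabla g)\ge e^2/9$ (Lemma \ref{bound of nabla g}) against $d_{\calS^\ast}^2(\nabla h)\le e^2/25$ (Lemma \ref{upper bound of h}). The only cosmetic difference is that you first collapse the two cross inner products via Cauchy--Schwarz to $\eta_t\sqrt{ab}$ and then apply the asymmetric Young bound $\sqrt{ab}\le\tfrac14 a+b$, whereas the paper applies the equivalent bound $-\la\mA,\mB\ra\le\tfrac14\fronorm{\mA}^2+\fronorm{\mB}^2$ termwise before summing; both paths land on the identical intermediate estimate $-\tfrac{3\eta_t}{4}a+\eta_t b+L\eta_t^2(a+b)$ and the same final constant.
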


\begin{lemma}[Initialization]
	\label{theorem: initial}
	If the number of measurements satisfies $n\geq C_{\gamma}\rho^{-2} K^2s^2 r^2\mu_0 \mu_1  \kappa^2\log^2(sn)$,
	then,  with probability at least $1-(sn)^{-\gamma+1}$, there holds the following inequalities
	\begin{align*}
		\fronorm{\mZ_{\ell, 0} - \mZ_{\ell, \natural}} \leq\frac{\rho}{\sqrt{Ks}}\sigma_r(\mZ_{\ell, \natural}) ~ \forall  1\leq \ell\leq K, 
	\end{align*}
	where $\{\mZ_{\ell,0}\}_{\ell=1,\cdots, K}$ are obtained via Algorithm \ref{alg: SGD}.
	
\end{lemma}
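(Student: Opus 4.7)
Since each ground-truth block $\mZ_{\ell,\natural}$ has rank $r$, the strategy is to reduce the Frobenius error of the spectrally initialized estimate to an operator-norm error on the pre-projection matrix, at the cost of a $\sqrt{r}$ factor. Using that $\calP_r(\mM)-\mN$ has rank at most $2r$ whenever $\mN$ is rank $r$ and that $\opnorm{\calP_r(\mM)-\mM}\le \opnorm{\mM-\mN}$, I would first establish
\begin{align*}
\fronorm{\mZ_{\ell,0}-\mZ_{\ell,\natural}} \;\le\; \sqrt{2r}\,\opnorm{\mZ_{\ell,0}-\mZ_{\ell,\natural}} \;\le\; 2\sqrt{2r}\,\opnorm{\calH\calA_\ell^\ast\vy-\mZ_{\ell,\natural}},
\end{align*}
which accounts for the $r^2$ in the eventual sample complexity after squaring. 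Substituting $\vy=\sum_{k=1}^{K}\calA_k(\mX_{k,\natural})$ and $\mZ_{\ell,\natural}=\calH(\mX_{\ell,\natural})$, the residual splits as
\begin{align*}
\calH\calA_\ell^\ast\vy-\mZ_{\ell,\natural} \;=\; \calH\bigl((\calA_\ell^\ast\calA_\ell-\calI)\mX_{\ell,\natural}\bigr) \;+\; \sum_{k\neq\ell}\calH\calA_\ell^\ast\calA_k\mX_{k,\natural},
\end{align*}
where the self term is zero-mean because $\E{\vb_{\ell,j}\vb_{\ell,j}^\tranH}=\mI$ implies $\E{\calA_\ell^\ast\calA_\ell}=\calI$, and each cross summand is zero-mean since $\vb_{\ell,j}$ is independent of $\vb_{k,j}$ and $\E{\vb_{\ell,j}}=\bzero$.

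Each piece is then controlled by the non-commutative matrix Bernstein inequality. Expanding $\calA_\ell^\ast\calA_\ell\mX_{\ell,\natural}$ and $\calA_\ell^\ast\calA_k\mX_{k,\natural}$ as sums of $n$ independent rank-one matrices of the form $\vb_{\cdot,j}\vb_{\cdot,j}^\tranH\mX_{\cdot,\natural}\ve_j\ve_j^\tran$, the uniform bounds and variance proxies are computed by combining the $\mu_0$-incoherence of the sketching vectors with the $\mu_1$-incoherence of $(\mU_{k,\natural},\mV_{k,\natural})$, which caps $\max_j\fronorm{\mU_{k,\natural,j}}$ and $\max_\ell\twonorm{\ve_\ell^\tran\mV_{k,\natural}}$, together with the antidiagonal weights $w_i$ that $\calH$ imposes on its rows. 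For the self term this produces a tail bound of order $\sqrt{\mu_0\mu_1 sr\log(sn)/n}\,\opnorm{\mZ_{\ell,\natural}}$; applying Bernstein to the full cross sum $\sum_{k\neq\ell}\calH\calA_\ell^\ast\calA_k\mX_{k,\natural}$ (so variances across users add, rather than using a naive triangle inequality across users) introduces only an extra $\sqrt{K}$ factor, and $\max_k\opnorm{\mZ_{k,\natural}}\le\kappa\,\sigma_r(\mZ_{\ell,\natural})$ absorbs the condition number. A union bound over the $K\le sn$ choices of $\ell$ yields the stated probability. Feeding these operator-norm bounds into the rank-$r$ reduction gives
\begin{align*}
\fronorm{\mZ_{\ell,0}-\mZ_{\ell,\natural}}^2 \;\lesssim\; r\cdot K\,\frac{\mu_0\mu_1 sr\log(sn)}{n}\,\kappa^2\,\sigma_r^2(\mZ_{\ell,\natural}),
\end{align*}
and requiring the right-hand side to fall below $\rho^2\sigma_r^2(\mZ_{\ell,\natural})/(Ks)$ reproduces exactly $n\ge C_\gamma\rho^{-2}K^2s^2r^2\mu_0\mu_1\kappa^2\log^2(sn)$.

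The main obstacle I expect is assembling the cross-term variance proxy from the joint randomness of all $\{\vb_{k,j}\}_{k\neq\ell,\,j}$ so that only a $\sqrt{K}$ factor appears rather than a naive $K$ from a union across users, while simultaneously tracking the antidiagonal weights $w_i$ and the block-row structure that $\calH$ imposes on the summands. Both incoherence hypotheses are essential here: without $\mu_1$-incoherence one cannot uniformly bound $\twonorm{\ve_\ell^\tran\mV_{k,\natural}}$ or $\fronorm{\mU_{k,\natural,j}}$, and without $\mu_0$ one cannot control the quadratic interactions between $\vb_{\ell,j}$ and the block rows of $\mU_{k,\natural}$. Once these norms are in place, the Bernstein arithmetic and the final algebra are routine.
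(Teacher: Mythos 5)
Your decomposition, rank-$2r$ Frobenius-to-operator reduction, and Bernstein treatment of the self and cross terms are exactly the paper's argument (Appendix A.5, citing Lemma \ref{lemma initial 1} for the self term and applying matrix Bernstein directly to the aggregated cross sum $\sum_i \mY_i$ to harvest $\sqrt{K}$ rather than $K$), so this is essentially the same proof. One small bookkeeping slip: the variance proxy for the cross term already carries a $\log(sn)$ via Lemma \ref{lemma v6}, so after Bernstein the cross term scales as $\sqrt{K\mu_0 s\mu_1 r\log^2(sn)/n}$, which is where the $\log^2(sn)$ in the sample complexity comes from — your displayed bound has only a single $\log(sn)$ and does not quite reproduce the claimed exponent.
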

The proofs of these lemmas are deferred to Section \ref{proof key lemmas}.

\subsection{Proof of Theorem \ref{theorem: iteration}}
\label{proof iteration}
Now, we are in the position to prove our main results by induction. Firstly, we show \eqref{variable error} holds when $t=0$. According to Lemma \ref{theorem: initial}, one has     
\begin{align*}
\fronorm{\mL_{k,0}\mR_{k,0}^\tranH - \mZ_{k,\natural} }^2 \leq \frac{\rho^2}{K} \sigma_r^2(\mZ_{k,\natural}),\quad \forall 1\leq k\leq K.
\end{align*}
provided that $n\geq C_{\gamma}\rho^2K^2s^2r^2\kappa^2\mu_0\mu_1\log^2(sn)$. Thus we have
\begin{align*}
	\sum_{k=1}^{K} \fronorm{\mL_{k,0}\mR_{k,0}^\tranH - \mZ_{k,\natural} }^2 \leq \frac{\rho^2}{K} \sum_{k=1}^{K} \sigma_r^2(\mZ_{k,\natural}) = \frac{\rho^2\sigma^2_0}{K}.
\end{align*}
Moreover, we assume \eqref{variable error} holds for the iterations $0,1,\cdots, t$, and then prove it also hold for $t+1$. It is worth noting that 
\begin{align*}
	\lb 1- \frac{\eta_t}{90}\rb^t\frac{\delta^2\sigma_0^2}{K}&\geq\sum_{k=1}^{K}\fronorm{\mL_{k,t}\mR_{k,t}^\tranH-\mZ_{k, \natural}}^2 =\sum_{k=1}^{K}\sigma_r^2(\mZ_{k,\natural})\frac{\fronorm{\mL_{k,t}\mR_{k,t}^\tranH-\mZ_{k, \natural}}^2 }{\sigma_r^2(\mZ_{k,\natural})} \\
	& \geq \max_k \frac{\fronorm{\mL_{k,t}\mR_{k,t}^\tranH-\mZ_{k, \natural}}^2 }{\sigma_r^2(\mZ_{k,\natural})} \cdot \sigma^2_0,
\end{align*}
which implies that 
\begin{align*}
	\max_k \frac{\fronorm{\mL_{k,t}\mR_{k,t}^\tranH-\mZ_{k, \natural}}^2 }{\sigma_r^2(\mZ_{k,\natural})}  \leq \lb 1- \frac{\eta_t}{90}\rb^t\frac{\delta^2}{K}\leq  \frac{\delta^2}{K}.
\end{align*}
Therefore, Lemma \ref{lemma linear convergence} implies that 
\begin{align*}
	\sum_{k=1}^{K} \fronorm{\mL_{k,t+1}\mR_{k,t+1}^\tranH - \mZ_{k,\natural} }^2 \leq \lb 1-\frac{\eta_t}{90}\rb \sum_{k=1}^{K} \fronorm{\mL_{k,t}\mR_{k,t}^\tranH - \mZ_{k,\natural} }^2 \leq \lb 1-\frac{\eta_t}{90}\rb^{t+1} \frac{\delta^2\sigma_0^2}{K},
\end{align*}
which completes the proof.

\section{Conclusion}
\label{sec conclusion}
In this work, we investigate the simultaneous blind super-resolution and demixing problem, formulating it as a low-rank matrix recovery problem. We introduce a novel non-convex algorithm, termed scaled gradient descent without balance, specifically designed to address this problem. We establish the sample complexity and provide a linear convergence guarantee for the proposed algorithm, notably demonstrating that the convergence rate is independent of the condition number of the target matrices. The empirical performance of our algorithm has been validated through extensive experiments.

\subsection*{Acknowledge}
The author expresses gratitude to Jinsheng Li for identifying an error during the preparation of this work, which has significantly improved its quality. Additionally, the author thanks Ke Wei and Xu Zhang for their valuable discussions.

\appendix
\section{Proof of key lemmas}
\label{proof key lemmas}
\subsection{Proof of Lemma \ref{smooth of g}}
A direct computation yields that 
\begin{align*}
	(\mL+\mV)(\mR+\mW)^\tranH - \mZ_{\natural} = \mL\mR^\tranH - \mZ_{\natural} + \mL\mW^\tranH + \mV\mR^\tranH + \mV\mW^\tranH.
\end{align*}
Then the objective function $g$ at $(\mL+\mV, \mR+\mW)$ can be expressed as follows:
\begin{align*}
	g(\mL+\mV, \mR+\mW) &= \frac{1}{2} \fronorm{(\mL+\mV)(\mR+\mW)^\tranH - \mZ_{\natural}}^2 \\
	&=\frac{1}{2} \fronorm{\mL\mR^\tranH - \mZ_{\natural} + \mL\mW^\tranH + \mV\mR^\tranH + \mV\mW^\tranH}^2\\
	&=\frac{1}{2} \fronorm{\mL\mR^\tranH - \mZ_{\natural} }^2+ \la\mL\mR^\tranH - \mZ_{\natural} ,  \mL\mW^\tranH + \mV\mR^\tranH + \mV\mW^\tranH\ra +\frac{1}{2} \fronorm{ \mL\mW^\tranH + \mV\mR^\tranH + \mV\mW^\tranH}^2\\
	&=g(\mL,\mR)+ \la \nabla g(\mL), \mV\ra + \la \nabla g(\mR), \mW\ra\\
	&\quad +\la\mL\mR^\tranH - \mZ_{\natural} ,  \mV\mW^\tranH\ra +\frac{1}{2} \fronorm{ \mL\mW^\tranH + \mV\mR^\tranH + \mV\mW^\tranH}^2\\
	&\leq g(\mL,\mR)+ \la \nabla g(\mL), \mV\ra + \la \nabla g(\mR), \mW\ra\\
	&\quad +\underbrace{\fronorm{\mL\mR^\tranH - \mZ_{\natural} }\cdot \fronorm{ \mV\mW^\tranH} +\frac{3}{2} \lb \fronorm{ \mL\mW^\tranH }^2+ \fronorm{ \mV\mR^\tranH}^2  +\fronorm{ \mV\mW^\tranH}^2\rb}_{:=\Gamma},
\end{align*}
where we have used the fact that $\nabla g(\mL) = (\mL\mR^\tranH - \mZ_{\natural})\mR$ and $\nabla g(\mL) = (\mL\mR^\tranH - \mZ_{\natural})^\tranH \mL$.  In what follows, we control $\Gamma$.
\begin{itemize}
	\item Bounding of $\fronorm{ \mL\mW^\tranH }^2$ and  $\fronorm{ \mV\mR^\tranH}^2$. 
	Since $\mL = \blkdiag(\mL_1, \cdots, \mL_K)$ and $\mW =\blkdiag(\mW_1,\cdots, \mW_K)$, we have
	\begin{align*}
		\fronorm{\mL\mW^\tranH}^2 &=\sum_{k=1}^{K} \fronorm{\mL_k\mW_k^\tranH}^2 =\sum_{k=1}^{K} \fronorm{\mL_k(\mL_k^\tranH\mL_k)^{-\frac{1}{2}} (\mL_k^\tranH\mL_k)^{\frac{1}{2}}\mW_k^\tranH}^2\\
		&\leq \sum_{k=1}^{K} \lb \opnorm{\mL_k(\mL_k^\tranH\mL_k)^{-\frac{1}{2}}}\cdot \fronorm{ \mW_k(\mL_k^\tranH\mL_k)^{\frac{1}{2}}}^2 \rb\\
		&\leq \sum_{k=1}^{K}\fronorm{\mW_k(\mL_k^\tranH\mL_k)^{\frac{1}{2}}}^2\\
		&=\fronorm{\mW(\mL^\tranH\mL)^{\frac{1}{2}}}^2, \numberthis\label{LW}
	\end{align*}
	where we have used the fact $  \opnorm{\mL_k(\mL_k^\tranH\mL_k)^{-\frac{1}{2}}}  \leq 1$. Similarly, one has
	\begin{align*}		
		\fronorm{\mV\mR^\tranH}^2&\leq \fronorm{\mV(\mR^\tranH\mR)^{\frac{1}{2}}}^2.\numberthis\label{VR}
	\end{align*}
	Combining \eqref{LW} and \eqref{VR} together yields that 
	\begin{align*}
		\fronorm{\mL\mW^\tranH}^2 + \fronorm{\mV\mR^\tranH}^2 \leq \fronorm{\mW(\mL^\tranH\mL)^{\frac{1}{2}}}^2 + \fronorm{\mV(\mR^\tranH\mR)^{\frac{1}{2}}}^2 = d_{\calS}^2(\mV,\mW).\numberthis\label{LWVR}
	\end{align*}
	\item Bounding of $\fronorm{ \mV\mW^\tranH}$. A direct computation yields that
	\begin{align*}
		\fronorm{\mV\mW^\tranH} &= 	\fronorm{\mV(\mR^\tranH\mR)^{\frac{1}{2}}(\mR^\tranH\mR)^{-\frac{1}{2}}(\mL^\tranH\mL)^{-\frac{1}{2}}(\mL^\tranH\mL)^{\frac{1}{2}}\mW^\tranH}\\
		&\leq \fronorm{\mV(\mR^\tranH\mR)^{\frac{1}{2}}}\cdot \opnorm{(\mR^\tranH\mR)^{-\frac{1}{2}}(\mL^\tranH\mL)^{-\frac{1}{2}}}\cdot \fronorm{\mW(\mL^\tranH\mL)^{\frac{1}{2}}}\\
		&\stackrel{(a)}{\leq}\frac{1}{(1-\delta)\min_k \sigma_r(\mZ_{k, \natural})}\fronorm{\mV(\mR^\tranH\mR)^{\frac{1}{2}}}\cdot\fronorm{\mW(\mL^\tranH\mL)^{\frac{1}{2}}}\\
		&\stackrel{(b)}{\leq }\frac{1}{2(1-\delta)\min_k \sigma_r(\mZ_{k, \natural})}\lb \fronorm{\mV(\mR^\tranH\mR)^{\frac{1}{2}}}^2 + \fronorm{\mW(\mL^\tranH\mL)^{\frac{1}{2}}}^2 \rb\\
		&=\frac{1}{2(1-\delta)\min_k \sigma_r(\mZ_{k, \natural})} d_{\calS}^2(\mV,\mW)\\
		&\leq  \frac{1}{\min_k \sigma_r(\mZ_{k, \natural})} d_{\calS}^2(\mV,\mW),\numberthis\label{VW},
	\end{align*}
	where step (a) is due to the fact that
	\begin{align}
		\label{fact 1}
		\opnorm{(\mR^\tranH\mR)^{-\frac{1}{2}} (\mL^\tranH\mL)^{-\frac{1}{2}} }\leq \frac{1}{(1-\delta)\min_k \sigma_r(\mZ_{k, \natural})},
	\end{align}
	step (b) follows from that $ab\leq \frac{1}{2}(a^2+b^2)$ for any $a,b$, and the last line is due to $\delta\leq \frac{1}{2}$. Moreover, the fact \eqref{fact 1} can be proved as follows: Firstly, the Weyl's theorem yields that 
	\begin{align*}
		\left| \sigma_r(\mL_k\mR_k^\tranH) - \sigma_r(\mZ_{k, \natural})\right| \leq \opnorm{\mL_k\mR_k^\tranH - \mZ_{k, \natural}}\leq  \fronorm{\mL_k\mR_k^\tranH - \mZ_{k, \natural}} \leq \frac{\delta}{\sqrt{K}}\sigma_r(\mZ_{k, \natural})\leq \delta\sigma_r(\mZ_{k, \natural}),
	\end{align*}
	which implies that 
	\begin{align*}
		\sigma_r(\mL_k\mR_k^\tranH) &\geq \lb1- \delta \rb\sigma_r(\mZ_{k, \natural}),\\
		\min_k \sigma_r(\mL_k\mR_k^\tranH) &\geq (1-\delta)\min_k \sigma_r(\mZ_{k, \natural}).
	\end{align*}
	Secondly, a direct computation yields that
	\begin{align*}
		\opnorm{(\mR^\tranH\mR)^{-\frac{1}{2}} (\mL^\tranH\mL)^{-\frac{1}{2}} }&=\opnorm{\mR(\mR^\tranH\mR)^{-1}(\mL^\tranH\mL)^{-1}\mL}\\
		&=\opnorm{{\mR^\tranH}^\dagger \mL^\dagger}\\
		&=\opnorm{\begin{bmatrix}
				(\mL_1\mR_1^\tranH)^\dagger &&\\
				&\ddots &\\
				&&(\mL_K\mR_K^\tranH)^\dagger\\
		\end{bmatrix}}\\
		&=\frac{1}{\min_k \sigma_r(\mL_k\mR_k^\tranH)} \\
		&\leq \frac{1}{(1-\delta)\min_k \sigma_r(\mZ_{k, \natural})},
	\end{align*}
\end{itemize}
Combining together yields that
\begin{align*}
	\Gamma &=\fronorm{\mL\mR^\tranH - \mZ_{\natural} }\cdot \fronorm{ \mV\mW^\tranH} +\frac{3}{2} \lb \fronorm{ \mL\mW^\tranH }^2+ \fronorm{ \mV\mR^\tranH}^2  +\fronorm{ \mV\mW^\tranH}^2\rb\\
	&\leq \frac{\fronorm{\mL\mR^\tranH - \mZ_{\natural} }}{\min_k \sigma_r(\mZ_{k, \natural})} d_{\calS}^2(\mV,\mW) + \frac{3}{2} d_{\calS}^2(\mV,\mW) +  \frac{3}{2\min_k \sigma_r^2(\mZ_{k, \natural})} d_{\calS}^4(\mV,\mW)\\
	&=\lb \frac{3}{2} + \frac{\fronorm{\mL\mR^\tranH - \mZ_{\natural} }}{\min_k \sigma_r(\mZ_{k, \natural})}  +\frac{3d_{\calS}^2(\mV, \mW)}{2\min_k \sigma_r^2(\mZ_{k, \natural})} \rb d_{\calS}^2(\mV, \mW)\\
	:&=\frac{L}{2} d_{\calS}^2(\mV, \mW),
\end{align*}
where $L=3+ \frac{2\fronorm{\mL\mR^\tranH - \mZ_{\natural} }}{\min_k \sigma_r(\mZ_{k, \natural})}  +\frac{3d_{\calS}^2(\mV, \mW)}{\min_k \sigma_r^2(\mZ_{k, \natural})} $.
Thus we have
\begin{align*}
	g(\mL+\mV, \mR+\mW) &\leq g(\mL,\mR) + \la \nabla g(\mL), \mV\ra +\la\nabla g(\mR), \mW\ra+ \frac{L}{2} d_{\calS}^2(\mV, \mW),
\end{align*}
which completes the proof.

\subsection{Proof of Lemma \ref{bound of nabla g}}
By the definition of $d_{\calS^\ast}(\cdot, \cdot)$ in \eqref{def dual ds}, we have
	\begin{align*}
		d_{\calS^\ast}^2(\nabla g(\mL), \nabla g(\mR)) &=\fronorm{ \lb \mL\mR^\tranH - \mZ_{\natural} \rb \mR(\mR^\tranH\mR)^{-\frac{1}{2}} }^2+\fronorm{ \lb \mL\mR^\tranH - \mZ_{\natural} \rb^\tranH \mL(\mL^\tranH\mL)^{-\frac{1}{2}} }^2 \\
		&\leq 2\fronorm{\mL\mR^\tranH - \mZ_{\natural}}^2 ,
	\end{align*}
	where we have used the fact that $\opnorm{\mR(\mR^\tranH\mR)^{-\frac{1}{2}}}\leq 1$ and $\opnorm{\mL(\mL^\tranH\mL)^{-\frac{1}{2}}}\leq 1$. 
	
	Secondly, we show the lower bound. Notice that 
	\begin{align*}
		d_{\calS^\ast}(\nabla g(\mL), \nabla g(\mR))&=\sqrt{\fronorm{ \lb \mL\mR^\tranH - \mZ_{\natural} \rb \mR(\mR^\tranH\mR)^{-\frac{1}{2}} }^2+\fronorm{ \lb \mL\mR^\tranH - \mZ_{\natural} \rb^\tranH \mL(\mL^\tranH\mL)^{-\frac{1}{2}} }^2}\\
		&=\fronorm{\begin{bmatrix}
				\lb \mL\mR^\tranH - \mZ_{\natural} \rb \mR(\mR^\tranH\mR)^{-\frac{1}{2}}\\
				\lb \mL\mR^\tranH - \mZ_{\natural} \rb^\tranH \mL(\mL^\tranH\mL)^{-\frac{1}{2}} \\
		\end{bmatrix}}\\
		&=\sup_{\mY = \begin{bmatrix}
				\mY_1^\tran &\mY_2^\tran
			\end{bmatrix}^\tran:\fronorm{\mY}=1} \left| \la  \begin{bmatrix}
			\lb \mL\mR^\tranH - \mZ_{\natural} \rb \mR(\mR^\tranH\mR)^{-\frac{1}{2}}\\
			\lb \mL\mR^\tranH - \mZ_{\natural} \rb^\tranH \mL(\mL^\tranH\mL)^{-\frac{1}{2}} \\
		\end{bmatrix}, \begin{bmatrix}
			\mY_1\\
			\mY_2\\
		\end{bmatrix}\ra\right|\\
		&=\sup_{\mY = \begin{bmatrix}
				\mY_1^\tran &\mY_2^\tran
			\end{bmatrix}^\tran:\fronorm{\mY}=1} \left| \la \mL\mR^\tranH - \mZ_{\natural} , \mY_1(\mR^\tranH\mR)^{-\frac{1}{2}}\mR^\tranH+ \mL(\mL^\tranH\mL)^{-\frac{1}{2}}\mY_2\ra  \right|.
	\end{align*}
	Consider a spectral $\mY$ such that
	\begin{align*}
		\mY &= \frac{1}{d_{\calS}(\mDelta, \mUpsilon)}\begin{bmatrix}
			\mDelta(\mR^\tranH\mR)^{\frac{1}{2}}\\
			\mUpsilon(\mL^\tranH\mL)^{\frac{1}{2}}\\
		\end{bmatrix},
	\end{align*}
	where $\mDelta=\mL\mO - \mL_{\natural} $ and $\mUpsilon=\mR\mO - \mR_{\natural}$. It is easy to see that $\fronorm{\mY}=1$. As a result, we have
	\begin{align*}
		d_{\calS^\ast}(\nabla g(\mL), \nabla g(\mR))&\geq \frac{1}{d_{\calS}(\mDelta, \mUpsilon)} \left| \la \mL\mR^\tranH - \mZ_{\natural} , \mDelta\mR^\tranH+ \mL\mUpsilon^\tranH\ra  \right|.
	\end{align*}
	A direct computation yields that 
	\begin{align*}
		\mL\mR^\tranH - \mZ_{\natural} &= (\mL_{\natural}+\mDelta)(\mR_{\natural}+\mUpsilon)^\tranH - \mL_{\natural}\mR_{\natural}^\tranH = \mL_{\natural}\mUpsilon^\tranH + \mDelta\mR_{\natural}^\tranH + \mDelta\mUpsilon^\tranH ,\\
		\mDelta\mR^\tranH +\mL\mUpsilon^\tranH &=\mDelta(\mR_{\natural}+\mUpsilon)^\tranH + (\mL_{\natural}+\mDelta)\mUpsilon^\tranH = \mL_{\natural}\mUpsilon^\tranH + \mDelta\mR_{\natural}^\tranH + 2\mDelta\mUpsilon^\tranH.
	\end{align*}
	Thus
	\begin{align*}
		\left| \la \mL\mR^\tranH - \mZ_{\natural} , \mDelta\mR^\tranH+ \mL\mUpsilon^\tranH\ra  \right| &= \left| \la \mL_{\natural}\mUpsilon^\tranH + \mDelta\mR_{\natural}^\tranH + \mDelta\mUpsilon^\tranH,\mL_{\natural}\mUpsilon^\tranH + \mDelta\mR_{\natural}^\tranH + 2\mDelta\mUpsilon^\tranH \ra\right|\\
		&= \fronorm{\mL_{\natural}\mUpsilon^\tranH + \mDelta\mR_{\natural}^\tranH}^2 + 2\fronorm{ \mDelta\mUpsilon^\tranH}^2 +3\la \mL_{\natural}\mUpsilon^\tranH + \mDelta\mR_{\natural}^\tranH,  \mDelta\mUpsilon^\tranH\ra_R\\
		&\geq \fronorm{\mL_{\natural}\mUpsilon^\tranH + \mDelta\mR_{\natural}^\tranH}^2 + 2\fronorm{ \mDelta\mUpsilon^\tranH}^2 -3\fronorm{\mL_{\natural}\mUpsilon^\tranH + \mDelta\mR_{\natural}^\tranH}\cdot \fronorm{  \mDelta\mUpsilon^\tranH}\\
		&\geq \frac{3}{4}\fronorm{\mL_{\natural}\mUpsilon^\tranH + \mDelta\mR_{\natural}^\tranH}^2 -7\fronorm{ \mDelta\mUpsilon^\tranH}^2,\numberthis \label{tmp2}
	\end{align*}
	where the last line is due to $a^2 + 2b^2 -3ab\geq \lb1-\frac{3 \alpha}{2}\rb a^2 - \lb \frac{3}{2\alpha} -2\rb b^2$  with $\alpha=\frac{1}{6}$.
	According to Lemma \ref{lemma phi and psi} , one has
	\begin{align*}
		\fronorm{\mL_{\natural}\mUpsilon^\tranH + \mDelta\mR_{\natural}^\tranH} &\geq \lb 1-\frac{6\kappa\delta}{\sqrt{K}}\rb  \fronorm{\mL\mR^\tranH - \mZ_{\natural}} \geq (1-6\kappa\delta)\fronorm{\mL\mR^\tranH - \mZ_{\natural}}\geq0,\\
		\fronorm{\mDelta\mUpsilon^\tranH} &\leq \frac{6\kappa\delta}{\sqrt{K}}\fronorm{\mL\mR^\tranH - \mZ_{\natural}}\leq 6\kappa\delta\fronorm{\mL\mR^\tranH - \mZ_{\natural}}
	\end{align*}
	provided that $\delta\leq \frac{1}{6\kappa}$. Plugging these into \eqref{tmp2} yields that
	\begin{align*}
		\left| \la \mL\mR^\tranH - \mZ_{\natural} , \mDelta\mR+ \mL\mUpsilon^\tranH\ra  \right| &\geq\frac{3}{4} \fronorm{\mL_{\natural}\mUpsilon^\tranH + \mDelta\mR_{\natural}^\tranH}^2 -7 \fronorm{ \mDelta\mUpsilon^\tranH}^2\\
		&\geq \lb \frac{3}{4} (1-6\kappa\delta)^2- 7\cdot (6\kappa\delta)^2\rb \fronorm{\mL\mR^\tranH - \mZ_{\natural}}^2\\
		&\geq \lb \frac{3}{4}(1-12\kappa\delta) -  7\cdot 6\kappa\delta\rb \fronorm{\mL\mR^\tranH - \mZ_{\natural}}^2\\
		&=\lb \frac{3}{4} - 51\kappa\delta \rb\fronorm{\mL\mR^\tranH - \mZ_{\natural}}^2,
	\end{align*}
	where we have used that $\delta\leq\frac{1}{6\kappa}$, i.e, $(6\kappa\delta)^2\leq 6\kappa\delta\leq 1$.
	Moreover, one has
	\begin{align*}
		d_{\calS}^2(\mDelta, \mUpsilon) &=\fronorm{\mDelta(\mR^\tranH\mR)^{\frac{1}{2}}}^2 + \fronorm{\mUpsilon(\mL^\tranH\mL)^{\frac{1}{2}}}^2\\
		&=\fronorm{\mDelta(\mR^\tranH\mR)^{\frac{1}{2}}}^2 + \fronorm{\mUpsilon(\mL^\tranH\mL)^{\frac{1}{2}}}^2.
	\end{align*}
	Notice that 
	\begin{align*}
		\fronorm{\mDelta(\mR^\tranH\mR)^{\frac{1}{2}}}&= \fronorm{\mDelta\bSigma_{\natural}^{\frac{1}{2}}\bSigma_{\natural}^{-\frac{1}{2}}(\mR^\tranH\mR)^{\frac{1}{2}}}\\
		&\leq \fronorm{\mDelta\bSigma_{\natural}^{\frac{1}{2}}}\cdot \opnorm{\bSigma_{\natural}^{-\frac{1}{2}}(\mR^\tranH\mR)^{\frac{1}{2}}}\\
		&=\fronorm{\mDelta\bSigma_{\natural}^{\frac{1}{2}}}\cdot \opnorm{\mR\bSigma_{\natural}^{-\frac{1}{2}}}\\
		&\leq \fronorm{\mDelta\bSigma_{\natural}^{\frac{1}{2}}}\cdot \opnorm{(\mR_{\natural}+\mUpsilon)\bSigma_{\natural}^{-\frac{1}{2}}}\\
		&\leq \fronorm{\mDelta\bSigma_{\natural}^{\frac{1}{2}}}\cdot \lb 1+ \opnorm{ \mUpsilon \bSigma_{\natural}^{-\frac{1}{2}}} \rb\\
		&\stackrel{(a)}{\leq }\lb 1+\frac{2\kappa\delta}{\sqrt{K}}\rb\fronorm{\mDelta\bSigma_{\natural}^{\frac{1}{2}}},
	\end{align*}
	where step (a) follows from Lemma \ref{lemma a}. Using the same argument, one has
	\begin{align*}
		\fronorm{\mUpsilon(\mL^\tranH\mL)^{\frac{1}{2}}} &\leq \lb 1+\frac{2\kappa\delta}{\sqrt{K}}\rb\fronorm{\mUpsilon\bSigma_{\natural}^{\frac{1}{2}}}.
	\end{align*}
	Combining together yields that 
	\begin{align*}
		d_{\calS}(\mDelta, \mUpsilon) &\leq \sqrt{\fronorm{\mDelta(\mR^\tranH\mR)^{\frac{1}{2}}}^2 +\fronorm{\mUpsilon(\mL^\tranH\mL)^{\frac{1}{2}}}^2 }\\
		&\leq \lb 1+\frac{2\kappa\delta}{\sqrt{K}}\rb \sqrt{\fronorm{\mDelta\bSigma_{\natural}^{\frac{1}{2}}}^2 +  \fronorm{\mUpsilon\bSigma_{\natural}^{\frac{1}{2}}}^2} \\
		&\leq \lb 1+\frac{2\kappa\delta}{\sqrt{K}}\rb \sqrt{(\sqrt{2}+1)}\fronorm{\mL\mR^\tranH - \mZ_{\natural}}\\
		&\leq 2\lb 1+\frac{2\kappa\delta}{\sqrt{K}}\rb\fronorm{\mL\mR^\tranH - \mZ_{\natural}}\\
		&\leq 2\lb 1+2\kappa\delta \rb\fronorm{\mL\mR^\tranH - \mZ_{\natural}},
	\end{align*}
	where the third line is due to Lemma \ref{lemma 24 cong}. Therefore, one has
	\begin{align*}
		d_{\calS^\ast}(\nabla g(\mL), \nabla g(\mR)) &\geq \frac{1}{d_{\calS}(\mDelta, \mUpsilon)}\left| \la \mL\mR^\tranH - \mZ_{\natural} , \mDelta\mR+ \mL\mUpsilon^\tranH\ra  \right| \\
		&\geq \frac{1}{2\lb 1+2\kappa\delta \rb\fronorm{\mL\mR^\tranH - \mZ_{\natural}}}\cdot\lb \frac{3}{4} - 51\kappa\delta \rb\fronorm{\mL\mR^\tranH - \mZ_{\natural}}^2\\
		&=\frac{3(1-68\kappa\delta)}{8(1+2\kappa\delta)}\fronorm{\mL\mR^\tranH - \mZ_{\natural}}^2\\
		&\geq \frac{1}{3} \fronorm{\mL\mR^\tranH - \mZ_{\natural}}^2,
	\end{align*}
	where the last line follows from the assumption $\delta\leq \frac{1}{(16+68\cdot 9)\kappa}$.

\subsection{Proof of Lemma \ref{upper bound of h}}
Let $\mZ_{k,\natural} = \mU_{k,\natural}\bSigma_{k,\natural}\mV_{k,\natural}$ be the compact SVD of $\mZ_{k,\natural}$, where $\mU_{k,\natural}\in\C^{sn_1\times r}$ and $\mV_{k,\natural}\in\C^{n_2\times r}$. The tangent space $T_k$ at $\mZ_{k,\natural}$ along with its orthogonal complement is defined as follows
\begin{align*}
	T_k :&= \left\{ \mZ_k\in\C^{sn_1\times n_2}~:~\mZ_k=\mU_{k,\natural} \mA^\tranH + \mB\mV_{k,\natural}^\tranH, \mA\in\C^{n_2\times r}, \mB\in\C^{sn_1\times r} \right\},\\
	T_k^\perp:&=\left\{ \lb \mI_{sn_1} - \mU_{k,\natural}\mU_{k,\natural}^\tranH \rb \mZ_k \lb \mI_{n_2} - \mV_{k,\natural} \mV_{k,\natural}^\tranH \rb~:~ 
	\mZ_k \in\C^{sn_1\times n_2} \right\}.
\end{align*}
For any block diagonal matrix $\mZ=\blkdiag(\mZ_1,\cdots, \mZ_K)$, we say $\mZ\in T$ if 
\begin{align}
	\label{tangent}
	T:=\left\{ \mZ: \mZ_k\in T_k, k=1,\cdots, K\right\}
\end{align}
and $\mZ\in T^\perp$ if 
\begin{align}
	\label{complement}
	T^\perp:= \left\{ \mZ: \mZ_k\in T_k^\perp, k=1,\cdots, K\right\}.
\end{align}
By the definition, one has
	\begin{align*}
		d_{\calS^\ast}(\nabla h(\mL), \nabla h(\mR)) 
		&=\fronorm{\begin{bmatrix}
				\lb \tG \lb \calA^\ast\calA-\calI \rb\tG^\ast(\mL\mR^\tranH - \mZ_{\natural}) \rb\mR(\mR^\tranH\mR)^{-\frac{1}{2}}\\
				\lb \tG \lb \calA^\ast\calA-\calI \rb\tG^\ast(\mL\mR^\tranH - \mZ_{\natural}) \rb^\tranH\mL(\mL^\tranH\mL)^{-\frac{1}{2}} \\
		\end{bmatrix}}\\
		&=\sup_{\mY} \la \begin{bmatrix}
			\lb \tG \lb \calA^\ast\calA-\calI \rb\tG^\ast(\mL\mR^\tranH - \mZ_{\natural}) \rb\mR(\mR^\tranH\mR)^{-\frac{1}{2}}\\
			\lb \tG \lb \calA^\ast\calA-\calI \rb\tG^\ast(\mL\mR^\tranH - \mZ_{\natural}) \rb^\tranH\mL(\mL^\tranH\mL)^{-\frac{1}{2}} \\
		\end{bmatrix}, \begin{bmatrix}
			\mY_1\\
			\mY_2\\
		\end{bmatrix}\ra\\
		&=\sup_{\mY} \underbrace{\la \tG \lb \calA^\ast\calA-\calI \rb\tG^\ast(\mL\mR^\tranH - \mZ_{\natural}), \mY_1(\mR^\tranH\mR)^{-\frac{1}{2}}\mR^\tranH+\mL(\mL^\tranH\mL)^{-\frac{1}{2}}\mY_2^\tranH\ra}_{:=I_2}, \numberthis\label{h1}
	\end{align*}
	where $\mY = \begin{bmatrix}
		\mY_1^\tran & \mY_2^\tran
	\end{bmatrix}^\tran$ such that $\fronorm{\mY}=1$. We will apply Lemma \ref{lemma inner product} to control $I_2$. Firstly, it can be seen that $I_2$ can be decomposed as follows:
	\begin{align*}
		\mL\mR^\tranH - \mZ_{\natural} &= (\mL_{\natural}+\mDelta)(\mR_{\natural}+\mUpsilon)^\tranH - \mL_{\natural}\mR_{\natural}^\tranH = \underbrace{\mL_{\natural}\mUpsilon^\tranH + \mDelta\mR_{\natural}^\tranH}_{:=\mPhi_1} + \underbrace{\mDelta\mUpsilon^\tranH }_{:=\mPsi_1},\\
		\mY_1(\mR^\tranH\mR)^{-\frac{1}{2}}\mR^\tranH+\mL(\mL^\tranH\mL)^{-\frac{1}{2}}\mY_2^\tranH&=\mY_1(\mR^\tranH\mR)^{-\frac{1}{2}}(\mR_\natural+ \mUpsilon)^\tranH+(\mL_{\natural}+\mDelta)(\mL^\tranH\mL)^{-\frac{1}{2}}\mY_2\\
		&=\underbrace{\mY_1(\mR^\tranH\mR)^{-\frac{1}{2}}\mR_\natural^\tranH + \mL_{\natural} (\mL^\tranH\mL)^{-\frac{1}{2}}\mY_2^\tranH}_{:=\mPhi_2} + \underbrace{\mY_1(\mR^\tranH\mR)^{-\frac{1}{2}}\mUpsilon^\tranH + \mDelta(\mL^\tranH\mL)^{-\frac{1}{2}}\mY_2}_{:=\mPsi_2}.
	\end{align*}
	According to the definition, it can be seen that $\mPhi_1, \mPhi_2\in T$ and $\mPsi_1, \mPsi_2\in T^\perp$. Therefore, Lemma \ref{lemma phi and psi} implies that
	\begin{align*}
		\fronorm{\mPhi_1} &=\fronorm{\mL_{\natural}\mUpsilon^\tranH +\mDelta\mR_{\natural}^\tranH}\leq (1+6\kappa\delta)\fronorm{\mL\mR^\tranH - \mZ_{\natural}},\\
		\fronorm{\mPsi_1}&=\fronorm{\mDelta\mUpsilon^\tranH} \leq \frac{6\kappa\delta}{\sqrt{K}} \fronorm{\mL\mR^\tranH - \mZ_{\natural}}.
	\end{align*}
	Furthermore, one has
	\begin{align*}
		\fronorm{\mPhi_2} &\leq \fronorm{\mY_1}\cdot \opnorm{\mR_{\natural}(\mR^\tranH\mR)^{-\frac{1}{2}}} + \fronorm{\mY_2}\cdot \opnorm{\mL_{\natural}(\mL^\tranH\mL)^{-\frac{1}{2}}}\leq \opnorm{\mR_{\natural}(\mR^\tranH\mR)^{-\frac{1}{2}}} +  \opnorm{\mL_{\natural}(\mL^\tranH\mL)^{-\frac{1}{2}}}
		\stackrel{(a)}{\leq} \frac{2}{1-2\kappa\delta},\\
		\fronorm{\mPsi_2}&\leq \fronorm{\mY_1}\cdot \opnorm{(\mR^\tranH\mR)^{-\frac{1}{2}}\mUpsilon^\tranH} + \opnorm{\mDelta(\mL^\tranH\mL)^{-\frac{1}{2}}}\cdot \fronorm{\mY_2}\leq \opnorm{(\mR^\tranH\mR)^{-\frac{1}{2}}\mUpsilon^\tranH} + \opnorm{\mDelta(\mL^\tranH\mL)^{-\frac{1}{2}}} 
		\stackrel{(b)}{\leq}\frac{1}{\sqrt{K}}\frac{4\kappa\delta}{1-2\kappa\delta},
	\end{align*}
	where step (a) and step (b) follows from Lemma \ref{fact 4}. Finally, by applying Lemma \ref{lemma inner product}, one has
	\begin{align*}
		I_2&=\la  \tG \lb \calA^\ast\calA-\calI \rb\tG^\ast\lb \mPhi_1 + \mPsi_1 \rb, \lb \mPhi_2 + \mPsi_2\rb\ra\\
		&\leq \epsilon \cdot \fronorm{\mPhi_1} \cdot  \fronorm{\mPhi_2}  + \sqrt{2K\mu_0s(1+\epsilon)}\cdot \lb \fronorm{\mPhi_1} \cdot \fronorm{\mPsi_2} +\fronorm{ \mPsi_1}\cdot \fronorm{\mPhi_2} \rb  + 2K\mu_0s \fronorm{\mPsi_1}\cdot \fronorm{\mPsi_2}\\
		&\leq \epsilon\cdot (1+6\kappa\delta)\fronorm{\mL\mR^\tranH - \mZ_{\natural}}\cdot \frac{2}{1-2\kappa\delta} \\
		&\quad + \sqrt{2K\mu_0s\cdot 2}\cdot \lb (1+6\kappa\delta)\fronorm{\mL\mR^\tranH - \mZ_{\natural}}\cdot \frac{1}{\sqrt{K}}\frac{4\kappa\delta}{1-2\kappa\delta}+\frac{6\kappa\delta}{\sqrt{K}}\fronorm{\mL\mR^\tranH-\mZ_{\natural}}\cdot \frac{2}{1-2\kappa\delta}\rb\\
		&\quad + 2K\mu_0s \cdot \frac{6\kappa\delta}{\sqrt{K}}\fronorm{\mL\mR^\tranH - \mZ_{\natural}}\cdot \frac{1}{\sqrt{K}}\frac{4\kappa\delta}{1-2\kappa\delta}\\
		&=\lb \frac{2\epsilon(1+6\kappa\delta)}{1-2\kappa\delta} + \sqrt{4K\mu_0s}\cdot\frac{1}{\sqrt{K}} \frac{4\kappa\delta(1+6\kappa\delta+3)}{1-2\kappa\delta} + 2K\mu_0s\cdot \frac{1}{K}\frac{24\kappa^2\delta^2}{1-2\kappa\delta} \rb \fronorm{\mL\mR^\tranH - \mZ_{\natural}}\\
		&=\lb \frac{2\epsilon(1+6\kappa\delta)}{1-2\kappa\delta} +  \frac{8\sqrt{\mu_0s}\kappa\delta(4+6\kappa\delta+6\sqrt{\mu_0s}\kappa\delta)}{1-2\kappa\delta} \rb \fronorm{\mL\mR^\tranH - \mZ_{\natural}}\\
		&\leq \lb \frac{2\epsilon(1+6\kappa\delta)}{1-2\kappa\delta} +  \frac{8\sqrt{\mu_0s}\kappa\delta(4+12\sqrt{\mu_0s}\kappa\delta)}{1-2\kappa\delta} \rb \fronorm{\mL\mR^\tranH - \mZ_{\natural}}\\
		&\leq \frac{2\epsilon+12\epsilon\kappa\delta+40\sqrt{\mu_0s}\kappa\delta}{1-2\kappa\delta}\fronorm{\mL\mR^\tranH - \mZ_{\natural}}\\
		&\leq \frac{2\epsilon+52\sqrt{\mu_0s}\kappa\delta}{1-2\kappa\delta}\fronorm{\mL\mR^\tranH - \mZ_{\natural}}\\
		&\leq (3\epsilon + 58\sqrt{\mu_0s}\kappa \delta)\fronorm{\mL\mR^\tranH - \mZ_{\natural}} \\
		&\leq \frac{1}{5}\fronorm{\mL\mR^\tranH - \mZ_{\natural}}, \numberthis\label{h2}
	\end{align*}
	where we have used the fact that $\epsilon\leq \frac{1}{15}, \mu_0s\geq 1$ and $\delta\leq \frac{1}{290\sqrt{\mu_0s}\kappa}$, that is, $1/(1-2\delta) \leq \frac{3}{2} $. Since \eqref{h2} holds for any $\mY$ such that $\fronorm{\mY}=1$, thus plugging \eqref{h2} into \eqref{h1}, we complete the proof.

\subsection{Proof of Lemma \ref{lemma bound L}}
According to the definition of $L$ in \eqref{smooth parameter}, it can be seen that $L\geq 4$. Thus we focus on providing an upper bound of $L$. Firstly, we provide an upper bound of $d_{\calS}(\mV,\mW)$ for $\mV=-\eta \nabla f(\mL)(\mR^\tranH\mR)^{-1}$ and $\mW = -\eta\nabla f(\mR)(\mL^\tranH\mL)^{-1}$. Recall \eqref{gradient f L} and \eqref{gradient f R} that
\begin{align*}
	\nabla f(\mL) 
	&=\underbrace{\lb \mL\mR^\tranH - \mZ_{\natural} \rb \mR}_{=\nabla g(\mL)}+\underbrace{ \lb \tG \lb \calA^\ast\calA-\calI \rb\tG^\ast(\mL\mR^\tranH - \mZ_{\natural}) \rb\mR}_{=\nabla h(\mL)},\\
	\nabla f(\mR) 
	&=\underbrace{\lb \mL\mR^\tranH - \mZ_{\natural} \rb^\tranH \mL }_{=\nabla g(\mR)}+ \underbrace{\lb \tG\lb \calA^\ast\calA-\calI \rb\tG^\ast(\mL\mR^\tranH - \mZ_{\natural})  \rb^\tranH\mL}_{=\nabla h(\mR)}.
\end{align*}
By the definition of $d_{\calS}(\mV,\mW)$ in \eqref{def ds}, we have
\begin{align*}
	d_{\calS}^2(-\eta \nabla f(\mL)(\mR^\tranH\mR)^{-1},-\eta\nabla f(\mR)(\mL^\tranH\mL)^{-1})
	&=\eta^2\lb \fronorm{\nabla f(\mL)(\mR^\tranH\mR)^{-\frac{1}{2}}}^2 + \fronorm{\nabla f(\mR)(\mL^\tranH\mL)^{-\frac{1}{2}}}^2 \rb\\
	&\leq 2\eta^2  \lb \fronorm{\nabla g(\mL)(\mR^\tranH\mR)^{-\frac{1}{2}}}^2 + \fronorm{\nabla g(\mR)(\mL^\tranH\mL)^{-\frac{1}{2}}}^2 \rb \\
	&\quad + 2\eta^2 \lb \fronorm{\nabla h(\mL)(\mR^\tranH\mR)^{-\frac{1}{2}}}^2 + \fronorm{\nabla h(\mR)(\mL^\tranH\mL)^{-\frac{1}{2}}}^2 \rb \\
	&\stackrel{(a)}{=}2\eta^2 d_{\calS^\ast}^2(\nabla g(\mL), \nabla g(\mR)) + 2\eta^2 d_{\calS^\ast}^2(\nabla h(\mL), \nabla h(\mR))\\
	&\stackrel{(b)}{\leq}2\eta^2  \cdot 2\fronorm{\mL\mR^\tranH - \mZ_{\natural}}^2+ 2\eta^2 \cdot (3\epsilon+58\kappa\delta\sqrt{\mu_0s})^2 \fronorm{\mL\mR^\tranH - \mZ_{\natural}}^2\\
	&=\lb 4+ 2(3\epsilon+58\kappa\delta\sqrt{\mu_0s})^2\rb\eta^2 \fronorm{\mL\mR^\tranH - \mZ_{\natural}}^2\\
	&\stackrel{(c)}{\leq } 5\eta^2\fronorm{\mL\mR^\tranH - \mZ_{\natural}}^2, 
\end{align*}
where step (a) is due to the definition of $d_{\calS^\ast}$ in \eqref{def dual ds}, step (b) is due to Lemma \ref{bound of nabla g} and Lemma \ref{upper bound of h}, step (c) follows from that $\epsilon\leq \frac{1}{6\sqrt{2}}$ and $\delta\leq \frac{1}{116\kappa\sqrt{2\mu_0s}}$, i.e., $3\epsilon+58\kappa\delta\sqrt{\mu_0s} \leq \frac{1}{\sqrt{2}}$.

Secondly, we provide an upper bound of $L$. Since $\fronorm{\mL_k\mR_k^\tranH - \mZ_{k, \natural}}\leq \frac{\delta}{\sqrt{K}}\sigma_r(\mZ_{k, \natural})$ for all $k\in[K]$, one has
\begin{align*}
	\fronorm{\mL\mR^\tranH - \mZ_{\natural}} &=\sqrt{\sum_{k=1}^{K}\fronorm{\mL_k\mR_k^\tranH - \mZ_{k, \natural}}^2} \leq \sqrt{\frac{\delta^2}{K}\sum_{k=1}^{K}\sigma_r^2(\mZ_{k, \natural})} \leq \delta\cdot \max_k \sigma_r(\mZ_{k, \natural}) \leq \delta \cdot \max_k \sigma_1(\mZ_{k, \natural}).
\end{align*}
Therefore, we have
\begin{align*}
	\frac{d_{\calS}(\mV, \mW)}{\min_k \sigma_r(\mZ_{k, \natural})} &\leq  \frac{\sqrt{5}\eta \fronorm{\mL\mR^\tranH - \mZ_{\natural}}}{\min_k \sigma_r(\mZ_{k, \natural})} \leq \frac{3\eta \delta \cdot  \max_k \sigma_1(\mZ_{k, \natural})}{\min_k \sigma_r(\mZ_{k, \natural})} =3\kappa\delta\eta.
\end{align*}
Thus
\begin{align*}
	L&=3+ \frac{2\fronorm{\mL\mR^\tranH - \mZ_{\natural} }}{\min_k \sigma_r(\mZ_{k, \natural})}  +\frac{3d_{\calS}^2(\mV, \mW)}{\min_k \sigma_r^2(\mZ_{k, \natural})} \\
	&\leq 3+ \frac{2\fronorm{\mL\mR^\tranH - \mZ_{\natural} }}{\min_k \sigma_r(\mZ_{k, \natural})}  +\frac{3\cdot 5\eta^2 \fronorm{\mL\mR^\tranH - \mZ_{\natural}}^2}{\min_k \sigma_r^2(\mZ_{k, \natural})} \\ 
	&\leq 3 + 2\delta\kappa + 15\eta^2\delta^2\kappa^2 \\
	&\leq 5.
\end{align*}
where the last line is due to $\eta\leq 1$, $\delta\leq \frac{1}{4\kappa}$, i.e., $2\delta\kappa\leq \frac{1}{2}\leq 1$ and $15\eta^2\delta^2\kappa^2\leq \frac{15}{16}\leq 1$.

\subsection{Proof of Lemma \ref{lemma linear convergence}}

According to Lemma \ref{smooth of g}, one has
\begin{align*}
	g(\mL_{t+1}, \mR_{t+1}) &\leq g(\mL_t,\mR_t) -\eta_t \la \nabla g(\mL_t), \nabla f(\mL_t)(\mR_t^\tranH\mR_t)^{-1}\ra -\eta_t\la\nabla g(\mR_t), \nabla f(\mR_t)(\mL_t^\tranH\mL_t)^{-1} \ra \\
	&\quad + \frac{L}{2}d_{\calS_t}^2\lb -\eta_t\nabla (\mL_t)(\mR_t^\tranH\mR_t)^{-1}, -\eta_t\nabla f(\mR_t)(\mL_t^\tranH\mL_t)^{-1} \rb\\
	&=g(\mL_t,\mR_t) -\eta_t \la \nabla g(\mL_t), \lb \nabla g(\mL_t) + \nabla h(\mL_t) \rb(\mR_t^\tranH\mR_t)^{-1}\ra \\
	&\quad -\eta_t\la\nabla g(\mR_t), \lb \nabla g(\mR_t)+\nabla h(\mR_t) \rb(\mL_t^\tranH\mL_t)^{-1} \ra \\
	&\quad + \frac{L}{2}d_{\calS_t}^2\lb -\eta_t\nabla f(\mL_t)(\mR_t^\tranH\mR_t)^{-1}, -\eta_t\nabla f(\mR_t)(\mL_t^\tranH\mL_t)^{-1} \rb\\
	&=g(\mL_t,\mR_t) - \eta_t \lb \fronorm{\nabla g(\mL_t)(\mR_t^\tranH\mR_t)^{-\frac{1}{2}} }^2 + \fronorm{ \nabla g(\mR_t)(\mL_t^\tranH\mL_t)^{-\frac{1}{2}} }^2\rb\\
	&\quad -\eta_t  \la \nabla g(\mL_t)(\mR_t^\tranH\mR_t)^{-\frac{1}{2}}, \nabla h(\mL_t)(\mR_t^\tranH\mR_t)^{-\frac{1}{2}}\ra -\eta_t  \la \nabla g(\mR_t)(\mL_t^\tranH\mL_t)^{-\frac{1}{2}}, \nabla h(\mR_t)(\mL_t^\tranH\mL_t)^{-\frac{1}{2}}\ra\\
	&\quad +\frac{L\eta_t^2}{2}  \lb \fronorm{\lb \nabla g(\mL_t)+ \nabla h(\mL_t)\rb(\mR_t^\tranH\mR_t)^{-\frac{1}{2}} }^2 + \fronorm{ \lb \nabla g(\mR_t)+\nabla h(\mR_t)\rb(\mL_t^\tranH\mL_t)^{-\frac{1}{2}} }^2\rb\\
	&\stackrel{(a)}{\leq }g(\mL_t,\mR_t) - \eta_t \lb \fronorm{\nabla g(\mL_t)(\mR_t^\tranH\mR_t)^{-\frac{1}{2}} }^2 + \fronorm{ \nabla g(\mR_t)(\mL_t^\tranH\mL_t)^{-\frac{1}{2}} }^2\rb\\
	&\quad +\frac{\eta_t}{4}  \fronorm{\nabla g(\mL_t)(\mR_t^\tranH\mR_t)^{-\frac{1}{2}}}^2 + \eta_t\fronorm{ \nabla h(\mL_t)(\mR_t^\tranH\mR_t)^{-\frac{1}{2}} }^2\\
	&\quad +\frac{\eta_t}{4} \fronorm{\nabla g(\mR_t)(\mL_t^\tranH\mL_t)^{-\frac{1}{2}}}^2 + \eta\fronorm{ \nabla h(\mR_t)(\mL_t^\tranH\mL_t)^{-\frac{1}{2}} }^2\\
	&\quad +L\eta_t^2 \lb \fronorm{ \nabla g(\mL_t)(\mR_t^\tranH\mR_t)^{-\frac{1}{2}} }^2 + \fronorm{  \nabla h(\mR_t)(\mL_t^\tranH\mL_t)^{-\frac{1}{2}} }^2\rb\\
	&\quad +L\eta_t^2 \lb \fronorm{ \nabla h(\mL_t)(\mR_t^\tranH\mR_t)^{-\frac{1}{2}} }^2 + \fronorm{  \nabla h(\mR_t)(\mL_t^\tranH\mL_t)^{-\frac{1}{2}} }^2\rb\\
	&= g(\mL_t, \mR_t) - \lb \frac{3\eta_t}{4} - L\eta_t^2\rb d_{\calS_t^\ast}^2(\nabla g(\mL_t), \nabla g(\mR_t)) + \lb \eta_t + L\eta_t^2\rb d_{\calS_t^\ast}^2(\nabla h(\mL_t), \nabla h(\mR_t))\\
	&\stackrel{(b)}{\leq}g(\mL_t, \mR_t) - \frac{\eta_t}{2} d_{\calS_t^\ast}^2(\nabla g(\mL_t), \nabla g(\mR_t)) + \frac{5\eta_t}{4} d_{\calS_t^\ast}^2(\nabla h(\mL_t), \nabla h(\mR_t))\\
	&\stackrel{(c)}{\leq } g(\mL_t, \mR_t) - \frac{\eta_t}{2} \cdot \frac{1}{9}\fronorm{\mL_t\mR_t^\tranH-\mZ_{\natural}}^2+ \frac{5\eta_t}{4} \cdot \frac{1}{25} \fronorm{\mL_t\mR_t^\tranH-\mZ_{\natural}}^2\\
	&=g(\mL_t, \mR_t)  - \frac{\eta_t}{180}  \fronorm{\mL_t\mR_t^\tranH-\mZ_{\natural}}^2\\
	&=\lb 1-\frac{\eta}{90}\rb g(\mL_t, \mR_t),
\end{align*}
where step (a) is due to $\fronorm{\mA+\mB} \leq 2\fronorm{\mA}^2 + 2\fronorm{\mB}^2$ and $-\la \mA, \mB\ra\geq \frac{1}{4}\fronorm{\mA}^2 + \fronorm{\mB}^2 $, step (b) follows from that $\eta_t\leq \frac{1}{20}$ and $L\leq 5$ (Lemma \ref{lemma bound L}), i.e., $\frac{3\eta_t}{4}-L\eta_t^2\geq \frac{\eta_t}{2}$ and $\eta_t + L\eta_t^2\leq \frac{5\eta_t}{4}$, step (c) follows from Lemma \ref{bound of nabla g} and Lemma \ref{upper bound of h}. 

\subsection{Proof of Lemma \ref{theorem: initial}}
\label{proof initial}
Recall that the initialization in Algorithm \ref{alg: SGD} is given by
\begin{align*}
	\mZ_{\ell,0} = \calP_r\calH\calA_\ell^\ast (\vy), ~ \ell=1,\cdots, K,
\end{align*}
where $\vy = \sum_{k=1}^{K} \calA_k(\mX_k)$.  Let $\mZ_{\ell, 0} =\mU_{\ell, 0} \bSigma_{\ell, 0}\mV_{\ell, 0}^\tranH$ be the SVD of $\mZ_{\ell, 0}$. Denote $\mL_{\ell,0} = \mU_{\ell, 0}\bSigma_{\ell, 0}^{\frac{1}{2}}, \mR_{\ell, 0} = \mV_{\ell, 0}\bSigma_{\ell, 0}^{\frac{1}{2}}$ and $\mW_{\ell,0} = \calH\calA_\ell^\ast (\vy)$. A simple computation yields that
\begin{align*}
	\mW_{k,0} &=\calG\calD\calA_\ell^\ast(\vy) = \calG\calA_\ell^\ast (\mD\vy) =\calG\calA_\ell^\ast\lb \sum_{k=1}^{K} \calA_k\calG(\mZ_{k,\natural})\rb =\sum_{k=1}^{K}\calG\calA_\ell^\ast\calA_k\calGT(\mZ_{k, \natural})\\
	&=\calG\calA_\ell^\ast\calA_\ell\calGT(\mZ_{\ell, \natural}) + \sum_{k\neq \ell}\calG\calA_\ell^\ast\calA_k\calGT(\mZ_{k, \natural}),
\end{align*}
Thus one has
\begin{align*}
	\opnorm{\mZ_{\ell, 0} - \mZ_{\ell, \natural}} &=\opnorm{ \calP_r(\mW_{\ell,0}) - \mZ_{\ell, \natural}}\\
	&\leq \opnorm{ \calP_r(\mW_{\ell,0}) - \mW_{\ell,0}} + \opnorm{\mW_{\ell,0} - \mZ_{\ell, \natural}}\\
	&\stackrel{(a)}{\leq} \opnorm{ \mZ_{\ell, \natural}- \mW_{\ell,0}} + \opnorm{\mW_{\ell,0} - \mZ_{\ell, \natural}}\\
	&=2\opnorm{\calG\calA_\ell^\ast\calA_\ell\calGT(\mZ_{\ell, \natural}) + \sum_{k\neq \ell}\calG\calA_\ell^\ast\calA_k\calGT(\mZ_{k, \natural}) - \mZ_{\ell, \natural}}\\
	&\leq \underbrace{2\opnorm{\calG\calA_\ell^\ast\calA_\ell\calGT(\mZ_{\ell, \natural}) - \mZ_{\ell, \natural}} }_{:=I_6}+ \underbrace{2\opnorm{\sum_{k\neq \ell}\calG\calA_\ell^\ast\calA_k\calGT(\mZ_{k, \natural}) } }_{:=I_7},
\end{align*}
where step (a) follows from that $\calP_r(\mW_{\ell,0})$ is the best rank-$r$ approximation of $\mW_{\ell,0}$. To this end, we bound $I_6$ and $I_7$, respectively.
\begin{itemize}
	\item Bounding of $I_6$. According to Lemma \ref{lemma initial 1}, one has
	\begin{align*}
		I_1 &=2\opnorm{\calG\calA_\ell^\ast\calA_\ell\calGT(\mZ_{\ell, \natural}) - \mZ_{\ell, \natural}} \\
		&\leq 2c_1\sqrt{\frac{\mu_0s \mu_1 r\log(sn)}{n}}\sigma_1(\mZ_{\ell, \natural})=2c_1\sqrt{\frac{\kappa_\ell^2\mu_0 s\mu_1 r\log(sn)}{n}}\sigma_r(\mZ_{\ell, \natural})\\
		&\leq \frac{\epsilon}{2}\sigma_1(\mZ_{\ell, \natural})\numberthis\label{eq initial 1}
	\end{align*}
	with probability at least $1-(sn)^{-\gamma+1}$, where $\kappa_\ell = \sigma_1(\mZ_{\ell,\natural})/\sigma_r(\mZ_{\ell,\natural})$, and the last line is due to $n\geq C_{\gamma} \epsilon^{-2}\kappa^2_\ell\mu_0 s\mu_1 r\log(sn)$.
	
	\item Bounding of $I_7$. Denote $\vv_{\setminus\ell}[i] = \sum_{k\neq \ell} \la \vb_{k,i}\ve_i^\tran, \calGT(\mZ_{k,\natural}) \ra$ for any $i\in [n]$. A simple computation yields that 
	\begin{align*}
		I_7 &=2\opnorm{\sum_{k\neq \ell}\calG\calA^\ast_\ell\calA_k\calGT(\mZ_{k, \natural})} =2\opnorm{\calG\calA_\ell^\ast(\vv_{\setminus\ell})} = 2\opnorm{\sum_{i=1}^{n} \vv_{\setminus\ell}[i] \calG(\vb_{\ell, i}\ve_i^\tran)}:=2\opnorm{\sum_{i=1}^{n} \mY_i},
	\end{align*}
	where $\mY_i = \vv_{\setminus\ell}[i] \calG(\vb_{\ell, i} \ve_i^\tran)$. Notice that $\{\mY_i\}_{i=1}^n$ are independent random matrices with 
	\begin{align*}
		\E{\mY_i} &= \E{ \vv_{\setminus\ell}[i] \calG(\vb_{\ell, i} \ve_i^\tran) } =\E{\sum_{k\neq \ell} \la \vb_{k,i}\ve_i^\tran, \calGT(\mZ_{k,\natural})\ra \calG(\vb_{\ell, i}\ve_i^\tran)}\\
		&=\E{\sum_{k\neq \ell} \calG\lb \vb_{\ell,i} \vb_{k,i}^\tranH\calGT(\mZ_{k,\natural}\ve_i) \ve_i^\tran\rb}\\
		&=\sum_{k\neq \ell} \calG\lb \E{\vb_{\ell,i} \vb_{k,i}^\tranH}\calGT(\mZ_{k,\natural}\ve_i) \ve_i^\tran\rb\\
		&=0,
	\end{align*}
	where the last line has used the fact that $\vb_{\ell,i} $ is independent of $\vb_{k,i}$ for $\ell\neq k$ and $\E{\vb_{\ell,i}}=\bzero$ (Assumption \ref{assumption 1}). Moreover, one has
	\begin{align*}
		\opnorm{\mY_i}&\leq |\vv_\ell[i]| \cdot \opnorm{\calG(\vb_{\ell, i}\ve_i^\tran)} \\
		&\stackrel{(a)}{\leq} \sqrt{\mu_0s}\sum_{k\neq \ell} \twonorm{\calGT(\mZ_{k, \natural})\ve_i}\cdot \opnorm{\mG_i\otimes \vb_{\ell, i}}\\
		&\leq \sqrt{\mu_0s}\sum_{k\neq \ell} \twonorm{\calGT(\mZ_{k, \natural})\ve_i}\cdot \frac{\sqrt{\mu_0s}}{\sqrt{w_i}}\\
		&\leq \mu_0s \sum_{k\neq \ell} \max_i\frac{\twonorm{\calGT(\mZ_{k, \natural}\ve_i)}}{\sqrt{w_i}}\\
		&\stackrel{(b)}{\leq }\mu_0s \sum_{k\neq \ell} \frac{\mu_1 r}{n}\sigma_1(\mZ_{k, \natural})\\
		&\leq \frac{\mu_0s\mu_1 r}{n} \sum_{k=1}^{K}\sigma_1(\mZ_{k, \natural}),
	\end{align*}
	where step (a) is due to the fact $|\vv_{\setminus\ell}[i]| \leq  \sqrt{\mu_0s}\sum_{k\neq \ell} \twonorm{\calGT(\mZ_{k, \natural})\ve_i}$ ,  step (b) is due to Lemma \ref{lemma v6}. The fact can be proved as follows:
	\begin{align*}
		|\vv_{\setminus\ell}[i]| &\leq \sum_{k\neq \ell} \left| \vb_{k,i}^\tranH \calGT(\mZ_{k, \natural})\ve_i\right|\\
		&\leq \sum_{k\neq \ell} \twonorm{ \vb_{k,i}}\cdot \twonorm{\calGT(\mZ_{k, \natural})\ve_i}\\
		&\leq \sqrt{\mu_0s}\sum_{k\neq \ell} \twonorm{\calGT(\mZ_{k, \natural})\ve_i}.
	\end{align*}
	Furthermore, by the definition of $\vv_\ell$, it can be seen that $\vv_{\ell}$ is independent of $\calG(\vb_{\ell, i})$. Thus one has
	\begin{align*}
		\E{|\vv_{\ell}[i]|^2} &=\E{ \left| \sum_{k\neq \ell} \la \vb_{k,i}\ve_i^\tran, \calGT(\mZ_{k,\natural}) \ra \right|^2}\\
		&=\sum_{k\neq \ell}\E{ \left|  \la \vb_{k,i}\ve_i^\tran, \calGT(\mZ_{k,\natural}) \ra \right|^2}\\
		&\leq \sum_{k\neq \ell} \mu_0 s\twonorm{\calGT(\mZ_{k, \natural})\ve_i}^2
	\end{align*}
	and
	\begin{align*}
		\opnorm{\sum_{i=1}^{n}\E{\mY_i\mY_i^\tranH}} &\leq \opnorm{\sum_{i=1}^{n}\E{|\vv_\ell[i]|^2}\cdot \E {\calG(\vb_{\ell, i}\ve_i^\tran)\calG(\vb_{\ell, i}\ve_i^\tran)^\tranH}}\\
		&\leq \sum_{i=1}^{n} \E{|\vv_\ell[i]|^2}\cdot \opnorm{ \E {\calG(\vb_{\ell, i}\ve_i^\tran)\calG(\vb_{\ell, i}\ve_i^\tran)^\tranH}}\\
		&=\sum_{i=1}^{n} \E{|\vv_\ell[i]|^2}\cdot \opnorm{ \mG_i\mG_i^\tranH \otimes \E{\vb_{\ell, i}\vb_{\ell, i}^\tranH}}\\
		&\leq \sum_{i=1}^{n} \E{|\vv_\ell[i]|^2}\cdot \frac{1}{w_i}\\
		&\leq \sum_{i=1}^{n} \sum_{k\neq \ell} \mu_0 s\twonorm{\calGT(\mZ_{k, \natural})\ve_i}^2\cdot \frac{1}{w_i}\\
		&=\mu_0s\sum_{k\neq \ell}\sum_{i=1}^{n} \frac{\twonorm{\calGT(\mZ_{k, \natural})\ve_i}^2}{w_i}\\
		&\leq \mu_0s \sum_{k=1}^{K}  \sum_{i=1}^{n} \frac{\twonorm{\calGT(\mZ_{k, \natural})\ve_i}^2}{w_i} \\
		&\leq \frac{\mu_0s  \mu_1 r\log(sn)}{n}\sum_{k=1}^{K}\sigma_1^2(\mZ_{k, \natural}),
	\end{align*}
	where the last line is due to Lemma \ref{lemma v6}. Similarly, one can obtain the same upper bound for $\opnorm{\sum_{i=1}^{n}\E{\mY_i^\tranH\mY_i}}$. Applying matrix Bernstein inequality shows that, with probability at least $1-(sn)^{-\gamma+1}$,
	\begin{align*}
		\opnorm{\sum_{k\neq \ell}\calG\calA^\ast_\ell\calA_k\calGT(\mZ_{k, \natural})} &\leq c\lb  \frac{\mu_0s\mu_1 r\log (sn)}{n} \sum_{k=1}^{K}\sigma_1(\mZ_{k, \natural}) + \sqrt{ \frac{\mu_0s  \mu_1 r\log^2(sn)}{n}\sum_{k=1}^{K}\sigma_1^2(\mZ_{k, \natural}) } \rb\\
		&\leq c\lb \frac{K\mu_0s\mu_1 r\log (sn)}{n} \max_k\sigma_1(\mZ_{k, \natural}) + \sqrt{ \frac{K\mu_0s  \mu_1 r\log^2(sn)}{n}}\max_k\sigma_1(\mZ_{k, \natural}) \rb\\
		&=c\lb \frac{K\mu_0s\mu_1 r\log (sn)}{n} \max_k\sigma_1(\mZ_{k, \natural}) + \sqrt{ \frac{K\mu_0s  \mu_1 r\log^2(sn)}{n}}\max_k\sigma_1(\mZ_{k, \natural}) \rb \\
		&=c\lb \frac{K\kappa\mu_0s\mu_1 r\log (sn)}{n} \min_k \sigma_r(\mZ_{k,\natural})  + \sqrt{ \frac{K\kappa^2\mu_0s  \mu_1 r\log^2(sn)}{n}}\min_k \sigma_r(\mZ_{k,\natural}) \rb\\
		&\leq \frac{\epsilon}{2}\sigma_r(\mZ_{\ell, \natural}), \numberthis \label{eq initial 2}
	\end{align*}
	where $\kappa= \frac{\max_k\sigma_1(\mZ_{k, \natural}) }{\min_k \sigma_r(\mZ_{k,\natural})}$, and the last line is due to $n\geq C_{\gamma}\epsilon^{-2} K\mu_0s\mu_1r\kappa^2\log^2(sn)$.
\end{itemize}
Combing \eqref{eq initial 1} and \eqref{eq initial 2} together, we have
\begin{align*}
	\opnorm{\mZ_{\ell, 0} - \mZ_{\ell, \natural}} &\leq I_1 + I_2\leq \epsilon \sigma_r(\mZ_{\ell, \natural})
\end{align*}
happens with probability at least $1-(sn)^{-\gamma+1}$ provided that $n\geq C_{\gamma} \epsilon^{-2}K\mu_0s\mu_1r\kappa^2\log^2(sn)$. By taking union bound over $1\leq \ell\leq K$, 
\begin{align*}
	\Pr{\opnorm{\mZ_{\ell, 0} - \mZ_{\ell, \natural}} \leq\epsilon \sigma_r(\mZ_{\ell, \natural}) ~ \forall \ell =1,\cdots, K} \geq 1-K(sn)^{-\gamma+1}.
\end{align*}
Since $\mZ_{\ell,0} - \mZ_{\ell,\natural}$ is rank-$2r$ matrix, it implies that $\fronorm{\mZ_{\ell, 0} - \mZ_{\ell, \natural} } \leq \sqrt{2r}\opnorm{\mZ_{\ell, 0} - \mZ_{\ell, \natural}}$. 
Letting $\epsilon=\frac{\rho}{\sqrt{Ksr}}$ and $\gamma = \gamma'+\log K$, or $n\geq C_{\gamma'+\log K}\rho^{-2} K^2s^2r^2\mu_0\mu_1\kappa^2\log^2(sn)$, 
\begin{align*}
	\Pr{\fronorm{\mZ_{\ell, 0} - \mZ_{\ell, \natural}} \leq\frac{\rho}{\sqrt{Ks}}\sigma_r(\mZ_{\ell, \natural}) ~ \forall \ell =1,\cdots, K} \geq 1-(sn)^{-\gamma'+1},
\end{align*}
which completes the proof.

\section{Technical lemmas}

\begin{lemma}\cite[Corollary III.11]{chen2022vectorized}
	\label{lemma local RIP}
	Suppose that $n\geq C_{\gamma} \epsilon^{-2} \mu_0s \mu_1r\log(sn)$. Then with probability at least $1-(sn)^{-\gamma+1}$, there holds the following inequality
	\begin{align*}
		\opnorm{\calP_{T_k} \calG(\calA_k^\ast \calA_k - \calI)\calG\calP_{T_k}} \leq \epsilon,
	\end{align*}
	where $C_\gamma$ is an absolute constant only depending on $\gamma\geq 2$.
\end{lemma}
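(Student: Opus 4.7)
The plan is to recognize this as a standard tangent--space restricted isometry bound and establish it via the operator Bernstein inequality. First, I would verify that $\E{\calA_k^\ast\calA_k}=\calI$ on $\C^{s\times n}$: using $\calA_k^\ast\calA_k(\mZ)=\sum_{j=0}^{n-1}\vb_{k,j}\vb_{k,j}^\tranH\mZ\ve_j\ve_j^\tran$ together with $\E{\vb_{k,j}\vb_{k,j}^\tranH}=\mI$ from Assumption \ref{assumption 1}, this follows immediately. This lets me write the centered operator as a sum of $n$ independent rank-one self-adjoint summands
$$\calP_{T_k}\calG(\calA_k^\ast\calA_k-\calI)\calG^\ast\calP_{T_k} \;=\;\sum_{j=0}^{n-1}\bigl(\mathcal{L}_j - \E{\mathcal{L}_j}\bigr),$$
$$\mathcal{L}_j(\mZ) \;=\; \la \calP_{T_k}\calG(\vb_{k,j}\ve_j^\tran),\mZ\ra\,\calP_{T_k}\calG(\vb_{k,j}\ve_j^\tran),$$
viewed as operators on $\C^{sn_1\times n_2}$.

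Second, I would derive a uniform deterministic bound on $\opnorm{\mathcal{L}_j}=\fronorm{\calP_{T_k}\calG(\vb_{k,j}\ve_j^\tran)}^2$. Unpacking $\calG=\calH\calD^{-1}$, the matrix $\calG(\vb_{k,j}\ve_j^\tran)$ places the vector $\vb_{k,j}/\sqrt{w_j}$ along the $j$-th antidiagonal of the vectorized Hankel array. Splitting $\calP_{T_k}=\mU_{k,\natural}\mU_{k,\natural}^\tranH(\cdot) + (\cdot)\mV_{k,\natural}\mV_{k,\natural}^\tranH - \mU_{k,\natural}\mU_{k,\natural}^\tranH(\cdot)\mV_{k,\natural}\mV_{k,\natural}^\tranH$ and invoking the block incoherence $\max_i\fronorm{\mU_{k,\natural,i}}^2\leq \mu_1 r/n$ and the row incoherence of $\mV_{k,\natural}$ from Assumption \ref{assumption 0}, together with $\twonorm{\vb_{k,j}}^2\leq \mu_0 s$ from Assumption \ref{assumption 1}, I would obtain a bound of the form $\opnorm{\mathcal{L}_j}\lesssim \mu_0\mu_1 sr/n$ uniformly in $j$.

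Third, I would bound the matrix variance. Since each $\mathcal{L}_j$ is rank-one and PSD, $\mathcal{L}_j^2 = \opnorm{\mathcal{L}_j}\cdot\mathcal{L}_j$, so
$$\opnorm{\sum_{j}\E{\mathcal{L}_j^2}} \;\lesssim\; \frac{\mu_0\mu_1 sr}{n}\,\opnorm{\sum_j \E{\mathcal{L}_j}} \;=\; \frac{\mu_0\mu_1 sr}{n}\,\opnorm{\calP_{T_k}\calG\calG^\ast\calP_{T_k}} \;\leq\; \frac{\mu_0\mu_1 sr}{n}.$$
Applying the operator Bernstein inequality then yields, with probability at least $1-(sn)^{-\gamma+1}$, a bound of order $\sqrt{(\mu_0\mu_1 sr\log(sn))/n} + (\mu_0\mu_1 sr\log(sn))/n$, and the sample complexity hypothesis $n\geq C_\gamma\epsilon^{-2}\mu_0 s\mu_1 r\log(sn)$ with $C_\gamma$ sufficiently large makes this at most $\epsilon$.

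The main obstacle is the deterministic bound on $\fronorm{\calP_{T_k}\calG(\vb_{k,j}\ve_j^\tran)}$: because the Hankel lift smears $\vb_{k,j}$ over the $w_j$ cells of the $j$-th antidiagonal, the projection onto $\mU_{k,\natural}\mU_{k,\natural}^\tranH$ must be handled block by block to absorb the $\sqrt{w_j}$ into the $\calD^{-1}$ normalization and extract precisely one $\mu_1 r/n$ factor without losing extra $w_j$'s. The remaining pieces are routine matrix Bernstein calculations with no surprises.
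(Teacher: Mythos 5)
The paper does not prove this lemma; it is imported verbatim as Corollary III.11 of \cite{chen2022vectorized}, so there is no in-paper argument to compare against. That said, your sketch is a faithful reconstruction of the standard proof and it mirrors, almost line by line, the proof given for the neighboring cross-term bound Lemma \ref{cross} in the appendix: there too the operator is written as a sum of rank-one summands $\vx_{k,i}\vx_{j,i}^\tranH$ with $\vx_{k,i}=\text{vec}(\calP_{T_k}\calG(\vb_{k,i}\ve_i^\tran))$, the uniform bound $\fronorm{\calP_{T_k}\calG(\vb_{k,i}\ve_i^\tran)}^2\leq 2\mu_0\mu_1 sr/n$ is invoked (citing Lemma VI.2 of \cite{chen2022vectorized}), the variance is controlled by $\opnorm{\sum_i\E{\vx_{k,i}\vx_{k,i}^\tranH}}\leq\opnorm{\calP_{T_k}\calG\calG^\ast\calP_{T_k}}\leq 1$, and matrix Bernstein closes the bound. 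Your version is the diagonal $(j=k)$ case of exactly that calculation, with the extra observation that the summands $\mathcal{L}_j$ are rank-one PSD so $\mathcal{L}_j^2=\opnorm{\mathcal{L}_j}\mathcal{L}_j$, which streamlines the variance bound. One thing worth making explicit: what you describe as the main obstacle, namely the deterministic bound $\fronorm{\calP_{T_k}\calG(\vb_{k,j}\ve_j^\tran)}^2\lesssim\mu_0\mu_1 sr/n$, is precisely the content of Lemma VI.2 in \cite{chen2022vectorized}, which is the same black box this paper leans on in its proof of Lemma \ref{cross}; you are not expected to reprove it, only to invoke it. With that citation in place, your three-step argument (centering, uniform bound, variance plus Bernstein) is complete and correct, and the sample-complexity bookkeeping at the end is right.
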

\begin{lemma}\label{cross}
	Suppose that $n\geq C_\gamma \epsilon^{-2}K^2\mu_0s\mu_1r\log(sn)$. Thenwith probability at least $1-(sn)^{-\gamma+1}$, there holds the following inequality
	\begin{align*}
		\opnorm{\calP_{T_k}\calG\calA_k^\ast \calA_j\calG^\ast\calP_{T_j}} \leq \frac{\epsilon}{K}\quad 1\leq k\neq j\leq K.
	\end{align*}
\end{lemma}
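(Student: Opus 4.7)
The plan is to prove this via a matrix (operator) Bernstein argument, in analogy with the proof of Lemma \ref{lemma local RIP} but exploiting the cross-independence of $\vb_{k,i}$ and $\vb_{j,i}$ when $k\neq j$. First I expand: for any $\mZ \in \C^{sn_1\times n_2}$, using $\calA_k^\ast(\vy)=\sum_i \vy[i]\vb_{k,i}\ve_i^\tran$ and self-adjointness of $\calP_{T_j}$,
\begin{align*}
\calP_{T_k}\calG\calA_k^\ast \calA_j\calG^\ast\calP_{T_j}(\mZ) = \sum_{i=0}^{n-1} \la \calP_{T_j}\calG(\vb_{j,i}\ve_i^\tran),\mZ\ra\,\calP_{T_k}\calG(\vb_{k,i}\ve_i^\tran) =: \sum_{i=0}^{n-1}\calX_i(\mZ),
\end{align*}
where each $\calX_i$ is a rank-one operator from $\C^{sn_1\times n_2}$ to itself. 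Since $\vb_{k,i}$ is independent of $\vb_{j,i}$ for $k\neq j$ and both have zero mean by Assumption \ref{assumption 1}, one has $\E{\calX_i}=0$, and the $\calX_i$'s are independent across $i$. The target is therefore a standard matrix-Bernstein tail bound on $\opnorm{\sum_i \calX_i}$.

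Next I bound the ingredients. For the uniform bound, since $\calX_i$ is rank one, $\opnorm{\calX_i}=\fronorm{\calP_{T_k}\calG(\vb_{k,i}\ve_i^\tran)}\cdot\fronorm{\calP_{T_j}\calG(\vb_{j,i}\ve_i^\tran)}$. A direct computation using $\calG=\calH\calD^{-1}$ gives $\fronorm{\calG(\vb\ve_i^\tran)}=\twonorm{\vb}\leq\sqrt{\mu_0 s}$, but this alone is too crude. The tighter bound I need is
\begin{align*}
\fronorm{\calP_{T_k}\calG(\vb\ve_i^\tran)}^2 \leq C\,\frac{\mu_0\mu_1 sr}{n}
\end{align*}
(and similarly for $T_j$), which follows from the incoherence of $\mU_{k,\natural},\mV_{k,\natural}$ in Assumption \ref{assumption 0} together with the explicit Hankel structure of $\calG(\vb\ve_i^\tran)$ — this is the same type of estimate already exploited in Lemma \ref{lemma v6}. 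Consequently $\opnorm{\calX_i}\leq C\mu_0\mu_1 sr/n$ almost surely. For the variance, I bound
\begin{align*}
\Bigl\|\sum_i \E{\calX_i\calX_i^\ast}\Bigr\|_{\mathrm{op}} \leq \max_i\fronorm{\calP_{T_j}\calG(\vb_{j,i}\ve_i^\tran)}^2 \cdot \Bigl\|\sum_i \E{\calP_{T_k}\calG(\vb_{k,i}\ve_i^\tran)(\cdot)^\ast}\Bigr\|_{\mathrm{op}} \leq C\,\frac{\mu_0\mu_1 sr}{n},
\end{align*}
where the second factor is bounded using $\E{\vb_{k,i}\vb_{k,i}^\tranH}=\mI$ and $\opnorm{\calP_{T_k}\calG\calG^\ast\calP_{T_k}}\leq 1$; the mirror bound applies to $\sum_i\E{\calX_i^\ast\calX_i}$.

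Plugging these into matrix Bernstein yields, with probability at least $1-(sn)^{-\gamma+1}$,
\begin{align*}
\opnorm{\calP_{T_k}\calG\calA_k^\ast\calA_j\calG^\ast\calP_{T_j}} \leq C_\gamma \sqrt{\frac{\mu_0\mu_1 sr\log(sn)}{n}} + C_\gamma\,\frac{\mu_0\mu_1 sr\log(sn)}{n}.
\end{align*}
Forcing the right-hand side to be at most $\epsilon/K$ is exactly where the quadratic $K^2$ sample complexity arises: the dominant square-root term gives $n\gtrsim \epsilon^{-2}K^2\mu_0\mu_1 sr\log(sn)$. A union bound over the $K(K-1)$ ordered pairs $(k,j)$ absorbs only a harmless $\log K$ factor into $C_\gamma$, giving the stated result. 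The main obstacle is the sharp per-column incoherence bound $\fronorm{\calP_{T_k}\calG(\vb\ve_i^\tran)}^2\lesssim \mu_0\mu_1 sr/n$; without it one would only get $\opnorm{\calX_i}\lesssim\mu_0 s$, which destroys the $1/n$ scaling needed in Bernstein. This bound is, however, a routine consequence of the $\mu_1$-incoherence of $(\mU_{k,\natural},\mV_{k,\natural})$ combined with the sparsity pattern of $\calG(\vb\ve_i^\tran)$, so it can be verified in the same spirit as Lemma \ref{lemma v6}.
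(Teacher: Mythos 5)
Your proposal is correct and follows essentially the same route as the paper: expand the operator as a sum of independent, mean-zero rank-one terms (the paper vectorizes to $\vx_{k,i}\vx_{j,i}^\tranH$, you keep the operator form $\calX_i$, which is only a cosmetic difference), use the per-column incoherence bound $\fronorm{\calP_{T_k}\calG(\vb\ve_i^\tran)}^2\lesssim \mu_0\mu_1 sr/n$ (the paper cites Lemma VI.2 of \cite{chen2022vectorized} for this, which is the same estimate you sketch), bound the variance by factoring out $\E{\twonorm{\vx_{j,i}}^2}$ and using $\opnorm{\sum_i\E{\vx_{k,i}\vx_{k,i}^\tranH}}\leq 1$, apply matrix Bernstein, and union-bound over the $K(K-1)$ ordered pairs. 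The only small imprecision is that in the variance step you write $\max_i\fronorm{\cdot}^2$ where the paper factors out the expectation $\max_i\E{\twonorm{\vx_{j,i}}^2}$ after using independence; since the almost-sure bound dominates the expectation and is of the same order, this does not affect the conclusion.
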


\begin{proof}
	For any fixed $\mZ\in\C^{sn_1\times n_2}$, by the definitions of $\calA_k$ and $\calA^\ast_j$ in \eqref{A} and \eqref{At}, we have
	\begin{align*}
		\calP_{T_k}\calG\calA_{k}^\ast \calA_{j}\calG^\ast\calP_{T_j}(\mZ) &= \sum_{i=1}^n \calP_{T_k}\calG(\vb_{k,i} \ve_{i}^\tran) \la \vb_{j, i}\ve_i^\tran, \calGT \calP_{T_j}(\mZ) \ra\\
		&=\sum_{i=1}^n \calP_{T_k}\calG(\vb_{k,i} \ve_{i}^\tran) \la \calP_{T_j}\calG(\vb_{j, i}\ve_i^\tran), \mZ \ra.
	\end{align*}

Let $\vx_{k,i} =\text{vec}( \calP_{T_k}\calG(\vb_{k,i} \ve_{i}^\tran) )$ and $\vz = \text{vec}(\mZ)$. According to Assumption \ref{assumption 1}, $\vx_{k,i}$ is independent of $\vx_{j,i}$ for any $k\neq j$. A direct computation yields that 
	\begin{align*}
		\opnorm{\calP_{T_k}\calG\calA_{k,t}^\ast \calA_{j,t}\calG^\ast\calP_{T_j}} &=\max_{\fronorm{\mZ}=1} \fronorm{ \calP_{T_k}\calG\calA_{k,t}^\ast \calA_{j,t}\calG^\ast\calP_{T_j}(\mZ)}\\
		&=\max_{\fronorm{\mZ}=1} \fronorm{\sum_{i=1}^n  \vx_{k,i} \vx_{j,i}^\tranH \vz }\\
		&=\opnorm{\sum_{i=1}^n  \vx_{k,i} \vx_{j,i}^\tranH},
	\end{align*}
where it can be seen that $\{\vx_{k,i}\vx_{j,i}\}_{i=1,\cdots, n}$ are mean-zero independent random matrices. Firstly, $\opnorm{ \vx_{k,i} \vx_{j,i}^\tranH } $ can be bounded as follows:
	\begin{align*}
		\opnorm{ \vx_{k,i} \vx_{j,i}^\tranH } &\leq \twonorm{\vx_{k,i}} \cdot \twonorm{ \vx_{j,i}^\tranH} \\
		&=\fronorm{\calP_{T_k}\calG(\vb_{k,i} \ve_{i}^\tran)  } \cdot \fronorm{ \calP_{T_j}\calG(\vb_{j,i} \ve_{i}^\tran) }\\
		&\leq  \frac{2\mu_0\mu_1sr}{n},
	\end{align*}
	where the last inequality follows from Lemma VI.2 in \cite{chen2022vectorized}.  Secondly, 
	\begin{align*}
		\opnorm{\sum_{i=1}^n \mathbb{E}\left [\vx_{k,i} \vx_{j,i}^\tranH \vx_{j,i}\vx_{k,i}^\tranH \right]}  &= \opnorm{\sum_{i=1}^n\mathbb{E}\left [\twonorm{\vx_{j,i}}^2\vx_{k,i} \vx_{k,i}^\tranH \right] } \\
		& = \opnorm{\sum_{i=1}^n \E{\twonorm{\vx_{j,i}}^2}\E{\vx_{k,i} \vx_{k,i}^\tranH}}\\
		&\leq \max_{i}\E{\twonorm{\vx_{j,i}}^2}\opnorm{\sum_{i=1}^n \mathbb{E}\left [\vx_{k,i} \vx_{k,i}^\tranH \right]}\\
		&\leq \frac{2\mu_0\mu_1sr}{n}\cdot\opnorm{\sum_{i=1}^n \mathbb{E}\left [\vx_{k,i} \vx_{k,i}^\tranH \right]}\\
		&\leq \frac{2\mu_0\mu_1sr}{n},
	\end{align*}
where the last line is due to the fact $\opnorm{\sum_{i=1}^n \mathbb{E}\left [\vx_{k,i} \vx_{k,i}^\tranH \right]}\leq 1$. The fact can be proved as follows:
	\begin{align*}
		\opnorm{\sum_{i=1}^n \mathbb{E}\left [\vx_{k,i} \vx_{k,i}^\tranH \right]} &=\sup_{\fronorm{\mW}=1} \twonorm{ \sum_{i=1}^n \E{\vx_{k,i} \vx_{k,i}^\tranH \text{vec}(\mW)} }\\
		&=\sup_{\fronorm{\mW}=1} \fronorm{\E{\sum_{i=1}^{n} \calP_{T_k}\calG(\vb_{k,i} \ve_i^\tranH) \la\calP_{T_k}\calG(\vb_{k,i} \ve_i^\tranH), \mW\ra} }\\
		&=\sup_{\fronorm{\mW}=1} \fronorm{\E{\sum_{i=1}^{n} \calP_{T_k}\calG\left(\vb_{k,i} \vb_{k,i}^\tranH \calGT\calP_{T_k}(\mW)  \ve_i \ve_i^\tranH \right)}}\\
		&=\sup_{\fronorm{\mW}=1} \fronorm{ \sum_{i=1}^{n} \calP_{T_k}\calG\left(\E{\vb_{k,i} \vb_{k,i}^\tranH} \calGT\calP_{T_k}(\mW)  \ve_i \ve_i^\tranH \right)}\\
		&=\sup_{\fronorm{\mW}=1} \fronorm{ \sum_{i=1}^{n} \calP_{T_k}\calG\left( \calGT\calP_{T_k}(\mW)  \ve_i \ve_i^\tranH \right)}\\
		&=\sup_{\fronorm{\mW}=1} \fronorm{  \calP_{T_k}\calG  \calGT\calP_{T_k}(\mW) }\\
		&\leq 1.
	\end{align*}
	Using the same argument, one can obtain the same upper bound for $\opnorm{\sum_{i=1}^n \mathbb{E}\left [\vx_{j,i} \vx_{k,i}^\tranH \vx_{k,i}\vx_{j,i}^\tranH \right]}$.
	Then we apply the matrix Bernstein inequality to obtain an upper bound of $\opnorm{\sum_{i=1}^n  \vx_{k,i} \vx_{j,i}^\tranH} $ for fixed $k$ and $j$ such that $k\neq j$, 
	\begin{align*}
		\opnorm{\sum_{i=1}^n  \vx_{k,i} \vx_{j,i}^\tranH} &\leq  C\left(
		\frac{\gamma\mu_0\mu_1sr\log(sn)}{n}+\sqrt{\frac{\gamma\mu_0\mu_1sr\log(sn)}{n}} \right)\leq \frac{\epsilon}{K}
	\end{align*}
with probability at least $1-(sn)^{-\gamma+1}$for any $\gamma\geq 2$ provided that $n\geq C_{\gamma}\epsilon^{-2}K^2 \mu_0\mu_1sr\log(sn)$.  
	By taking union bound over $1\leq k\neq j\leq K$, we can conclude that 
	\begin{align*}
		\Pr{\opnorm{\calP_{T_k}\calG\calA_{k,t}^\ast \calA_{j,t}\calG^\ast\calP_{T_j}} \leq \frac{\epsilon}{K} ,~\forall k\neq j} \geq 1-K(sn)^{-\gamma+1}.
	\end{align*}
If we choose $\gamma = \gamma'+\log K$, or $n\geq C_{\gamma'+\log K}\cdot \epsilon^{-2}K^2 \mu_0\mu_1sr\log(sn)$, then the probability of success is at least $1-(sn)^{-\gamma'+1}$.
\end{proof}

\begin{lemma}
	\label{lemma: T-RIP}
	Suppose $n\geq C_{\gamma} \epsilon^{-2}K^2\mu_0 s\mu_1r\log(sn)$. Then with probability at least $1-(sn)^{-\gamma+1}$, there holds the following inequality
	\begin{align*}
		\opnorm{\calP_T\tG(\calA^\ast \calA - \calI)\tG^\ast \calP_T} \leq \epsilon .
	\end{align*}
\end{lemma}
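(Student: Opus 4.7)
The plan is to exploit the block-diagonal structure shared by $\calP_T$, $\tG$, and $\calA^\ast\calA$ to reduce the claim to the two building blocks already supplied: the local RIP bound of Lemma \ref{lemma local RIP} applied to each $\calA_k$, and the cross-term bound of Lemma \ref{cross} applied to each pair $k \ne j$. First I would unfold the composition: for any block diagonal $\mZ = \blkdiag(\mZ_1, \dots, \mZ_K)$, the $k$-th diagonal block of $\calP_T \tG (\calA^\ast \calA - \calI) \tG^\ast \calP_T(\mZ)$ equals
\begin{align*}
\calP_{T_k} \calG \lb \calA_k^\ast \calA_k - \calI \rb \calGT \calP_{T_k}(\mZ_k) \;+\; \sum_{j\neq k} \calP_{T_k} \calG \calA_k^\ast \calA_j \calGT \calP_{T_j}(\mZ_j),
\end{align*}
which follows because $\tG^\ast \calP_T$, $\calA^\ast$, and $\calP_T \tG$ all act block-wise and $\calA^\ast\calA$ couples the blocks only through the cross terms $\calA_k^\ast \calA_j$.

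Next, I would apply the two existing lemmas with appropriately rescaled tolerances. By Lemma \ref{lemma local RIP} applied with threshold $\epsilon/2$, each diagonal contribution satisfies $\opnorm{\calP_{T_k} \calG (\calA_k^\ast \calA_k - \calI) \calGT \calP_{T_k}} \leq \epsilon/2$ as soon as $n \geq C_\gamma \epsilon^{-2}\mu_0 s\mu_1 r \log(sn)$; by Lemma \ref{cross} applied with threshold $\epsilon/2$, every cross term satisfies $\opnorm{\calP_{T_k} \calG \calA_k^\ast \calA_j \calGT \calP_{T_j}} \leq \epsilon/(2K)$ uniformly in $k \neq j$ as soon as $n \geq C_\gamma \epsilon^{-2} K^2 \mu_0 s \mu_1 r \log(sn)$. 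The sample complexity hypothesis of the lemma absorbs both conditions.

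Then, writing $a_k := \fronorm{\mZ_k}$ and using the triangle inequality block-by-block, the Frobenius norm of $\calP_T \tG (\calA^\ast \calA - \calI) \tG^\ast \calP_T(\mZ)$ is bounded by
\begin{align*}
\sqrt{ \sum_{k=1}^K \lb \frac{\epsilon}{2} a_k + \frac{\epsilon}{2K} \sum_{j\neq k} a_j \rb^2 } \;\leq\; \sqrt{2 \lb \frac{\epsilon}{2}\rb^2 \sum_{k} a_k^2 + 2 \lb \frac{\epsilon}{2K}\rb^2 \sum_{k} \lb\sum_{j\neq k} a_j\rb^2 }.
\end{align*}
For the last sum, Cauchy--Schwarz gives $\sum_k (\sum_{j\neq k} a_j)^2 \leq (K-1)^2 \sum_j a_j^2 \leq K^2 \fronorm{\mZ}^2$, so the whole expression is at most $\sqrt{\epsilon^2/2 + \epsilon^2/2}\,\fronorm{\mZ} = \epsilon \fronorm{\mZ}$, which gives the claim. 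Taking a union bound over the $O(K^2)$ events produced by Lemmas \ref{lemma local RIP} and \ref{cross} is absorbed into $C_\gamma$ as in the proof of Lemma \ref{theorem: initial}.

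The only mildly subtle point is the choice of the two tolerances: because there are $O(K)$ off-diagonal contributions per block, Lemma \ref{cross} must already deliver a $1/K$ factor in the per-pair bound, which is exactly the form in which it is stated. No further technique is required beyond careful bookkeeping of the block-diagonal composition, so I do not expect a genuine obstacle; the only care needed is to split $\epsilon$ between the diagonal and cross contributions.
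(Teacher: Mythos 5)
Your proof is correct and mirrors the paper's own argument: both unfold the block-diagonal composition into $K$ diagonal terms controlled by Lemma~\ref{lemma local RIP} and $K(K-1)$ cross terms controlled by Lemma~\ref{cross}, and both rely on the $\epsilon/K$ scaling in the latter to tame the $K$ off-diagonal contributions feeding into each block. The only cosmetic difference is that the paper exploits self-adjointness of $\calP_T\tG(\calA^\ast\calA-\calI)\tG^\ast\calP_T$ to bound the quadratic form $\langle\mZ,\cdot(\mZ)\rangle$ and closes with $(\sum_\ell\fronorm{\mZ_\ell})^2\leq K\sum_\ell\fronorm{\mZ_\ell}^2$ without splitting $\epsilon$, whereas you bound $\fronorm{\cdot(\mZ)}$ block-by-block and split $\epsilon$ into halves with $(a+b)^2\leq 2a^2+2b^2$, which is mildly lossier in constants but yields the same sample complexity order.
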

\begin{proof}
For any block diagonal matrix $\mZ = \blkdiag(\mZ_1, \cdots, \mZ_k)$ such that $\fronorm{\mZ}=1$, by the definitions of $\tG$ and $\calA$, one has
	\begin{align*}
		\la \mZ, \calP_T\tG(\calA^\ast \calA - \tcalI)\tG^\ast \calP_T(\mZ)\ra &=\sum_{\ell=1}^{K} \la \mZ_\ell , \calP_{T_\ell} \calG\calA_\ell^\ast \lb \sum_{k=1}^{K}\calA_k\calGT\calP_{T_k}(\mZ_k) \rb\ra - \sum_{\ell=1}^{K} \la \mZ_\ell,  \calP_{T_\ell}(\mZ_\ell)\ra\\
		&=\sum_{\ell=1}^{K} \la \mZ_\ell , \calP_{T_\ell} \calG\calA_\ell^\ast  \calA_\ell\calGT\calP_{T_\ell}(\mZ_\ell) \ra - \sum_{\ell=1}^{K} \la \mZ_\ell,  \calP_{T_\ell}(\mZ_\ell)\ra \\
		&\quad + \sum_{\ell=1}^K \la \calA_\ell\calGT\calP_{T_\ell}(\mZ_\ell), \sum_{k:k\neq \ell}\calA_k\calGT\calP_{T_k}(\mZ_k)\ra\\
		&=\underbrace{\sum_{\ell=1}^{K} \la \mZ_\ell , \calP_{T_\ell} \calG\calA_\ell^\ast  \calA_\ell\calGT\calP_{T_\ell}(\mZ_\ell) \ra - \sum_{\ell=1}^{K} \la \mZ_\ell,  \calP_{T_\ell}(\mZ_\ell)\ra}_{:=I_1} \\
		&\quad + \underbrace{\sum_{\ell=1}^K \lb \sum_{k:k\neq \ell} \la \calA_\ell\calGT\calP_{T_\ell}(\mZ_\ell), \calA_k\calGT\calP_{T_k}(\mZ_k)\ra \rb }_{:=I_2}.
	\end{align*}
\begin{itemize}
	\item Bounding of $I_1$. It is easy to see that
	\begin{align*}
		I_1 \leq \sum_{\ell=1}^{K} \opnorm{\calP_{T_\ell}\calG(\calA_\ell^\ast \calA_\ell - \calI)\calGT\calP_{T_\ell}}\cdot \fronorm{\mZ_\ell}^2\leq \frac{\epsilon}{K}\sum_{\ell=1}^{K}\fronorm{\mZ_\ell}^2
	\end{align*}
holds on the event $\calE_0$ with probability $\Pr{\calE_0} \geq 1-(sn)^{-\gamma+1}$
	\item Bounding of $I_2$. A direct calculation yields that
	\begin{align*}
		I_2 &=\sum_{\ell=1}^{K}\sum_{k:k\neq \ell} \la \calA_\ell\calGT\calP_{T_\ell}(\mZ_\ell), \calA_k\calGT\calP_{T_k}(\mZ_k)\ra\\
		&\leq \sum_{\ell=1}^{K}\sum_{k:k\neq \ell} \opnorm{\calP_{T_\ell}\calG\calA_\ell^\ast \calA_k\calGT\calP_{T_k}}\cdot \fronorm{\mZ_{\ell}}\cdot \fronorm{\mZ_k}\\
		&\leq \frac{\epsilon}{K}\sum_{k:k\neq \ell} \fronorm{\mZ_{\ell}}\cdot \fronorm{\mZ_k}
	\end{align*}
holds on the event $\calE_1$ with probability at least $1-(sn)^{-\gamma+1}$. 
\end{itemize}
Combining together all of the preceding bounds on $I_1$ and $I_2$, we have that
\begin{align*}
	\opnorm{\calP_T\tG(\calA^\ast \calA - \tcalI)\tG^\ast \calP_T}&=\sup_{\mZ=\blkdiag(\mZ_1,\cdots, \mZ_K):\fronorm{\mZ}=1}\la \mZ, \calP_T\tG(\calA^\ast \calA - \tcalI)\tG^\ast \calP_T(\mZ)\ra\\
	&\leq \frac{\epsilon}{K}\sum_{\ell=1}^{K}\fronorm{\mZ_\ell}^2 + \frac{\epsilon}{K}\sum_{\ell=1}^{K} \sum_{k:k\neq \ell} \fronorm{\mZ_{\ell}}\cdot \fronorm{\mZ_k}\\
	&=\frac{\epsilon}{K}\sum_{\ell=1}^{K} \fronorm{\mZ_\ell} \lb \fronorm{\mZ_{\ell}} +\sum_{k:k\neq \ell} \fronorm{\mZ_k} \rb\\
	&=\frac{\epsilon}{K} \lb \sum_{\ell=1}^{K} \fronorm{\mZ_\ell}\rb^2\\
	&\leq \epsilon \sum_{\ell=1}^{K} \fronorm{\mZ_\ell}^2\\
	&=\epsilon
\end{align*}
holds with probability at least $1-2(sn)^{-\gamma+1}$.
\end{proof}

\begin{lemma}
	\label{coro rip}
Suppose $n\geq C_{\gamma} \epsilon^{-2}K^2\mu_0 s\mu_1r\log(sn)$.  Then with probability at least $1-2(sn)^{-\gamma+1}$, there holds the following inequalities
	\begin{align*}
		\opnorm{\calA\tG^\ast\calP_T} &\leq \sqrt{1 +\epsilon},\\
		\opnorm{\calP_T\tG\lb \calA^\ast\calA - \calI \rb\tG^\ast } &\leq \sqrt{2K\mu_0s(1+\epsilon)}.
	\end{align*}
\end{lemma}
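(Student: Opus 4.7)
The plan is to derive both inequalities from Lemma \ref{lemma: T-RIP} together with the facts that $\tG$ is an isometry (so $\tG^\ast\tG=\calI$ and $\opnorm{\tG^\ast}\leq 1$) and that the incoherence assumption gives a crude but sufficient bound on $\opnorm{\calA\tG^\ast}$.

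For the first inequality I would fix any $\mZ$ with $\fronorm{\mZ}=1$ and compute
\begin{align*}
\twonorm{\calA\tG^\ast\calP_T(\mZ)}^2
&= \la \mZ,\, \calP_T\tG\calA^\ast\calA\tG^\ast\calP_T(\mZ)\ra \\
&= \la \mZ,\, \calP_T\tG(\calA^\ast\calA-\calI)\tG^\ast\calP_T(\mZ)\ra + \fronorm{\tG^\ast\calP_T(\mZ)}^2.
\end{align*}
Lemma \ref{lemma: T-RIP} bounds the first term by $\epsilon$, while the isometry property of $\tG$ together with $\fronorm{\calP_T(\mZ)}\leq 1$ bounds the second by $1$. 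Taking the supremum over $\mZ$ gives $\opnorm{\calA\tG^\ast\calP_T}\leq \sqrt{1+\epsilon}$.

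For the second inequality I would split
\begin{align*}
\calP_T\tG(\calA^\ast\calA - \calI)\tG^\ast \;=\; (\calP_T\tG\calA^\ast)(\calA\tG^\ast) - \calP_T\tG\tG^\ast.
\end{align*}
By duality, $\opnorm{\calP_T\tG\calA^\ast}=\opnorm{\calA\tG^\ast\calP_T}\leq \sqrt{1+\epsilon}$ from the first bound, and $\opnorm{\calP_T\tG\tG^\ast}\leq 1$ since both $\calP_T$ and $\tG\tG^\ast$ are orthogonal projections. The only nontrivial piece is $\opnorm{\calA\tG^\ast}$, which I would control entrywise: for $\mW=\blkdiag(\mW_1,\ldots,\mW_K)$ with $\fronorm{\mW}=1$,
\begin{align*}
\lab (\calA\tG^\ast(\mW))[j]\rab^2
= \lab \sum_{k=1}^K \vb_{k,j}^\tranH \calGT(\mW_k)\ve_j\rab^2
\leq K\mu_0 s \sum_{k=1}^K \twonorm{\calGT(\mW_k)\ve_j}^2,
\end{align*}
by Cauchy--Schwarz and Assumption \ref{assumption 1}. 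Summing over $j$ and using $\fronorm{\calGT(\mW_k)}\leq \fronorm{\mW_k}$ yields $\opnorm{\calA\tG^\ast}\leq \sqrt{K\mu_0 s}$. Combining these three estimates,
\begin{align*}
\opnorm{\calP_T\tG(\calA^\ast\calA - \calI)\tG^\ast} \leq \sqrt{K\mu_0 s(1+\epsilon)} + 1 \leq \sqrt{2K\mu_0 s(1+\epsilon)},
\end{align*}
where the last inequality holds whenever $K\mu_0 s(1+\epsilon)$ exceeds a modest constant (roughly $3+2\sqrt{2}$), which is automatic in the sample-complexity regime of the paper.

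The main obstacle I anticipate is bookkeeping rather than anything deep: keeping track of which identity map $\calI$ acts on which space, verifying that the two $(sn)^{-\gamma+1}$ failure events from Lemma \ref{lemma: T-RIP} suffice for both conclusions (so the $2(sn)^{-\gamma+1}$ in the statement is essentially a union bound over the two applications), and confirming that the $+1$ arising from $\opnorm{\calP_T\tG\tG^\ast}$ is harmlessly absorbed into the $\sqrt{2}$ factor. Once these are settled the argument is purely algebraic, and in particular requires no new concentration inequalities beyond those already invoked.
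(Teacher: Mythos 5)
Your proof is correct and follows essentially the same route as the paper: the first inequality is the identical polarization argument via Lemma~\ref{lemma: T-RIP}, and the second combines the duality bound $\opnorm{\calP_T\tG\calA^\ast}\leq\sqrt{1+\epsilon}$ with the crude $\sqrt{K\mu_0 s}$ estimate and a triangle inequality, exactly as the paper does. The only (cosmetic) difference is in the second inequality, where you keep $\tG^\ast$ attached and bound $\opnorm{\calA\tG^\ast}$ directly, whereas the paper first drops $\tG^\ast$ by $\opnorm{\tG^\ast}\leq 1$ and then bounds $\opnorm{\calA}$; your version is a touch cleaner and your explicit note that the final absorption of $+1$ into the $\sqrt{2}$ factor needs $K\mu_0 s(1+\epsilon)\gtrsim 3+2\sqrt{2}$ is a detail the paper leaves implicit.
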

\begin{proof}
	
	For any block diagonal matrix $\mZ=\blkdiag(\mZ_1,\cdots, \mZ_K)$ such that $\fronorm{\mZ}=1$, one has
	\begin{align*}
		\fronorm{\calA\tG^\ast \calP_T(\mZ)}^2 &= \la\calA\tG^\ast \calP_T(\mZ), \calA\tG^\ast \calP_T(\mZ)\ra\\
		&= \la\calP_T\tG\calA^\ast\calA\tG^\ast \calP_T(\mZ), \mZ\ra\\
		&= \la\calP_T\tG \lb \calA^\ast\calA - \calI \rb\tG^\ast \calP_T(\mZ), \mZ\ra + \la \calP_T\tG\tG^\ast\calP_T(\mZ), \mZ\ra\\
		&\leq \opnorm{\calP_T\tG \lb \calA^\ast\calA - \calI \rb\tG^\ast \calP_T}\cdot \fronorm{\mZ}^2 + \fronorm{\mZ}^2\\
		&\leq (1+\epsilon) \fronorm{\mZ}^2.
	\end{align*}
Secondly, we bound the spectral norm of $\calP_T\tG\lb \calA^\ast\calA - \calI \rb\tG^\ast$. A direct computation yields that
	\begin{align*}
	\opnorm{\calP_T\tG\lb \calA^\ast\calA - \calI \rb\tG^\ast } &\leq \opnorm{\calP_T\tG\lb \calA^\ast\calA - \calI \rb } \\
	&\leq \opnorm{\calP_T\tG  \calA^\ast\calA  } +1\\
	&\leq \opnorm{\calP_T\tG  \calA^\ast} \cdot \opnorm{\calA}+1\\
	&\leq \sqrt{K\mu_0s(1+\epsilon)}+1\\
	&\leq \sqrt{2K\mu_0s(1+\epsilon)},
\end{align*}
where we have used the fact that $\opnorm{\tG^\ast}\leq 1$ and $\opnorm{\calA}\leq \sqrt{K\mu_0s}$.The fact $\opnorm{\calA}\leq \sqrt{K\mu_0s}$  can be proved as follows:
for any block diagonal matrix $\mZ=\blkdiag(\mZ_1,\cdots, \mZ_K)$, one has
\begin{align*}
	\opnorm{\calA\tG(\mZ)} &= \fronorm{\sum_{k=1}^{K}\calA_k\calG(\mZ_k)} \leq \sum_{k=1}^{K} \sqrt{\mu_0s}\fronorm{\mZ_k} \leq \sqrt{K\mu_0s}\fronorm{\mZ}.
\end{align*}
Thus we complete the proof. 
\end{proof}

\begin{lemma}
	\label{lemma inner product}
	Let $\mPhi_1, \mPhi_2, \mPsi_1, \mPsi_2$ be block diagonal matrices such that $\mPhi_1,\mPhi_2\in T $ and $\mPsi_1, \mPsi_2\in T^\perp$. Suppose $n\geq C_{\gamma} \epsilon^{-2}K^2\mu_0 s\mu_1r\log(sn)$.  Then with probability at least $1-2(sn)^{-\gamma+1}$, there holds the following inequality
	\begin{align*}
		\left| \la \tG \lb \calA^\ast\calA-\calI \rb\tG^\ast(\mPhi_1 + \mPsi_1), \mPhi_2+\mPsi_2\ra\right| &\leq \epsilon \cdot \fronorm{\mPhi_1} \cdot  \fronorm{\mPhi_2} \\
		&\quad + \sqrt{2K\mu_0s(1+\epsilon)}\cdot \lb \fronorm{\mPhi_1} \cdot \fronorm{\mPsi_2} +\fronorm{ \mPsi_1}\cdot \fronorm{\mPhi_2} \rb \\
		&\quad + 2K\mu_0s \fronorm{\mPsi_1}\cdot \fronorm{\mPsi_2}.
	\end{align*}
\end{lemma}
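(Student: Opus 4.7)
The plan is to decompose the inner product into four pieces by bilinearity and then bound each piece using the appropriate operator-norm estimate already proved earlier in the excerpt. Writing $\calM := \tG(\calA^\ast\calA - \calI)\tG^\ast$, the quantity of interest expands as
\begin{align*}
\la \calM(\mPhi_1+\mPsi_1),\mPhi_2+\mPsi_2\ra
&= \la \calM\mPhi_1,\mPhi_2\ra + \la \calM\mPhi_1,\mPsi_2\ra + \la \calM\mPsi_1,\mPhi_2\ra + \la \calM\mPsi_1,\mPsi_2\ra.
\end{align*}
The goal is to match each term to the corresponding summand on the right-hand side of the claim.

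First, for the $(\mPhi_1,\mPhi_2)$ piece, both arguments lie in $T$, so I can insert $\calP_T$ on both sides and apply Lemma \ref{lemma: T-RIP}, obtaining $|\la\calM\mPhi_1,\mPhi_2\ra| \le \opnorm{\calP_T\calM\calP_T}\cdot\fronorm{\mPhi_1}\fronorm{\mPhi_2} \le \epsilon\fronorm{\mPhi_1}\fronorm{\mPhi_2}$. For the two cross terms, I exploit the self-adjointness of $\calM$ (since $\calA^\ast\calA-\calI$ is self-adjoint and $\tG^\ast$ is the adjoint of $\tG$) together with the fact that one of the two arguments lies in $T$: for example, $\la\calM\mPhi_1,\mPsi_2\ra = \la\calP_T\mPhi_1,\calM\mPsi_2\ra = \la\mPhi_1,\calP_T\calM\mPsi_2\ra$, and then Lemma \ref{coro rip} gives the required factor $\sqrt{2K\mu_0s(1+\epsilon)}$. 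The $(\mPsi_1,\mPhi_2)$ piece is handled symmetrically.

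The last piece, $\la\calM\mPsi_1,\mPsi_2\ra$, cannot be controlled by any $T$-restricted estimate since neither argument lies in $T$. Here I would instead use the crude deterministic bound
\begin{align*}
\opnorm{\calM} \le \opnorm{\tG\calA^\ast\calA\tG^\ast} + 1 \le \opnorm{\calA\tG^\ast}^2 + 1 \le K\mu_0 s + 1 \le 2K\mu_0 s,
\end{align*}
using $\opnorm{\tG^\ast}\le 1$ together with the bound $\opnorm{\calA}\le \sqrt{K\mu_0 s}$ that was already established in the proof of Lemma \ref{coro rip} (via $\fronorm{\sum_k \calA_k\calG(\mZ_k)} \le \sum_k \sqrt{\mu_0 s}\fronorm{\mZ_k}\le \sqrt{K\mu_0 s}\fronorm{\mZ}$). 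Cauchy--Schwarz then yields $|\la\calM\mPsi_1,\mPsi_2\ra|\le 2K\mu_0 s\cdot\fronorm{\mPsi_1}\fronorm{\mPsi_2}$.

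Summing the four bounds, and noting that the high-probability events of Lemmas \ref{lemma: T-RIP} and \ref{coro rip} both hold simultaneously under the stated sample complexity (by a union bound they occur with probability at least $1-2(sn)^{-\gamma+1}$), delivers the claimed inequality. The reasoning is almost entirely bookkeeping; the only subtle point is remembering that $\calM$ is self-adjoint so that the projection $\calP_T$ can be freely moved across the inner product to activate Lemma \ref{coro rip} on the cross terms. I do not anticipate a genuine obstacle here beyond that, and the $2K\mu_0 s$ coefficient in the final term is essentially forced by the weakness of any deterministic estimate for $\calM$ on $T^\perp\times T^\perp$.
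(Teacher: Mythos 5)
Your proof is correct and follows essentially the same route as the paper: expand by bilinearity into four terms, use Lemma~\ref{lemma: T-RIP} for the $T\times T$ term, Lemma~\ref{coro rip} for the two cross terms (the paper inserts $\calP_T$ directly rather than invoking self-adjointness, but that is the same maneuver), and the deterministic bound $\opnorm{\tG(\calA^\ast\calA-\calI)\tG^\ast}\le\opnorm{\calA}^2+1\le 2K\mu_0 s$ for the $T^\perp\times T^\perp$ term. The only nit is the probability bookkeeping: the event of Lemma~\ref{coro rip} already subsumes that of Lemma~\ref{lemma: T-RIP} (its proof relies on the $T$-restricted RIP), and the last estimate is deterministic, so the claimed $1-2(sn)^{-\gamma+1}$ comes from Lemma~\ref{coro rip} alone rather than from a fresh union bound over the two lemmas.
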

\begin{proof}
	The triangle inequality gives that
	\begin{align*}
		\left| \la \tG \lb \calA^\ast\calA-\calI \rb\tG^\ast(\mPhi_1 + \mPsi_1), \mPhi_2+\mPsi_2\ra\right| 
		&\leq 	\left| \la \tG \lb \calA^\ast\calA-\calI \rb\tG^\ast(\mPhi_1), \mPhi_2\ra\right|+\left| \la \tG \lb \calA^\ast\calA-\calI \rb\tG^\ast(\mPhi_1), \mPsi_2\ra\right|\\
		&\quad +\left| \la \tG \lb \calA^\ast\calA-\calI \rb\tG^\ast(\mPsi_1), \mPhi_2\ra\right|+\left| \la \tG \lb \calA^\ast\calA-\calI \rb\tG^\ast(\mPsi_1), \mPsi_2\ra\right|\\
		&=\underbrace{\left| \la \tG \lb \calA^\ast\calA-\calI \rb\tG^\ast\calP_T(\mPhi_1), \calP_T(\mPhi_2)\ra\right|}_{:=I_3}\\
		&\quad +\underbrace{\left| \la \tG \lb \calA^\ast\calA-\calI \rb\tG^\ast\calP_T(\mPhi_1), \mPsi_2\ra\right|+\left| \la \tG \lb \calA^\ast\calA-\calI \rb\tG^\ast(\mPsi_1), \calP_T(\mPhi_2)\ra\right|}_{:=I_4}\\
		&\quad +\underbrace{\left| \la \tG \lb \calA^\ast\calA-\calI \rb\tG^\ast(\mPsi_1), \mPsi_2\ra\right|}_{:=I_5}.
	\end{align*}
	\begin{itemize}
		\item Bounding of $I_3$. A direct computation yields that
		\begin{align*}
			I_1 &=\left| \la \tG \lb \calA^\ast\calA-\calI \rb\tG^\ast\calP_T(\mPhi_1), \calP_T(\mPhi_2)\ra\right|\\
			&\leq \opnorm{ \calP_T \tG \lb \calA^\ast\calA-\calI \rb\tG^\ast\calP_T}\cdot \fronorm{\mPhi_1}\cdot \fronorm{\mPhi_2}\\
			&\leq\epsilon\cdot \fronorm{\mPhi_1} \cdot  \fronorm{\mPhi_2},
		\end{align*}
		where the last line is due to Lemma \ref{lemma: T-RIP}.
		\item Bounding of $I_4$. It is easy to see that 
		\begin{align*}
			I_2 &\leq \left| \la \tG \lb \calA^\ast\calA-\calI \rb\tG^\ast\calP_T(\mPhi_1), \mPsi_2\ra\right|+\left| \la \tG \lb \calA^\ast\calA-\calI \rb\tG^\ast(\mPsi_1), \calP_T(\mPhi_2)\ra\right| \\
			&\leq \opnorm{\tG \lb \calA^\ast\calA-\calI \rb\tG^\ast\calP_T}\cdot \fronorm{\mPhi} \cdot \fronorm{\mPsi_2} + \opnorm{\tG \lb \calA^\ast\calA-\calI \rb\tG^\ast\calP_T}\cdot \fronorm{ \mPsi_1}\cdot \fronorm{\mPhi_2}\\
			&\leq \sqrt{2K\mu_0s(1+\epsilon)}\cdot \lb \fronorm{\mPhi_1} \cdot \fronorm{\mPsi_2}  +\fronorm{ \mPsi_1}\cdot \fronorm{\mPhi_2} \rb,
		\end{align*}
		where the last line is due to Lemma \ref{coro rip}.
		\item Bounding of $I_5$. A straightforward computation shows that
		\begin{align*}
			I_3 &=\left| \la \tG \lb \calA^\ast\calA-\calI \rb\tG^\ast(\mPsi_1), \mPsi_2\ra\right|\\
			&\leq \opnorm{\tG \lb \calA^\ast\calA-\calI \rb\tG^\ast} \cdot \fronorm{\mPsi_1}\cdot \fronorm{\mPsi_2}\\
			&\leq \lb \opnorm{\calA}^2 + 1\rb\fronorm{\mPsi_1}\cdot \fronorm{\mPsi_2}\\
			&\leq 2K\mu_0s \fronorm{\mPsi_1}\cdot \fronorm{\mPsi_2}.
		\end{align*}
	\end{itemize}
	Combing together, one has
	\begin{align*}
		\left| \la \tG \lb \calA^\ast\calA-\calI \rb\tG^\ast(\mPhi_1 + \mPsi_1), \mPhi_2+\mPsi_2\ra\right|  &\leq\epsilon \cdot \fronorm{\mPhi_1} \cdot  \fronorm{\mPhi_2} \\
		&\quad + \sqrt{2K\mu_0s(1+\epsilon)}\cdot \lb \fronorm{\mPhi_1} \cdot \fronorm{\mPsi_2} +\fronorm{ \mPsi_1}\cdot \fronorm{\mPhi_2} \rb \\
		&\quad + 2K\mu_0s \fronorm{\mPsi_1}\cdot \fronorm{\mPsi_2},
	\end{align*}
which completes the proof.
\end{proof}

\begin{lemma}
	\label{lemma a}
Let the SVD of the matrix $\mZ_{k,\natural}$ be $\mZ_{k, \natural}=\mU_{k,\natural}\bSigma_{k,\natural}\mV_{k,\natural}^\tranH$. Denote $\mL_{k,\natural} = \mU_{k,\natural}\bSigma_{k,\natural}^{\frac{1}{2}}, \mR_{k,\natural} = \mV_{k,\natural}\bSigma_{k,\natural}^{\frac{1}{2}}, \mDelta_k = \mL_k\mO_k - \mL_{k, \natural} $ and $\mUpsilon_k=\mR_k\mO_k-\mR_{k,\natural}$, where $\mO_k = \arg\min_{\mO^\tranH\mO=\mO\mO^\tranH=\mI_r} \sqrt{ \fronorm{\mL_k\mO - \mL_{k,\natural}}^2 + \fronorm{\mR_k\mO - \mR_{k,\natural}}^2}$. If  it holds that $\fronorm{\mL_k\mR_k^\tranH - \mZ_{k, \natural}} \leq \frac{\delta}{\sqrt{K}}\sigma_r(\mZ_{k, \natural})$ for all $k\in [K]$ and $\delta <\frac{1}{2\kappa}$, then the following inequalities are satisfied:
	\begin{align*}
		\max\lb \opnorm{\mDelta\bSigma_{\natural}^{-\frac{1}{2}}} , \opnorm{\mUpsilon\bSigma_{\natural}^{-\frac{1}{2}}} \rb  &\leq \frac{2\kappa\delta }{\sqrt{K}},\\
		\opnorm{\mDelta(\mL^\tranH\mL)^{-\frac{1}{2}}} &\leq \frac{1}{\sqrt{K}}\frac{2\kappa\delta}{1-2\kappa\delta},\\ \opnorm{\mUpsilon(\mR^\tranH\mR)^{-\frac{1}{2}}} &\leq \frac{1}{\sqrt{K}}\frac{2\kappa\delta}{1-2\kappa\delta},
	\end{align*}
where $\mL, \mR, \bSigma_{ \natural}, \mDelta, \mUpsilon$ are block diagonal matrices constructed from the corresponding matrices $\mL_k, \mR_k, \bSigma_{k,\natural}, \mDelta_k, \mUpsilon_k$, respectivel, and $\kappa = \frac{\max_k \sigma_1(\mZ_{k, \natural})}{\min_k\sigma_r(\mZ_{k, \natural})}$.
\end{lemma}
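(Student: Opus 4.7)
The proof reduces immediately to per-block estimates: because $\mDelta$, $\mUpsilon$, $\mL$, $\mR$, and $\bSigma_{\natural}$ are all block diagonal with blocks indexed by $k\in[K]$, every spectral norm appearing in the lemma equals $\max_{k}$ of the corresponding $k$-block spectral norm. It therefore suffices to fix $k$ and prove, under the per-block hypothesis $\fronorm{\mL_k\mR_k^{\tran\mathsf{H}} - \mZ_{k,\natural}}\leq \delta\sigma_r(\mZ_{k,\natural})/\sqrt{K}$, the four per-block inequalities with $\mDelta_k,\mUpsilon_k,\bSigma_{k,\natural},\mL_k,\mR_k$ in place of the block-diagonal objects.

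For the first pair of bounds I plan to invoke the standard Procrustes-alignment perturbation result for balanced factorizations (the factor pair $(\mL_{k,\natural},\mR_{k,\natural}) = (\mU_{k,\natural}\bSigma_{k,\natural}^{1/2},\mV_{k,\natural}\bSigma_{k,\natural}^{1/2})$ is balanced by construction). Such a result, which the paper has already invoked elsewhere via Lemma \ref{lemma 24 cong}, yields a bound of the form $\opnorm{\mDelta_k}\lesssim \opnorm{\mL_k\mR_k^{\tran\mathsf{H}} - \mZ_{k,\natural}}/\sqrt{\sigma_r(\mZ_{k,\natural})}$ once $\mO_k$ is the Procrustes-optimal rotation. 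Plugging in $\opnorm{\mL_k\mR_k^{\tran\mathsf{H}} - \mZ_{k,\natural}}\leq \fronorm{\mL_k\mR_k^{\tran\mathsf{H}} - \mZ_{k,\natural}}\leq \delta\sigma_r(\mZ_{k,\natural})/\sqrt{K}$ and then multiplying by $\opnorm{\bSigma_{k,\natural}^{-1/2}} = 1/\sqrt{\sigma_r(\mZ_{k,\natural})}$ gives $\opnorm{\mDelta_k\bSigma_{k,\natural}^{-1/2}}\leq C\sqrt{\kappa_k}\,\delta/\sqrt{K}$. Since $\kappa_k := \sigma_1(\mZ_{k,\natural})/\sigma_r(\mZ_{k,\natural}) \leq \kappa$ and $\sqrt{\kappa_k}\leq \kappa$, after absorbing the constant into the factor $2$ I obtain the advertised $2\kappa\delta/\sqrt{K}$. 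The bound for $\mUpsilon_k$ is identical by symmetry of the roles of $\mL$ and $\mR$ in the Procrustes problem.

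For the second pair of bounds I plan to use the spectral-gap consequence of the first bound. Writing $\mL_k^{\tran\mathsf{H}}\mL_k = \mO_k^{\tran\mathsf{H}}(\mL_{k,\natural}+\mDelta_k)^{\tran\mathsf{H}}(\mL_{k,\natural}+\mDelta_k)\mO_k$, unitarity of $\mO_k$ gives $\sigma_r(\mL_k) = \sigma_r(\mL_{k,\natural}+\mDelta_k)$; Weyl's inequality then yields
\begin{align*}
\sigma_r(\mL_k) \;\geq\; \sigma_r(\mL_{k,\natural}) - \opnorm{\mDelta_k} \;\geq\; \sqrt{\sigma_r(\mZ_{k,\natural})}\Bigl(1 - \tfrac{2\kappa\delta}{\sqrt{K}}\sqrt{\kappa_k}\Bigr) \;\geq\; (1-2\kappa\delta)\sqrt{\sigma_r(\mZ_{k,\natural})},
\end{align*}
using $\opnorm{\mDelta_k}\leq \opnorm{\mDelta_k\bSigma_{k,\natural}^{-1/2}}\cdot\opnorm{\bSigma_{k,\natural}^{1/2}}$ and $\sqrt{\kappa_k}\leq\kappa$; the assumption $\delta<1/(2\kappa)$ keeps the factor positive. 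Consequently $\opnorm{(\mL_k^{\tran\mathsf{H}}\mL_k)^{-1/2}} \leq [(1-2\kappa\delta)\sqrt{\sigma_r(\mZ_{k,\natural})}]^{-1}$, and submultiplicativity
\begin{align*}
\opnorm{\mDelta_k(\mL_k^{\tran\mathsf{H}}\mL_k)^{-1/2}} \;\leq\; \opnorm{\mDelta_k\bSigma_{k,\natural}^{-1/2}}\cdot \opnorm{\bSigma_{k,\natural}^{1/2}(\mL_k^{\tran\mathsf{H}}\mL_k)^{-1/2}}
\end{align*}
combined with the already-established $\opnorm{\mDelta_k\bSigma_{k,\natural}^{-1/2}}\leq 2\kappa\delta/\sqrt{K}$ and $\opnorm{\bSigma_{k,\natural}^{1/2}(\mL_k^{\tran\mathsf{H}}\mL_k)^{-1/2}}\leq 1/(1-2\kappa\delta)$ delivers the claim. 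The bound for $\mUpsilon_k$ follows verbatim with $(\mL,\mL_{k,\natural})$ replaced by $(\mR,\mR_{k,\natural})$.

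The main obstacle is the Procrustes perturbation estimate $\opnorm{\mDelta_k}\lesssim \opnorm{\mL_k\mR_k^{\tran\mathsf{H}} - \mZ_{k,\natural}}/\sqrt{\sigma_r(\mZ_{k,\natural})}$ with the \emph{spectral} norm on the left-hand side; the standard version of this perturbation is typically proved in Frobenius norm, and pushing it to spectral norm with the right constant (so that the $\sqrt{\kappa_k}$ appearing in the intermediate step can be cleanly absorbed into a single $\kappa$ factor) requires care. The rest of the argument is routine scalar arithmetic and Weyl-type perturbation, and the hypothesis $\delta < 1/(2\kappa)$ is exactly what is needed to keep $1-2\kappa\delta$ positive throughout.
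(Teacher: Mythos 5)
Your reduction to per-block estimates and your handling of the first pair of bounds matches the paper's route in spirit (Lemma~\ref{lemma 24 cong} in Frobenius norm, then $\opnorm{\cdot}\leq\fronorm{\cdot}$); the $\sqrt{\kappa_k}$ you advertise there does not actually appear in the arithmetic (the chain gives $\sqrt{\sqrt{2}+1}\,\delta/\sqrt{K}$ with no $\kappa$-factor), but that slip is on the conservative side and so does no harm.

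The genuine gap is in the second pair of bounds. Your Weyl chain
\begin{align*}
\sigma_r(\mL_k) \;\geq\; \sqrt{\sigma_r(\mZ_{k,\natural})}\Bigl(1-\tfrac{2\kappa\delta}{\sqrt{K}}\sqrt{\kappa_k}\Bigr) \;\geq\; (1-2\kappa\delta)\sqrt{\sigma_r(\mZ_{k,\natural})}
\end{align*}
needs $\sqrt{\kappa_k}/\sqrt{K}\leq 1$ (equivalently $\kappa_k\leq K$), which is not an assumption; $\sqrt{\kappa_k}\leq\kappa$ does not help either, since that would require $\kappa\leq\sqrt{K}$. More importantly, even granting $\opnorm{(\mL_k^\tranH\mL_k)^{-1/2}}\leq[(1-2\kappa\delta)\sqrt{\sigma_r(\mZ_{k,\natural})}]^{-1}$, submultiplicativity alone gives
\begin{align*}
\opnorm{\bSigma_{k,\natural}^{1/2}(\mL_k^\tranH\mL_k)^{-1/2}}\;\leq\;\opnorm{\bSigma_{k,\natural}^{1/2}}\cdot\opnorm{(\mL_k^\tranH\mL_k)^{-1/2}}\;\leq\;\frac{\sqrt{\kappa_k}}{1-2\kappa\delta},
\end{align*}
not the $1/(1-2\kappa\delta)$ you assert; multiplying by $\opnorm{\mDelta_k\bSigma_{k,\natural}^{-1/2}}\leq 2\kappa\delta/\sqrt{K}$ then overshoots the stated bound by $\sqrt{\kappa_k}$. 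The paper avoids this by invoking Lemma~\ref{lemma 25 cong}, which estimates the \emph{weighted} quantity $\opnorm{\mL_k(\mL_k^\tranH\mL_k)^{-1}\bSigma_{k,\natural}^{1/2}}=\opnorm{(\mL_k^\tranH\mL_k)^{-1/2}\bSigma_{k,\natural}^{1/2}}$ directly by $1/(1-\opnorm{\mDelta_k\bSigma_{k,\natural}^{-1/2}})$; the $\bSigma_{k,\natural}^{1/2}$ weighting must be handled inside the inverse, not peeled off by a product of operator norms. Passing through the unweighted $\sigma_r(\mL_k)$ via Weyl is intrinsically too crude here, so you need Lemma~\ref{lemma 25 cong} (or an equivalent weighted perturbation estimate) to close the argument.
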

\begin{proof}
	Notice that $\mDelta\bSigma_{\natural}^{-\frac{1}{2}}$ is a block diagonal matrix. A direct computation yields that 
	\begin{align*}
		\opnorm{\mDelta\bSigma_{\natural}^{-\frac{1}{2}}} &=\max_k \opnorm{\mDelta_k\bSigma_{k,\natural}^{-\frac{1}{2}}}\\
		&\leq \max_k \lb \opnorm{\mDelta_k\bSigma_{k,\natural}^{\frac{1}{2}}} \cdot \opnorm{\bSigma_{k,\natural}^{-1}}\rb\\
		&=\max_k \lb \opnorm{\mDelta_k\bSigma_{k,\natural}^{\frac{1}{2}}} \cdot\frac{1}{\sigma_r(\mZ_{k, \natural})}\rb\\
		&\leq \frac{1}{\min_k\sigma_r(\mZ_{k, \natural})}\cdot \max_k  \opnorm{\mDelta_k\bSigma_{k,\natural}^{\frac{1}{2}}} \\
		&\stackrel{(a)}{\leq}\frac{1}{\min_k\sigma_r(\mZ_{k, \natural})}\cdot \max_k  \lb \sqrt{\sqrt{2}+1}\fronorm{\mL_k\mR_k^\tranH - \mZ_{k, \natural}}\rb\\
		&\leq \frac{2}{\min_k\sigma_r(\mZ_{k, \natural})}\cdot \frac{\delta}{\sqrt{K}}\max_k\sigma_r(\mZ_{k,\natural})\\
		&\leq \frac{2}{\min_k\sigma_r(\mZ_{k, \natural})}\cdot \frac{\delta}{\sqrt{K}}\max_k\sigma_1(\mZ_{k,\natural})\\
		&=\frac{2\kappa\delta}{\sqrt{K}},
	\end{align*}
	where step (a) is due to Lemma \ref{lemma 24 cong}.  Moreover, a direct computation yields that 
	\begin{align*}
		\opnorm{\mDelta(\mL^\tranH\mL)^{-\frac{1}{2}}} &=\opnorm{\mDelta\bSigma_{\natural}^{-\frac{1}{2}}\bSigma_{\natural}^{\frac{1}{2}}(\mL^\tranH\mL)^{-\frac{1}{2}}}\\
		&\leq \opnorm{\mDelta\bSigma_{\natural}^{-\frac{1}{2}}}\cdot \opnorm{\bSigma_{\natural}^{\frac{1}{2}}(\mL^\tranH\mL)^{-\frac{1}{2}}}\\
		&\leq \max_k \opnorm{(\mL_k^\tranH\mL_k)^{-\frac{1}{2}} \bSigma_{k,\natural}^{\frac{1}{2}} }\cdot \frac{2\kappa\delta}{\sqrt{K}}\\
		&\stackrel{(a)}{\leq } \frac{1}{1-\min_k \opnorm{\mDelta_k\bSigma_{k, \natural}^{-\frac{1}{2}}}}\cdot \frac{2\kappa\delta}{\sqrt{K}}\\
		&\leq \frac{1}{1-\opnorm{\mDelta\bSigma_{ \natural}^{-\frac{1}{2}}}}\cdot \frac{2\kappa\delta}{\sqrt{K}}\\
		&\leq \frac{1}{\sqrt{K}}\frac{2\kappa \delta}{1-\frac{2\kappa \delta}{\sqrt{K}}}\\
		&\leq \frac{1}{\sqrt{K}}\frac{2\kappa \delta}{1-2\kappa \delta},
	\end{align*}
	where step (a) follows from Lemma \ref{lemma 25 cong}. 	Using the same argument, one can show that 
	\begin{align*}
		\opnorm{\mUpsilon\bSigma_{\natural}^{-\frac{1}{2}}} &\leq  \frac{2\kappa\delta}{\sqrt{K}},\\
		\opnorm{\mUpsilon(\mR^\tranH\mR)^{-\frac{1}{2}}} &\leq \frac{1}{\sqrt{K}}\frac{2\kappa\delta}{1-2\kappa\delta}.
	\end{align*}
	Thus we complete the proof.
\end{proof}

\begin{lemma}
	\label{lemma phi and psi}
Let $\kappa  = \frac{\max_k \sigma_1(\mZ_{k, \natural})}{\min_k \sigma_r(\mZ_{k, \natural})}$.  Suppose $\fronorm{\mL_k\mR_k^\tranH - \mZ_{k, \natural}} \leq \frac{\delta}{\sqrt{K}}\sigma_r(\mZ_{k, \natural})$ for any $k\in[K]$, where $\delta$ is a constant such that $\delta < \frac{1}{6\kappa }$. Then one has
\begin{align*}
		\fronorm{\mDelta\mUpsilon^\tranH} &\leq \frac{6\kappa \delta}{\sqrt{K}}   \fronorm{\mL\mR^\tranH - \mZ_{\natural}},\\
	\lb 1- \frac{6\kappa\delta}{\sqrt{K}}\rb \fronorm{\mL\mR^\tranH - \mZ_{\natural}}&\leq\fronorm{\mL_{\natural}\mUpsilon^\tranH + \mDelta\mR_{\natural}^\tranH } \leq \lb 1+ \frac{6\kappa\delta}{\sqrt{K}}\rb \fronorm{\mL\mR^\tranH - \mZ_{\natural}}.
\end{align*}
\end{lemma}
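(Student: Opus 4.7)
The plan is to prove the upper bound on $\fronorm{\mDelta\mUpsilon^\tranH}$ first, and then derive the two-sided bound on $\fronorm{\mL_\natural\mUpsilon^\tranH+\mDelta\mR_\natural^\tranH}$ as an immediate corollary via the triangle inequality, using the identity
\begin{align*}
\mL\mR^\tranH-\mZ_\natural
= (\mL_\natural+\mDelta)(\mR_\natural+\mUpsilon)^\tranH-\mL_\natural\mR_\natural^\tranH
= \mL_\natural\mUpsilon^\tranH+\mDelta\mR_\natural^\tranH+\mDelta\mUpsilon^\tranH,
\end{align*}
where we have used that $\mL\mO=\mL_\natural+\mDelta$, $\mR\mO=\mR_\natural+\mUpsilon$ and $\mL\mO\mO^\tranH\mR^\tranH=\mL\mR^\tranH$.

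The first step is to exploit the block diagonal structure of $\mDelta$ and $\mUpsilon$, which gives
\begin{align*}
\fronorm{\mDelta\mUpsilon^\tranH}^2 \;=\; \sum_{k=1}^K \fronorm{\mDelta_k\mUpsilon_k^\tranH}^2.
\end{align*}
For each block I insert $\bSigma_{k,\natural}^{-1/2}\bSigma_{k,\natural}^{1/2}=\mI$ and split by the operator--Frobenius inequality,
\begin{align*}
\fronorm{\mDelta_k\mUpsilon_k^\tranH}
\;\leq\; \opnorm{\mDelta_k\bSigma_{k,\natural}^{-1/2}}\cdot\fronorm{\bSigma_{k,\natural}^{1/2}\mUpsilon_k^\tranH}.
\end{align*}
Lemma \ref{lemma a} gives $\opnorm{\mDelta_k\bSigma_{k,\natural}^{-1/2}}\le\opnorm{\mDelta\bSigma_\natural^{-1/2}}\le \frac{2\kappa\delta}{\sqrt{K}}$ (valid since $\delta<\tfrac{1}{6\kappa}<\tfrac{1}{2\kappa}$). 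The second factor I control using the balanced optimal--rotation bound $\fronorm{\mDelta_k\bSigma_{k,\natural}^{1/2}}^2+\fronorm{\mUpsilon_k\bSigma_{k,\natural}^{1/2}}^2\le(\sqrt{2}+1)\fronorm{\mL_k\mR_k^\tranH-\mZ_{k,\natural}}^2$ (Lemma \ref{lemma 24 cong}), which in particular yields $\fronorm{\mUpsilon_k\bSigma_{k,\natural}^{1/2}}\le\sqrt{\sqrt{2}+1}\,\fronorm{\mL_k\mR_k^\tranH-\mZ_{k,\natural}}$. Substituting and summing over $k$ gives
\begin{align*}
\fronorm{\mDelta\mUpsilon^\tranH}^2
\;\le\; \lb\frac{2\kappa\delta}{\sqrt{K}}\rb^2(\sqrt{2}+1)\sum_{k=1}^K\fronorm{\mL_k\mR_k^\tranH-\mZ_{k,\natural}}^2,
\end{align*}
and taking square roots produces the first claim, absorbing $2\sqrt{\sqrt{2}+1}<6$ into the constant $6$ stated in the lemma.

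Finally, the two-sided bound on $\fronorm{\mL_\natural\mUpsilon^\tranH+\mDelta\mR_\natural^\tranH}$ follows directly from the identity above and the reverse/forward triangle inequality: for any matrix norm one has $\bigl|\fronorm{A+B}-\fronorm{B}\bigr|\le\fronorm{A}$, so applying this with $A=\mDelta\mUpsilon^\tranH$ and $B=\mL_\natural\mUpsilon^\tranH+\mDelta\mR_\natural^\tranH$ (and the identity $A+B=\mL\mR^\tranH-\mZ_\natural$) immediately yields both the upper bound and the lower bound with the claimed constants. The main place to be careful is tracking the $1/\sqrt{K}$ factor through the per--block spectral bound on $\mDelta_k\bSigma_{k,\natural}^{-1/2}$, which is what inherits the demixing scaling, whereas the Frobenius factor aggregates cleanly by orthogonality of the block supports; that bookkeeping is the only mildly delicate part of the argument.
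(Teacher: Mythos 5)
Your proof is correct and follows essentially the same strategy as the paper: insert $\bSigma_\natural^{\pm 1/2}$, split via $\fronorm{AB}\le\opnorm{A}\fronorm{B}$, invoke Lemma~\ref{lemma a} for the scaled spectral-norm bound and Lemma~\ref{lemma 24 cong} for the scaled Frobenius-norm bound, then deduce the two-sided estimate from the identity $\mL\mR^\tranH-\mZ_\natural=\mL_\natural\mUpsilon^\tranH+\mDelta\mR_\natural^\tranH+\mDelta\mUpsilon^\tranH$ and the triangle inequality.

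The one place you diverge, and actually improve on the paper, is in how you aggregate the blocks. The paper first derives two symmetric scalar bounds $\fronorm{\mDelta\mUpsilon^\tranH}\le\frac{2\kappa\delta}{\sqrt{K}}\fronorm{\mUpsilon\bSigma_\natural^{1/2}}$ and $\fronorm{\mDelta\mUpsilon^\tranH}\le\frac{2\kappa\delta}{\sqrt{K}}\fronorm{\mDelta\bSigma_\natural^{1/2}}$, adds them (a slack step, since the sum dominates the min), and then applies Cauchy--Schwarz to land on $\bigl(\fronorm{\mDelta\bSigma_\natural^{1/2}}^2+\fronorm{\mUpsilon\bSigma_\natural^{1/2}}^2\bigr)^{1/2}$, yielding the constant $2\sqrt{2}\sqrt{\sqrt{2}+1}\approx 4.4$. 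You instead bound each block $\fronorm{\mDelta_k\mUpsilon_k^\tranH}$ once, square and sum over $k$ using the block-orthogonality $\fronorm{\mDelta\mUpsilon^\tranH}^2=\sum_k\fronorm{\mDelta_k\mUpsilon_k^\tranH}^2$, and arrive at the sharper constant $2\sqrt{\sqrt{2}+1}\approx 3.1$. Both are below $6$, so the stated lemma holds either way; your version is cleaner and avoids the unnecessary sum of the two symmetric bounds. The rest of the argument, the forward and reverse triangle inequalities applied to $\mL\mR^\tranH-\mZ_\natural=\bigl(\mL_\natural\mUpsilon^\tranH+\mDelta\mR_\natural^\tranH\bigr)+\mDelta\mUpsilon^\tranH$, is identical to the paper's.
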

\begin{proof}
Firstly, a direct computation yields that 
\begin{align*}
\fronorm{\mDelta\mUpsilon^\tranH} =\fronorm{\mDelta\bSigma_{\natural}^{-\frac{1}{2}} \bSigma_{\natural}^{\frac{1}{2}}\mUpsilon^\tranH}
	\leq \opnorm{\mDelta\bSigma_{\natural}^{-\frac{1}{2}} }  \fronorm{\mUpsilon \bSigma_{\natural}^{\frac{1}{2}}}
	\stackrel{(a)}{\leq } \frac{2\kappa \delta}{\sqrt{K}} \fronorm{\mUpsilon \bSigma_{\natural}^{\frac{1}{2}}},
\end{align*}
where step (a) follows from Lemma \ref{lemma a}. On the other hand, one can also obtain that 
\begin{align*}
\fronorm{\mDelta\mUpsilon^\tranH}=\fronorm{\mDelta\bSigma_{\natural}^{\frac{1}{2}} \bSigma_{\natural}^{-\frac{1}{2}}\mUpsilon^\tranH} \leq \fronorm{\mDelta\bSigma_{\natural}^{\frac{1}{2}} }  \opnorm{\mUpsilon \bSigma_{\natural}^{-\frac{1}{2}}}	\stackrel{(a)}{\leq } \frac{2\kappa\delta}{\sqrt{K}} \fronorm{\mDelta \bSigma_{\natural}^{\frac{1}{2}}},
\end{align*}
where last line is due to Lemma \ref{lemma a}. Combining together, one has
\begin{align*}
\fronorm{\mDelta\mUpsilon^\tranH}&\leq \frac{2\kappa\delta}{\sqrt{K}} \lb \fronorm{\mDelta \bSigma_{\natural}^{\frac{1}{2}}} +\fronorm{\mUpsilon \bSigma_{\natural}^{\frac{1}{2}}} \rb\\
	&\leq  \frac{2\sqrt{2}\kappa\delta}{\sqrt{K}} \lb \fronorm{\mDelta \bSigma_{\natural}^{\frac{1}{2}}}^2 +\fronorm{\mUpsilon \bSigma_{\natural}^{\frac{1}{2}}}^2 \rb^{\frac{1}{2}}\\
	&=\frac{2\sqrt{2}\kappa\delta}{\sqrt{K}} \lb \sum_{k=1}^{K}\lb \fronorm{\mDelta_k \bSigma_{k,\natural}^{\frac{1}{2}}}^2 +\fronorm{\mUpsilon_k \bSigma_{k,\natural}^{\frac{1}{2}}}^2 \rb\rb^{\frac{1}{2}}\\
	&\leq \frac{2\sqrt{2}\kappa\delta}{\sqrt{K}} \lb \sqrt{2}+1 \rb^{\frac{1}{2}} \fronorm{\mL\mR^\tranH - \mZ_{\natural}}\\
	&\leq \frac{6\kappa \delta}{\sqrt{K}}   \fronorm{\mL\mR^\tranH - \mZ_{\natural}},
\end{align*}
where the fourth line is due to Lemma \ref{lemma 24 cong}. 
Secondly, notice that 
\begin{align*}
	\mL\mR^\tranH - \mZ_{\natural} &= (\mL_{\natural} + \mDelta)(\mR_{\natural} + \mUpsilon)^\tranH - \mZ_{\natural} =\mL_{\natural}\mUpsilon^\tranH + \mDelta\mR_{\natural}^\tranH + \mDelta\mUpsilon^\tranH.
\end{align*}
Thus a simple triangular inequality yields that
\begin{align*}
	\fronorm{\mL_{\natural}\mUpsilon^\tranH + \mDelta\mR_{\natural}^\tranH } &= \fronorm{\mL\mR^\tranH -\mZ_{\natural} - \mDelta\mUpsilon^\tranH }\\
	&\leq \fronorm{\mL\mR^\tranH -\mZ_{\natural} }+\fronorm{ \mDelta\mUpsilon^\tranH }\\
	&\leq \lb 1+\frac{6\kappa\delta}{\sqrt{K}}\rb \fronorm{\mL\mR^\tranH - \mZ_{\natural}}.
\end{align*}
On the other hand, one has
\begin{align*}
	\fronorm{\mL_{\natural}\mUpsilon^\tranH + \mDelta\mR_{\natural}^\tranH } 	&\geq \fronorm{\mL\mR^\tranH -\mZ_{\natural} }-\fronorm{ \mDelta\mUpsilon^\tranH }\\
	&\geq \lb 1-\frac{6\kappa\delta}{\sqrt{K}}\rb \fronorm{\mL\mR^\tranH - \mZ_{\natural}}.
\end{align*}
Thus we complete the proof.
\end{proof}

\begin{lemma}
	\label{fact 4}
	Denote $\kappa = \frac{\max_k\sigma_1(\mZ_{k, \natural})}{\min_k\sigma_r(\mZ_{k, \natural})}$. Suppose $\fronorm{\mL_k\mR_k^\tranH - \mZ_{k, \natural}} \leq \frac{\delta}{\sqrt{K}}\sigma_r(\mZ_{k, \natural})$ for any $k\in[K]$, where $\delta$ is a constant such that $\delta<\frac{1}{2\kappa}$. Then one has
	\begin{align*}
		\opnorm{\mL_{\natural}(\mL^\tranH\mL)^{-\frac{1}{2}}} &\leq \frac{1}{1-2\kappa\delta},\\
		\opnorm{\mR_{\natural}(\mR^\tranH\mR)^{-\frac{1}{2}}}&\leq \frac{1}{1-2\kappa\delta},\\
		\opnorm{\mDelta(\mL^\tranH\mL)^{-\frac{1}{2}}} &\leq \frac{1}{\sqrt{K}} \frac{2\kappa\delta}{1-2\kappa\delta},\\
		\opnorm{\mUpsilon(\mR^\tranH\mR)^{-\frac{1}{2}}}&\leq \frac{1}{\sqrt{K}}\frac{2\kappa\delta}{1-2\kappa\delta}
	\end{align*}
\end{lemma}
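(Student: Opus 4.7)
The plan is to notice that the last two inequalities of the statement are already established: Lemma \ref{lemma a} directly delivers $\opnorm{\mDelta(\mL^\tranH\mL)^{-1/2}}\leq \frac{1}{\sqrt{K}}\frac{2\kappa\delta}{1-2\kappa\delta}$ together with its $\mUpsilon$-analogue, so these two claims require no additional work and I would simply quote them.

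Only the bounds on $\opnorm{\mL_\natural(\mL^\tranH\mL)^{-1/2}}$ and $\opnorm{\mR_\natural(\mR^\tranH\mR)^{-1/2}}$ need a separate argument, and by symmetry (swapping the roles of $\mL$ and $\mR$) it suffices to handle the former. Because $\mL_\natural=\mU_\natural\bSigma_\natural^{1/2}$ with $\mU_\natural$ having orthonormal columns, one has $\mL_\natural^\tranH\mL_\natural=\bSigma_\natural$, and so
\begin{align*}
\opnorm{\mL_\natural(\mL^\tranH\mL)^{-1/2}}^2
 &= \opnorm{\mL_\natural(\mL^\tranH\mL)^{-1}\mL_\natural^\tranH}
 = \opnorm{\mU_\natural\bSigma_\natural^{1/2}(\mL^\tranH\mL)^{-1}\bSigma_\natural^{1/2}\mU_\natural^\tranH} \\
 &= \opnorm{\bSigma_\natural^{1/2}(\mL^\tranH\mL)^{-1}\bSigma_\natural^{1/2}}
 = \opnorm{\bSigma_\natural^{1/2}(\mL^\tranH\mL)^{-1/2}}^2,
\end{align*}
where I use $\opnorm{\mU_\natural\mA\mU_\natural^\tranH}=\opnorm{\mA}$ for $\mU_\natural$ having orthonormal columns. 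Since $\mL$ and $\bSigma_\natural$ are both block diagonal, the spectral norm splits across blocks, reducing the problem to bounding $\max_k\opnorm{\bSigma_{k,\natural}^{1/2}(\mL_k^\tranH\mL_k)^{-1/2}}$.

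At this point I would invoke Lemma \ref{lemma 25 cong}, the same technical input already used in step (a) of the proof of Lemma \ref{lemma a}, which yields the blockwise estimate $\opnorm{\bSigma_{k,\natural}^{1/2}(\mL_k^\tranH\mL_k)^{-1/2}}\leq (1-\opnorm{\mDelta_k\bSigma_{k,\natural}^{-1/2}})^{-1}$. Combining this with the bound $\opnorm{\mDelta_k\bSigma_{k,\natural}^{-1/2}}\leq 2\kappa\delta/\sqrt{K}\leq 2\kappa\delta$ supplied by Lemma \ref{lemma a} and taking the maximum over $k$, I arrive at $\opnorm{\mL_\natural(\mL^\tranH\mL)^{-1/2}}\leq (1-2\kappa\delta)^{-1}$, which is exactly the claim. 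I do not foresee a real obstacle: the substantive calculations---propagating the perturbation through the inverse square root and handling the block-diagonal factorization---were already absorbed into Lemma \ref{lemma a}, so the present lemma is essentially a repackaging of those estimates, and the only thing that must be verified is that $\mU_\natural$-orthogonality reduces $\mL_\natural$ to $\bSigma_\natural^{1/2}$ inside the relevant spectral norm, which is the short identity displayed above.
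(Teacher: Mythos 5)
Your proposal is correct and follows essentially the same route as the paper: the bound on $\opnorm{\mL_\natural(\mL^\tranH\mL)^{-1/2}}$ is obtained by reducing to $\opnorm{\bSigma_\natural^{1/2}(\mL^\tranH\mL)^{-1/2}}$, splitting across blocks, rewriting it as $\opnorm{\mL_k(\mL_k^\tranH\mL_k)^{-1}\bSigma_{k,\natural}^{1/2}}$, and then invoking Lemma~\ref{lemma 25 cong} together with the $\opnorm{\mDelta_k\bSigma_{k,\natural}^{-1/2}}$ estimate from Lemma~\ref{lemma a}, exactly as the paper does (your $\opnorm{A}^2=\opnorm{AA^\tranH}$ chain merely makes explicit the identities the paper states without comment). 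Your observation that the last two inequalities are verbatim conclusions of Lemma~\ref{lemma a} is accurate and in fact shortens the paper's proof, which redundantly re-derives the $\mDelta(\mL^\tranH\mL)^{-1/2}$ bound via the same $\opnorm{\mDelta\bSigma_\natural^{-1/2}}\cdot\opnorm{\bSigma_\natural^{1/2}(\mL^\tranH\mL)^{-1/2}}$ factorization already used inside the proof of Lemma~\ref{lemma a}.
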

\begin{proof}
Recall that $\mL_{\natural} =\mU_{\natural}\bSigma_{\natural}^{\frac{1}{2}}= \blkdiag(\mU_{1,\natural}\bSigma_{1,\natural}^{\frac{1}{2}}, \cdots, \mU_{K,\natural}\bSigma_{K,\natural}^{\frac{1}{2}})$ and $\mL=\blkdiag(\mL_1,\cdots, \mL_K)$. A direct computation yields that 
\begin{align*}
	\opnorm{\mL_{\natural}(\mL^\tranH\mL)^{-\frac{1}{2}}} &=\opnorm{\bSigma_{\natural}^{\frac{1}{2}}(\mL^\tranH\mL)^{-\frac{1}{2}}} =\max_k \opnorm{\bSigma_{k,\natural}^{\frac{1}{2}}(\mL_k^\tranH\mL_k)^{-\frac{1}{2}}}
	=\max_k \opnorm{\mL_k(\mL_k^\tranH\mL_k)^{-1}\bSigma_{k,\natural}^{\frac{1}{2}}}\\
	&\stackrel{(a)}{\leq} \max_k\frac{1}{1- \opnorm{\mDelta_k\bSigma_{k,\natural}^{-\frac{1}{2}}}}=\frac{1}{1- \max_k\opnorm{\mDelta_k\bSigma_{k,\natural}^{-\frac{1}{2}}}}
	=\frac{1}{1- \opnorm{\mDelta\bSigma_{\natural}^{-\frac{1}{2}}}}\\
	&\stackrel{(b)}{\leq }\frac{1}{1-\frac{2\kappa\delta}{\sqrt{K}}}\\
	&\leq \frac{1}{1-2\kappa \delta},
\end{align*}
where step (a) follows from Lemma \ref{lemma 25 cong} and step (b) is due to Lemma \ref{lemma a}. Moreover, one has
\begin{align*}
	\opnorm{\mDelta(\mL^\tranH\mL)^{-\frac{1}{2}}} &= \opnorm{\mDelta\bSigma_{\natural}^{-\frac{1}{2}}\bSigma_{\natural}^{\frac{1}{2}}(\mL^\tranH\mL)^{-\frac{1}{2}}}\\
	&\leq \opnorm{\mDelta\bSigma_{\natural}^{-\frac{1}{2}}}\cdot \opnorm{\bSigma_{\natural}^{\frac{1}{2}}(\mL^\tranH\mL)^{-\frac{1}{2}}}\\
	&\leq \frac{2\kappa\delta}{\sqrt{K}}\cdot \frac{1}{1-2\kappa\delta},
\end{align*}
where the last line is due to Lemma \ref{lemma a}.  Using the same argument, one can show that 
\begin{align*}
	\opnorm{\mR_{\natural}(\mR^\tranH\mR)^{-\frac{1}{2}}}&\leq \frac{1}{1-2\kappa \delta},\\
	\opnorm{\mUpsilon(\mR^\tranH\mR)^{-\frac{1}{2}}} &\leq \frac{1}{\sqrt{K}} \frac{2\kappa\delta}{1-2\kappa\delta}.
\end{align*}

\end{proof}

\section{Auxiliary Lemmas}

\begin{lemma}\cite[Lemma VI.3]{mao2022blind}
\label{lemma initial 1}
Suppose that $\mZ_{k,\natural}$ is $\mu_1$-incoherence. Then with probability at least $1-(sn)^{-c_1}$
\begin{align*}
	\opnorm{\calG\calA^\ast_k\calA_k\calG^\ast(\mZ_{k,\natural})- \mZ_{k,\natural}} \leq c_0 \sqrt{\frac{\mu_0s \mu_1 r\log(sn)}{n}}\sigma_1(\mZ_{k,\natural}),
\end{align*}
where $c_0>0, c_1\geq 1$ are absolute constant. 
\end{lemma}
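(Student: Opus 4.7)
The plan is to write the error as a sum of independent mean-zero random matrices and apply the matrix Bernstein inequality. Observe first that since $\mZ_{k,\natural}$ lies in the range of $\calG$, we have $\calG\calG^\ast(\mZ_{k,\natural})=\mZ_{k,\natural}$, and since $\E{\vb_{k,j}\vb_{k,j}^\tranH}=\mI_s$ under Assumption \ref{assumption 1}, we have $\E{\calA_k^\ast\calA_k}=\calI$ on $\C^{s\times n}$. Writing $\calA_k^\ast\calA_k(\mX)=\sum_{j=0}^{n-1}\vb_{k,j}\vb_{k,j}^\tranH\mX\ve_j\ve_j^\tran$ and $\mX=\sum_j\mX\ve_j\ve_j^\tran$, I can express
\begin{align*}
\calG\calA_k^\ast\calA_k\calG^\ast(\mZ_{k,\natural})-\mZ_{k,\natural}=\sum_{j=0}^{n-1}\mY_j,\quad \mY_j:=\calG\bigl((\vb_{k,j}\vb_{k,j}^\tranH-\mI_s)\calG^\ast(\mZ_{k,\natural})\ve_j\ve_j^\tran\bigr),
\end{align*}
where the $\mY_j$ are independent, mean-zero $sn_1\times n_2$ random matrices.

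Next, I would estimate $\opnorm{\mY_j}$. Using that $\calG$ maps rank-$1$ matrices of the form $\vu\ve_j^\tran$ into objects of operator norm at most $\twonorm{\vu}/\sqrt{w_j}$ times a block Hankel indicator (this is the content of the structural bound behind Lemma \ref{lemma v6}), combined with $\twonorm{\vb_{k,j}\vb_{k,j}^\tranH-\mI_s}\le\mu_0 s$ via Assumption \ref{assumption 1}, I would obtain a pointwise bound
\begin{align*}
\opnorm{\mY_j}\lesssim \mu_0 s\cdot \max_j\frac{\twonorm{\calG^\ast(\mZ_{k,\natural})\ve_j}}{\sqrt{w_j}}\lesssim \sqrt{\frac{\mu_0^2 s^2 \mu_1 r}{n}}\,\sigma_1(\mZ_{k,\natural}),
\end{align*}
where the second inequality invokes the $\mu_1$-incoherence bound on the column norms of $\mV_{k,\natural}$ (via Lemma \ref{lemma v6}). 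For the variance proxies, I would compute $\sum_j\E{\mY_j\mY_j^\tranH}$ and $\sum_j\E{\mY_j^\tranH\mY_j}$; in both directions the scalar factor $\E{|\la\vb_{k,j},\vu\ra|^2\vb_{k,j}\vb_{k,j}^\tranH}$ reduces (using $\E{\vb\vb^\tranH}=\mI_s$ and the entrywise bound $|\vb[p]|\le\sqrt{\mu_0}$) to something of order $\mu_0 s\twonorm{\vu}^2\mI$, so that the variance is controlled by
\begin{align*}
\max\Bigl\{\opnorm{\sum_j\E{\mY_j\mY_j^\tranH}},\opnorm{\sum_j\E{\mY_j^\tranH\mY_j}}\Bigr\}\lesssim \frac{\mu_0 s\mu_1 r}{n}\sigma_1^2(\mZ_{k,\natural}),
\end{align*}
again relying on the row-block and column incoherence of $\mU_{k,\natural},\mV_{k,\natural}$.

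Plugging these bounds into the matrix Bernstein inequality with failure probability $(sn)^{-c_1}$ produces the claimed tail
\begin{align*}
\opnorm{\sum_j\mY_j}\lesssim \sqrt{\frac{\mu_0 s\mu_1 r\log(sn)}{n}}\,\sigma_1(\mZ_{k,\natural})+\frac{\mu_0 s\sqrt{\mu_1 r}\log(sn)}{n}\sigma_1(\mZ_{k,\natural}),
\end{align*}
and the second (linear) term is dominated by the first whenever $n\gtrsim\mu_0 s\mu_1 r\log(sn)$, which is implicit in the desired regime.

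The main technical obstacle is the careful bookkeeping of the weighted Hankel structure of $\calG$: the operator norm of $\calG(\vu\ve_j^\tran)$ depends on $w_j$ and on the block arrangement induced by the lift, so both the uniform bound on $\opnorm{\mY_j}$ and the matrix variance require translating the spectral and row-wise norms of $\calG^\ast(\mZ_{k,\natural})$ back into incoherence statements on $\mU_{k,\natural}$ and $\mV_{k,\natural}$. Once this translation is set up (essentially the content of the auxiliary lemma on $\|\calG^\ast(\mZ_{k,\natural})\ve_j\|/\sqrt{w_j}$), the matrix Bernstein step is routine.
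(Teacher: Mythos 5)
The paper never proves this statement: it is imported verbatim from \cite[Lemma VI.3]{mao2022blind}, so there is no internal proof to compare against. Your reconstruction follows the standard (and almost certainly intended) route: since $\calG\calG^\ast(\mZ_{k,\natural})=\mZ_{k,\natural}$ and $\E{\vb_{k,j}\vb_{k,j}^\tranH}=\mI_s$, the error is $\sum_{j}\mY_j$ with $\mY_j=\calG\bigl((\vb_{k,j}\vb_{k,j}^\tranH-\mI_s)\calG^\ast(\mZ_{k,\natural})\ve_j\ve_j^\tran\bigr)$ independent and mean zero, and one applies matrix Bernstein using $\opnorm{\calG(\vv\ve_j^\tran)}=\twonorm{\vv}/\sqrt{w_j}$ together with the weighted-column estimates of Lemma \ref{lemma v6}. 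This is exactly the same machinery the paper itself deploys for the cross term $I_7$ in the proof of Lemma \ref{theorem: initial}, so the architecture of your argument is sound.

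The step you assert but do not justify is the log-free variance proxy $\opnorm{\sum_j\E{\mY_j\mY_j^\tranH}}\lesssim \frac{\mu_0 s\mu_1 r}{n}\sigma_1^2(\mZ_{k,\natural})$. Writing $\vv_j=(\vb_{k,j}\vb_{k,j}^\tranH-\mI_s)\calG^\ast(\mZ_{k,\natural})\ve_j$ and reducing $\E{\vv_j\vv_j^\tranH}\preceq \mu_0 s\,\twonorm{\calG^\ast(\mZ_{k,\natural})\ve_j}^2\mI$, what you must then control is $\sum_j \twonorm{\calG^\ast(\mZ_{k,\natural})\ve_j}^2/w_j$, and the available estimate (first display of Lemma \ref{lemma v6}) is $\lesssim \frac{\mu_1 r\log(sn)}{n}\sigma_1^2(\mZ_{k,\natural})$; the $\log(sn)$ reflects $\sum_j w_j^{-1}\asymp\log n$ and does not disappear by incoherence alone. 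With that variance, Bernstein gives $\opnorm{\sum_j\mY_j}\lesssim\sqrt{\mu_0 s\mu_1 r\log^2(sn)/n}\,\sigma_1(\mZ_{k,\natural})$, i.e., your sketch recovers the lemma only up to an extra $\sqrt{\log(sn)}$; reaching the single $\log(sn)$ stated here requires the finer bookkeeping carried out in \cite{mao2022blind}. A smaller point: your uniform bound should be stated as $\opnorm{\mY_j}\lesssim\mu_0 s\cdot\frac{\mu_1 r}{n}\,\sigma_1(\mZ_{k,\natural})$ (second display of Lemma \ref{lemma v6}); the form $\sqrt{\mu_0^2 s^2\mu_1 r/n}\,\sigma_1(\mZ_{k,\natural})$ is merely a looser upper bound valid when $\mu_1 r\le n$. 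Neither issue matters for how the lemma is used downstream (the paper's sample complexity already carries $\log^2(sn)$), but as written the proposal does not fully establish the stated $\sqrt{\log(sn)}$ dependence.
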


\begin{lemma}\cite[Lemma VI.6]{mao2022blind}
	\label{lemma v6}
Suppose $\mZ_{k,\natural}$ is $\mu_1$-incoherence. Then one has 
\begin{align*}
	\sqrt{\sum_{i=1}^{n} \frac{\twonorm{\calGT(\mZ_{k,\natural}\ve_i)}^2}{w_i}} &\leq c_1 \sqrt{ \frac{\mu_1 r\log(sn)}{n}} \sigma_1(\mZ_{k,\natural}),\\
	\max_{1\leq i\leq n} \frac{\twonorm{\calGT(\mZ_{k,\natural})\ve_i}}{\sqrt{w_i}}& \leq \frac{\mu_1 r}{n} \sigma_1(\mZ_{k,\natural}).
\end{align*}
\end{lemma}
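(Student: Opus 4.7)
\medskip

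\noindent\textbf{Proof proposal for Lemma \ref{lemma v6}.} The plan is to reduce both bounds to elementary statements about the columns of $\mX_{k,\natural}$ via the identity $\calG = \calH\calD^{-1}$, and then use the $\mu_1$-incoherence of $\mZ_{k,\natural}$ together with the block structure of its left/right singular subspaces.

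\emph{Step 1 (reduce $\calG^\ast(\mZ_{k,\natural})\ve_i/\sqrt{w_i}$ to $\vx_{k,i}$).} Since $\calG = \calH\calD^{-1}$ and $\calD$ is diagonal with real positive entries, $\calG^\ast = \calD^{-1}\calH^\ast$. The adjoint $\calH^\ast$ is the sum-along-antidiagonals map, and for $\mZ_{k,\natural} = \calH(\mX_{k,\natural})$ the $i$-th column of $\calH^\ast(\mZ_{k,\natural})$ equals $w_i\vx_{k,i}$ (each of the $w_i$ antidiagonal blocks contributes the same $\vx_{k,i}$). Dividing by $\sqrt{w_i}$ coordinate-wise yields $\calG^\ast(\mZ_{k,\natural})\ve_i = \sqrt{w_i}\,\vx_{k,i}$, so that
\[
\frac{\twonorm{\calG^\ast(\mZ_{k,\natural})\ve_i}}{\sqrt{w_i}} = \twonorm{\vx_{k,i}},\qquad \frac{\twonorm{\calG^\ast(\mZ_{k,\natural})\ve_i}^2}{w_i} = \twonorm{\vx_{k,i}}^2.
\]
Both required inequalities are therefore statements about $\twonorm{\vx_{k,i}}$.

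\emph{Step 2 (max bound via the block factorization).} For each antidiagonal index $i$, I would pick any pair $(j,\ell)$ with $1\leq j\leq n_1$, $1\leq \ell\leq n_2$ and $j+\ell-1 = i$. By the definition of the vectorized Hankel lift, the $(j,\ell)$-th $s\times 1$ block of $\mZ_{k,\natural}$ is exactly $\vx_{k,i}$, i.e.\ $\vx_{k,i} = \mE_j^\tranH \mZ_{k,\natural}\ve_\ell = \mU_{k,\natural,j}\bSigma_{k,\natural}\mV_{k,\natural}^\tranH\ve_\ell$, where $\mE_j\in\C^{sn_1\times s}$ is the selector of the $j$-th block row. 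Bounding spectrally,
\[
\twonorm{\vx_{k,i}} \leq \opnorm{\mU_{k,\natural,j}}\cdot\sigma_1(\mZ_{k,\natural})\cdot \twonorm{\ve_\ell^\tranH \mV_{k,\natural}} \leq \fronorm{\mU_{k,\natural,j}}\cdot\sigma_1(\mZ_{k,\natural})\cdot\twonorm{\ve_\ell^\tranH\mV_{k,\natural}},
\]
and Assumption \ref{assumption 0} gives both factors $\leq\sqrt{\mu_1 r/n}$, yielding $\twonorm{\vx_{k,i}} \leq (\mu_1 r/n)\sigma_1(\mZ_{k,\natural})$, which is the second claim.

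\emph{Step 3 (sum bound).} Writing $w_i\twonorm{\vx_{k,i}}^2 = \sum_{j+\ell-1=i}\twonorm{\mU_{k,\natural,j}\bSigma_{k,\natural}\mV_{k,\natural}^\tranH\ve_\ell}^2$ and substituting into $\sum_i\twonorm{\vx_{k,i}}^2$, I reorder the double sum over $(j,\ell)$ with weights $1/w_{j+\ell-1}$. Setting $u_j := \fronorm{\mU_{k,\natural,j}}$ and $v_\ell := \twonorm{\ve_\ell^\tranH\mV_{k,\natural}}$, I use the bound $\twonorm{\mU_{k,\natural,j}\bSigma_{k,\natural}\mV_{k,\natural}^\tranH\ve_\ell}^2 \leq \sigma_1^2(\mZ_{k,\natural})\,u_j^2 v_\ell^2$, the constraints $\sum_j u_j^2 = \sum_\ell v_\ell^2 = r$, the incoherence bounds $u_j,v_\ell\leq\sqrt{\mu_1 r/n}$, and the explicit form $w_i = \min(i,n_1,n_2,n-i+1)$. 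Splitting the index $i$ into the ``bulk'' range where $w_i \gtrsim n$ and the two boundary ranges where $w_i$ is small (each of length $\lesssim\min(n_1,n_2)$) and applying the two bounds on $\sum_{(j,\ell)\in S_i}u_j^2 v_\ell^2$ appropriate to each range (one using Cauchy--Schwarz against $\sum v_\ell^2 = r$, the other using the pointwise $(\mu_1 r/n)$ control) produces the estimate $\sum_i\twonorm{\vx_{k,i}}^2 \leq C\,\mu_1 r\log(sn)\sigma_1^2(\mZ_{k,\natural})/n$, from which the claimed square-root inequality follows.

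\emph{Main obstacle.} Step 2 is essentially immediate once the block-selector identity is set up, and Step 1 is a straightforward unpacking of $\calG$. The delicate step is Step 3: the natural ``trivial'' bounds give either $r\sigma_1$ (too weak) or $\mu_1 r\sigma_1/\sqrt{n}$ (missing a factor of $\sqrt{r/\log(sn)}$ relative to the claim). Recovering the correct $\sqrt{\mu_1 r\log(sn)/n}$ scaling requires the \emph{combined} use of (i) the harmonic-sum behavior $\sum_i 1/w_i = O(\log n)$ coming from the Hankel weights near the boundary, and (ii) the incoherence bounds on $u_j,v_\ell$; I expect the cleanest way to assemble these is to pass through the identity $\sum_i \twonorm{\vx_{k,i}}^2/w_i \cdot w_i = \fronorm{\mX_{k,\natural}}^2$ and bound $\fronorm{\mX_{k,\natural}}^2$ with a leave-one-out style decomposition along the antidiagonals, or alternatively to invoke the statement directly as given in \cite[Lemma VI.6]{mao2022blind} whose proof is dedicated to exactly this calculation.
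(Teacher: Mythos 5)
First, note what you are being compared against: the paper does not prove this lemma at all — it imports it verbatim as \cite[Lemma VI.6]{mao2022blind} — so your closing suggestion to "invoke the statement directly" is exactly the paper's treatment. Judged as a standalone proof, your Steps 1 and 2 are fine: since $\mZ_{k,\natural}=\calH(\mX_{k,\natural})$ and $\calGT\calG=\calI$, indeed $\calGT(\mZ_{k,\natural})\ve_i=\sqrt{w_i}\,\vx_{k,i}$, and the block identity $\vx_{k,i}=\mU_{k,\natural,j}\bSigma_{k,\natural}\mV_{k,\natural}^\tranH\ve_\ell$ together with Assumption \ref{assumption 0} gives the max bound $\twonorm{\vx_{k,i}}\leq \frac{\mu_1 r}{n}\sigma_1(\mZ_{k,\natural})$. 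The second inequality of the lemma is therefore established by your argument.

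The genuine gap is Step 3, and it is not merely a missing computation: the route you describe cannot produce the stated rate. After your reduction, what you must bound is $\sum_{j,\ell} u_j^2 v_\ell^2 / w_{j+\ell-1}$ (with $u_j=\fronorm{\mU_{k,\natural,j}}$, $v_\ell=\twonorm{\ve_\ell^\tran\mV_{k,\natural}}$), using only $u_j^2,v_\ell^2\leq \mu_1 r/n$, $\sum_j u_j^2=\sum_\ell v_\ell^2=r$, and the harmonic behavior of $1/w_i$. But over that constraint set the quantity can be of order $\mu_1 r^2/n$: take $u_j^2=v_\ell^2=\mu_1 r/n$ on the first $n/\mu_1$ indices, where $w_{j+\ell-1}=j+\ell-1$ is small; then the double sum is $\Theta(\mu_1 r^2/n)$, which exceeds the claimed $c_1^2\mu_1 r\log(sn)/n$ whenever $r\gg\log(sn)$. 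Every assembly of the ingredients you list (Cauchy--Schwarz against $\sum_\ell v_\ell^2=r$ in the bulk, the pointwise $\mu_1 r/n$ control near the boundary, and $\sum_i 1/w_i=O(\log n)$) pays the factor $r$ twice and lands at $\mu_1 r^2\log(sn)/n$ or $\mu_1 r^2/n$ — equivalently, it proves the bound with $\fronorm{\mZ_{k,\natural}}$ in place of $\sigma_1(\mZ_{k,\natural})$ — so the sentence "produces the estimate $\sum_i\twonorm{\vx_{k,i}}^2\leq C\mu_1 r\log(sn)\sigma_1^2/n$" is not substantiated; trading the extra $r$ for a $\log(sn)$ requires structural input beyond the blockwise incoherence numbers (this is precisely what the dedicated proof of Lemma VI.6 in \cite{mao2022blind} supplies). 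As written, your proposal proves the second inequality but not the first, and for the first it ultimately falls back on the same citation the paper uses.
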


\begin{lemma}\cite[Lemma 24]{tong2021accelerating}
	\label{lemma 24 cong}
	Given matrices $\mL_k,\mR_k$, let $\mO_k = \arg\min_{\mO^\tranH\mO=\mO\mO^\tranH=\mI_r} \sqrt{ \fronorm{\mL_k\mO - \mL_{k,\natural}}^2 + \fronorm{\mR_k\mO - \mR_{k,\natural}}^2}$. Then one has
	\begin{align*}
		\fronorm{(\mL_k\mO_k - \mL_{k,\natural})\bSigma_{k,\natural}^{\frac{1}{2}}}^2 + \fronorm{(\mR_k\mO_k - \mR_{k,\natural})\bSigma_{k,\natural}^{\frac{1}{2}}}^2 \leq (\sqrt{2}+1) \fronorm{\mL_k\mR_k^\tranH - \mZ_{k, \natural}}^2.
	\end{align*}
\end{lemma}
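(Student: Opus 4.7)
Since this is a per-$k$ deterministic alignment bound, I would drop the subscript and abbreviate $\tL := \mL\mO_k$, $\tR := \mR\mO_k$, $\mDelta := \tL - \mL_\natural$, $\mUpsilon := \tR - \mR_\natural$, where $\mL_\natural = \mU_\natural\bSigma_\natural^{1/2}$ and $\mR_\natural = \mV_\natural\bSigma_\natural^{1/2}$. The first step is to extract the first-order optimality of $\mO_k$ on the unitary manifold. Minimizing $\fronorm{\mL\mO - \mL_\natural}^2 + \fronorm{\mR\mO - \mR_\natural}^2$ reduces, after dropping the $\mO$-invariant Frobenius norms $\fronorm{\mL\mO}^2 = \fronorm{\mL}^2$ and $\fronorm{\mR\mO}^2 = \fronorm{\mR}^2$, to the orthogonal Procrustes problem $\max_{\mO}\Re\trace(\mO^\tranH\mC)$ with $\mC := \mL^\tranH\mL_\natural + \mR^\tranH\mR_\natural$. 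The closed form $\mO_k = \mU_C\mV_C^\tranH$ (from the SVD $\mC = \mU_C\bSigma_C\mV_C^\tranH$) yields $\mO_k^\tranH\mC$ Hermitian and PSD; transposing and subtracting $\mL_\natural^\tranH\mL_\natural + \mR_\natural^\tranH\mR_\natural = 2\bSigma_\natural$ leaves the pivotal identity that $\mL_\natural^\tranH\mDelta + \mR_\natural^\tranH\mUpsilon$ is Hermitian.

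Next, using $\mL_\natural^\tranH\mL_\natural = \mR_\natural^\tranH\mR_\natural = \bSigma_\natural$, I rewrite $\fronorm{\mDelta\bSigma_\natural^{1/2}}^2 = \fronorm{\mDelta\mR_\natural^\tranH}^2$ and $\fronorm{\mUpsilon\bSigma_\natural^{1/2}}^2 = \fronorm{\mL_\natural\mUpsilon^\tranH}^2$, so the target becomes
\[
\fronorm{\mL_\natural\mUpsilon^\tranH}^2 + \fronorm{\mDelta\mR_\natural^\tranH}^2 \leq (\sqrt{2}+1)\,\fronorm{\mP + \mQ}^2,
\]
where I decompose $\tL\tR^\tranH - \mZ_\natural = \mP + \mQ$ into a linear part $\mP := \mL_\natural\mUpsilon^\tranH + \mDelta\mR_\natural^\tranH$ and a quadratic part $\mQ := \mDelta\mUpsilon^\tranH$. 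Expanding $\fronorm{\mP}^2 = \fronorm{\mL_\natural\mUpsilon^\tranH}^2 + \fronorm{\mDelta\mR_\natural^\tranH}^2 + 2\Re\trace(\mY\mX)$ with $\mX := \mL_\natural^\tranH\mDelta$ and $\mY := \mR_\natural^\tranH\mUpsilon$, I would invoke the Hermitian identity from step one (so that $\mS := \mX + \mY$ is Hermitian) to rewrite the cross term, then close the bound by a one-parameter AM--GM splitting $2|\Re\la\mP,\mQ\ra| \leq \alpha\fronorm{\mP}^2 + \alpha^{-1}\fronorm{\mQ}^2$, tuning $\alpha$ to produce the sharp constant $\sqrt{2}+1 = (\sqrt{2}-1)^{-1}$.

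The main obstacle is sharpening the cross term $\Re\trace(\mY\mX)$. Writing $\mX = \tfrac{1}{2}\mS + \mN$ and $\mY = \tfrac{1}{2}\mS - \mN$ with $\mN := \tfrac{1}{2}(\mX - \mY)$, a direct computation gives $\Re\trace(\mY\mX) = \tfrac{1}{4}\fronorm{\mS}^2 - \Re\trace(\mN^2)$; the Hermitian constraint only guarantees the first summand is nonnegative, while the second is a genuine deficit that must be absorbed by the quadratic contributions from $\fronorm{\mQ}^2$ and the $\mP$--$\mQ$ coupling. Tracking this trade-off precisely and selecting the AM--GM parameter $\alpha$ optimally is what produces the tight constant $\sqrt{2}+1$; a coarser Cauchy--Schwarz bound would leak a much worse constant. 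All remaining manipulations (collecting terms and invoking Cauchy--Schwarz where routine) follow once the variational structure from step one is in place.
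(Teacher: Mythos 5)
You are proving a statement the paper itself does not prove: it is imported by citation, and in the cited source the error is measured after an optimal \emph{invertible} alignment $Q$ acting as $(\mL_k Q,\ \mR_k Q^{-\tranH})$, which leaves the product $\mL_k\mR_k^\tranH$ unchanged and, crucially, quotients out any imbalance between the two factors. The restated version you are attacking, with a single unitary $\mO_k$ chosen by the \emph{unweighted} Procrustes problem, is actually false, so the step you leave open cannot be closed by any choice of the AM--GM parameter. Concrete counterexample: take $r=1$, $\mZ_{k,\natural}=1$, $\mL_{k,\natural}=\mR_{k,\natural}=1$ (so $\bSigma_{k,\natural}=1$), and $\mL_k=c$, $\mR_k=1/c$ with real $c>0$, $c\neq1$. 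Then $\mL_k\mR_k^\tranH=\mZ_{k,\natural}$, so the right-hand side is zero; the Procrustes minimizer is $\mO_k=1$; and the left-hand side is $(c-1)^2+(1/c-1)^2>0$. The same construction with $\mL_k=\mL_{k,\natural}\diag(c,1/c)$, $\mR_k=\mR_{k,\natural}\diag(1/c,c)$ works for any $r\ge 2$.

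Your own computation pinpoints where this bites. The Hermitian identity obtained from Procrustes optimality constrains only $\mS=\mL_{k,\natural}^\tranH\mDelta+\mR_{k,\natural}^\tranH\mUpsilon$, while the deficit $-\Re\trace(\mN^2)$ with $\mN=\tfrac12\lb\mL_{k,\natural}^\tranH\mDelta-\mR_{k,\natural}^\tranH\mUpsilon\rb$ is precisely a measure of factor imbalance, and $\fronorm{\mL_k\mR_k^\tranH-\mZ_{k,\natural}}$ is blind to imbalance: in the example above $\mN\neq\bzero$ while the product error vanishes, so there is nothing on the right-hand side into which the deficit could be ``absorbed.'' (Even ignoring this, the splitting $2|\Re\la\mP,\mQ\ra|\le\alpha\fronorm{\mP}^2+\alpha^{-1}\fronorm{\mQ}^2$ yields a useful lower bound on $\fronorm{\mP+\mQ}^2$ only if $\fronorm{\mQ}$ can be compared with $\fronorm{\mP}$, and the lemma carries no proximity or balance hypothesis that would permit this; the cited result is unconditional.) The missing idea is exactly the invertible, balancing alignment: the optimal $Q$ in the cited source satisfies a first-order condition that balances the two factors in the $\bSigma_{k,\natural}$-weighted metric, eliminating the $\mN$-type component, and that is what produces the constant $\sqrt{2}+1$. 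To repair your argument you would either have to prove the statement with $\inf_{Q}\lb\fronorm{(\mL_kQ-\mL_{k,\natural})\bSigma_{k,\natural}^{1/2}}^2+\fronorm{(\mR_kQ^{-\tranH}-\mR_{k,\natural})\bSigma_{k,\natural}^{1/2}}^2\rb$ over invertible $Q$ on the left, as in the original lemma, or add an explicit balancedness assumption on $(\mL_k,\mR_k)$; as stated, the unitary-alignment inequality does not hold.
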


\begin{lemma}[Lemma 25 in Cong]
	\label{lemma 25 cong}
	For any matrix $\mL_k, \mR_k$, denote $\mDelta_k = \mL_k - \mL_{k,\natural}$ and $\mUpsilon_{k} = \mR_k - \mR_{k,\natural}$. Suppose $\max \lb \opnorm{\mDelta_k\bSigma_{k,\natural}^{-\frac{1}{2}}}, \opnorm{\mUpsilon_{k}\bSigma_{k,\natural}^{-\frac{1}{2}}}\rb<1$, then one has
	\begin{align*}
		\opnorm{\mL_k(\mL_k^\tranH\mL_k)^{-1}\bSigma_{k,\natural}^{\frac{1}{2}}} &\leq \frac{1}{1 -\opnorm{\mDelta_k\bSigma_{k,\natural}^{-\frac{1}{2}}} },\\
		\opnorm{\mR_k(\mR_k^\tranH\mR_k)^{-1}\bSigma_{k,\natural}^{\frac{1}{2}}}&\leq \frac{1}{1 -\opnorm{\mUpsilon_k\bSigma_{k,\natural}^{-\frac{1}{2}}} }.
	\end{align*}
\end{lemma}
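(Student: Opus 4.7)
The plan is to reduce the claimed operator-norm bound to a lower bound on the smallest singular value of the reweighted factor $\mL_k\bSigma_{k,\natural}^{-1/2}$, and then control that smallest singular value by a Weyl-type perturbation argument using the decomposition $\mL_k = \mL_{k,\natural} + \mDelta_k$. The two inequalities in the statement are completely symmetric, so I would prove only the one for $\mL_k$ and obtain the one for $\mR_k$ verbatim with $\mR_{k,\natural} = \mV_{k,\natural}\bSigma_{k,\natural}^{1/2}$ in place of $\mL_{k,\natural} = \mU_{k,\natural}\bSigma_{k,\natural}^{1/2}$.

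First, assuming momentarily that $\mL_k$ has full column rank, the matrix $\mL_k(\mL_k^\tranH\mL_k)^{-1}$ is the Moore--Penrose pseudoinverse of $\mL_k^\tranH$, and a direct computation gives
\begin{align*}
\opnorm{\mL_k(\mL_k^\tranH\mL_k)^{-1}\bSigma_{k,\natural}^{1/2}}^2
= \opnorm{\bSigma_{k,\natural}^{1/2}(\mL_k^\tranH\mL_k)^{-1}\bSigma_{k,\natural}^{1/2}}
= \frac{1}{\sigma_{\min}^2\!\left(\mL_k\bSigma_{k,\natural}^{-1/2}\right)}.
\end{align*}
Hence the claim reduces to proving $\sigma_{\min}(\mL_k\bSigma_{k,\natural}^{-1/2}) \geq 1 - \opnorm{\mDelta_k\bSigma_{k,\natural}^{-1/2}}$, which is strictly positive under the hypothesis and therefore simultaneously justifies the full-rank assumption used above.

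To establish this lower bound I would write
\begin{align*}
\mL_k\bSigma_{k,\natural}^{-1/2} = \mL_{k,\natural}\bSigma_{k,\natural}^{-1/2} + \mDelta_k\bSigma_{k,\natural}^{-1/2} = \mU_{k,\natural} + \mDelta_k\bSigma_{k,\natural}^{-1/2},
\end{align*}
and then invoke Weyl's inequality for singular values, namely $|\sigma_i(\mA+\mB) - \sigma_i(\mA)| \leq \opnorm{\mB}$. Since $\mU_{k,\natural}$ has orthonormal columns one has $\sigma_{\min}(\mU_{k,\natural}) = 1$, whence
\begin{align*}
\sigma_{\min}\!\left(\mL_k\bSigma_{k,\natural}^{-1/2}\right) \geq 1 - \opnorm{\mDelta_k\bSigma_{k,\natural}^{-1/2}}.
\end{align*}
Combining the two displays yields the desired inequality.

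There is no serious obstacle here: the proof is a three-line computation plus one application of Weyl. The only subtle point to be careful about is the order of the argument, namely establishing the singular-value lower bound \emph{first}, which simultaneously (i) makes $\mL_k^\tranH\mL_k$ invertible so that $(\mL_k^\tranH\mL_k)^{-1}$ in the statement is well-defined, and (ii) supplies the reciprocal needed to bound the operator norm. Everything else is elementary linear algebra.
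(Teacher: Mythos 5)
Your proof is correct, and it is essentially the standard argument for this lemma (which the paper itself does not prove but imports from the cited reference): reduce the bound to $1/\sigma_{\min}(\mL_k\bSigma_{k,\natural}^{-\frac{1}{2}})$ via the pseudoinverse identity, write $\mL_k\bSigma_{k,\natural}^{-\frac{1}{2}}=\mU_{k,\natural}+\mDelta_k\bSigma_{k,\natural}^{-\frac{1}{2}}$, and apply Weyl's inequality, with the resulting positivity of the smallest singular value also justifying invertibility of $\mL_k^\tranH\mL_k$. No gaps; the symmetric case for $\mR_k$ follows verbatim as you say.
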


\bibliographystyle{unsrt}
\bibliography{refs}
\end{document}